%% file: main.tex
\documentclass[11pt]{article}
\usepackage{fullpage}
\usepackage{graphicx}
\usepackage{tikzsymbols}
\usepackage{amsmath}
\usepackage{amsthm}
\usepackage{amssymb}
\definecolor{Darkblue}{rgb}{0,0,0.4}
\definecolor{Brown}{cmyk}{0,0.61,1.,0.60}
\definecolor{Purple}{cmyk}{0.45,0.86,0,0}
\definecolor{Darkgreen}{rgb}{0.133,0.543,0.133}
\usepackage[colorlinks,linkcolor=Darkblue,filecolor=blue,citecolor=blue,urlcolor=Darkblue,pagebackref]{hyperref}
\usepackage[nameinlink]{cleveref}
\usepackage{aliascnt}
\usepackage{thm-restate}
\usepackage{appendix}
\usepackage{paralist}
\usepackage{multicol}
\usepackage{lineno}

\theoremstyle{plain}
\newtheorem{theorem}{Theorem}[section]
\newtheorem{claim}{Claim}

\newtheorem{lemma}[theorem]{Lemma}
\newtheorem{observation}[theorem]{Observation}

\newtheorem{definition}{Definition}

\newtheorem{remark}{Remark}

\input{mymacros-disc}

\title{How to Walk a Dog in Parallel: On Parallel Computation of the Discrete Fr\'echet Distance} 

\author{%
  Leonid Barenboim\thanks{%
    The Open University of Israel.
    Email: \texttt{leonidb@openu.ac.il}.}
  \and
  Arnold Filtser\thanks{%
    Bar-Ilan University.
    Email: \texttt{arnold.filtser@biu.ac.il}.}
  \and
  Omrit Filtser\thanks{%
    The Open University of Israel.
    Email: \texttt{omrit.filtser@gmail.com}.}
  \and
  Orr Fischer\thanks{%
    Bar-Ilan University.
    Email: \texttt{orr.fischer@biu.ac.il}.
}}

\begin{document}

\maketitle

\begin{abstract}
The Fr\'echet distance (also known as the dog-leash distance) is a well-studied and prevalent distance measure for polygonal curves. In this paper we present parallel algorithms for computing the discrete \frechet\ distance between two curves, in both the PRAM and MPC models. 

For two polygonal curves of length $n$, we show that the distance between them can be computed in the PRAM model in $O(\log n/ \sigma)$ rounds and $\tilde{O}(n^{2+\sigma})$ total work for any $0<\sigma<1$. In the sublinear MPC model with $n^\rho$ space per machine, it can be computed in $O(\log n / \rho^2)$ rounds and $\tilde{O}(n^{2+\rho})$ total space.
For $c$-packed curves, we match our bounds to the sequential algorithms, showing that a $(1+\eps)$-approximation of the distance can be computed in the PRAM model in $O(\log{n}+1/\sigma^3)$ rounds and $\tilde{O}(n^{1+\sigma})$ total work, and in the sublinear MPC model in $O(1 /\rho^3)$ rounds and $\tilde{O}(n^{1+\rho})$ total space, for any constant $\eps>0$. 
As part of our algorithms for $c$-packed curves, we also present parallel algorithms for computing curve simplifications, which may be of independent interest.
\end{abstract}

	\vfill
	\begin{multicols}{2}
		{
   \small 
			\setcounter{secnumdepth}{5}
			\setcounter{tocdepth}{2} \tableofcontents
		}
	\end{multicols}

    \thispagestyle{empty}
	\newpage
	\pagenumbering{arabic}

\thispagestyle{empty}
\clearpage
\setcounter{page}{1}

\section{Introduction}
The \emph{(discrete) \frechet\ distance} is a prevalent method for comparing curves. Polygonal curves are commonly employed to model data across diverse domains, including cartography, surgical planning, epidemiology, image processing, biology, and more. This widespread use has created a strong demand for efficient computation of distance measures that align with our intuitive understanding of curve similarity. Consequently, the \frechet\ distance (and its discrete variant) has attracted significant attention over the past three decades, following the seminal works of Eiter and Mannila~\cite{EM94} and Alt and Godau~\cite{AG95}.

The (continuous) \frechet\ distance is often described (informally) by the following analogy. Consider a person and a dog connected by a leash, each walking along a curve from its starting point to its end point. Both can control their speed but they are not allowed to backtrack. The \frechet\ distance between the two curves is the minimum length of a leash that is sufficient for traversing both curves in this manner.
The \emph{discrete \frechet\ distance} (DFD for short) is a popular variant in which the two curves are replaced by two sequences of points, and the person and dog are replaced by two frogs, hopping from one point to the next along their respective sequences. The discrete \frechet\ distance is defined as the smallest maximum distance between the frogs that can be achieved in such a joint sequence of hops. It is considered a good approximation of the continuous distance (when the sequences of points are dense enough samples), and is somewhat easier to compute. 

For two polygonal curves, each of length $n$, Alt and Godau~\cite{AG95} showed that the \frechet\ distance can be computed in $O(n^2\log n)$ time. Eiter and Mannila~\cite{EM94} showed that the discrete \frechet\ distance can be computed in $O(n^2)$ time, using a simple dynamic programming algorithm. However, in applications where input curves are extremely large, the quadratic running time is considered infeasible. For many years it has been an open problem whether it is possible to compute the continuous or discrete \frechet\ distance in subquadratic time. Some logarithmic-factor improvements were shown~\cite{BBMM14,AAKS14}, until a sequence of papers \cite{Bringmann14, BM16, BOS19} showed that strongly subquadratic algorithms are unlikely to exist, even in 1D and even if the solution may be approximated up to a factor of 3. A sequence of papers considered approximation algorithms for both discrete and continuous \frechet\ distance~\cite{BM16,CR18,CF21,HKOS22}, and only very recently Cheng, Huang, and Zhang~\cite{CHZ25} managed to achieve a constant factor approximation algorithm in strongly subquadratic time.
\old{Bringmann~\cite{Bringmann14}, and later Bringmann and Mulzer~\cite{BM16}, presented a SETH-based conditional lower bound implying that strongly subquadratic algorithms for the (discrete and continuous) \frechet\ distance are unlikely to exist, even in 1D and even if the solution may be approximated up to a factor of 1.399. Buchin, Ophelders, and Speckmann~\cite{BOS19} strengthen this lower bound to a $3$-approximation.

As a result, approximating the \frechet\ distance in near-linear time became an even more intriguing and fundamental question that fascinated many researchers in theory and practice. There was a sequence of papers considering approximation algorithms for both discrete and continuous \frechet\ distance~\cite{BM16,CR18,CF21,HKOS22}. However, no constant approximation in strongly subquadratic time was known until a very recent breakthrough by Cheng, Huang, and Zhang~\cite{CHZ25}, who presented a randomized algorithm in $O(n^{1.99}\log(n/\eps))$ time that computes a $(7+\eps)$-approximation for both the discrete and continuous distance.}

Given the apparent difficulty of achieving an efficient constant factor approximation algorithm for the \frechet\ distance between two arbitrary polygonal curves, a natural direction is to develop algorithms for realistic inputs.
Several definitions were suggested to capture and model realistic curves, which include the widely studied $c$-packed curves. This model was introduced by Driemel, Har-Peled and Wenk~\cite{DHW12}, who also presented a $(1+\eps)$-approximation algorithm that runs in $O(cn/\eps + cn \log n)$ time for computing the \frechet\ distance between two $c$-packed curves. Bringmann and Künnemann~\cite{BK17} improved the running time to $O(\frac{cn}{\sqrt{\eps}}\log^2(1/\eps)+cn\log n)$. The discrete \frechet\ distance between two $c$-packed curves was considered in~\cite{HRW24}.
\old{Bringmann~\cite{Bringmann14} showed that, assuming the Strong Exponential Time Hypothesis, there is no $(1+\eps)$-approximation in time $O((cn)^{1-\alpha})$ for a sufficiently small constant $\eps>0$ and any $\alpha>0$. Moreover, in any dimension $d \ge 5$ there is no $(1+\eps)$-approximation in time $O((cn/\sqrt{\eps})^{1-\alpha})$ for any $\alpha>0$.}

Another way to speed-up computation is by using parallel computing with multiple processors. Nowadays, parallel hardware architectures often employ a large number of GPUs for performing massively parallel computations. Such architectures consist of numerous servers, each of which contains several GPUs, which interconnect via high-bandwidth channels. 
The practical analysis of trajectory data on such hardware architectures attracted the attention of researchers and engineers recently. Solutions for \frechet-related problems that significantly benefit from GPU parallelism were obtained in \cite{GV15,HLZL20, MBLG23, MMR19}. While the problem variants are different from those addressed in our current paper, and are not analyzed in our models, they demonstrate the practical importance of massively parallel \frechet\ distance computations. The new results obtained in our paper provide a theoretical foundation, and extend the possibilities of implementing \frechet\ distance algorithms in a parallel-efficient way.

From a theoretical standpoint, very little is known about parallel computation of \frechet\ distance. In their seminal paper, Alt and Godau~\cite{AG95} remark that their divide-and-conquer strategy for closed curves can be adapted to yield a parallel algorithm. However, the details of this reduction appear only in Godau’s Master's thesis~\cite{Godau1991}, which is unfortunately not publicly available.\footnote{We thank Helmut Alt as well as the FU Berlin library staff for their effort in searching for a copy of \cite{Godau1991}.} To the best of our knowledge, no thorough investigation of parallel algorithms for \frechet\ distance has been undertaken to date. In this paper, we initiate a formal study of this important problem. 

We focus on two extensively studied parallel models: the Parallel Random Access Machine (PRAM) model, and the Massively Parallel Computation (MPC) model. For a formal description of these models, see \Cref{sec:prelims}.

\subsection{Results and technical ideas} In this work we focus on parallel algorithms for computing the discrete \frechet\ distance between two polygonal curves, each of length $n$. See Tables~\ref{tbl:results_PRAM},\ref{tbl:results_MPC} for a summary of our results.

\begin{table}[h]
\centering
\bgroup
\def\arraystretch{1.5}
\begin{tabular}{lllllll}
Assumption & Approx.  & Rounds   & Total Work & Remarks &Reference \\ \hline
None & Exact    & $O(n)$  & $ O(n^2)$ & Trivial  & \Cref{obs:trivial_algs}     \\ \hline
None & Exact    & $O(\log{n})$  & $\Tilde{O}(n^6)$ & Trivial  & \Cref{obs:trivial_algs}     \\ \hline
None & Exact    & $O(1)$  & $2^{O(n)}$ & Trivial  & \Cref{obs:trivial_algs}     \\ \hline
None & Exact    & $O(\log{n}/\sigma)$  & $\Tilde{O}(n^{2+\sigma})$ & $\sigma > 0$ (may depend on $n$) & \Cref{thm:exact-dfd}     \\ \hline
$c$-packed & $(1+\eps)$    & $O(\log{n}+1/\sigma^3)$  & $\Tilde{O}(n^{1+\sigma})$ & $\Theta(\frac{\log{c}}{\log{n}}) \leq \sigma < 1$ & \Cref{thm:main-c-packed}
\end{tabular}
\egroup 
\caption{New results in CRCW PRAM}
\label{tbl:results_PRAM}
\end{table}

\begin{table}[h]
\centering
\bgroup
\def\arraystretch{1.5}
\begin{tabular}{llllll}
Assumption & Approx.  & Rounds   & Total Space & Reference \\ \hline
None & Exact    & $O(\log{n}/\rho^2)$  & $O(n^{2+\rho})$ & \Cref{thm:exact-dfd}     \\ \hline
$c$-packed & $(1+\eps)$    & $O(1/\rho^3)$  & $O(n^{1+\rho})$ & \Cref{thm:main-c-packed}
\end{tabular}
\egroup 
\caption{New results in MPC with local space $n^{\rho}$}
\label{tbl:results_MPC}
\end{table}

In all our algorithms, we first design an algorithm for a decision version, where we require the algorithm to determine whether the DFD is larger than $\delta$ (or approximately larger in non-exact variants), and then show how to search for the DFD value given the decision algorithm. To solve the decision versions, we adopt a dynamic programming view of the problem, where we have an $n \times n$ binary matrix $M$, and our goal is to determine whether we have a directed path of 1-entries from $M[1,1]$ to $M[n,n]$, going only rightwards and upwards.

To serve as baseline, we describe three very simple algorithms for this problem (see \Cref{subsec:trivial}).
These algorithms are based on a sequential solution, the best known directed graph reachability solution, and a brute-force solution, respectively. We summarize these results as follows.

\begin{observation}\label{obs:trivial_algs}
    Given two curves of length $n$, their DFD can be computed in the PRAM model in (a) $\Tilde{O}(n)$ rounds and $O(n^2)$ total work, (b) $O(\log{n})$ rounds and $\Tilde{O}(n^6)$ total work, and (c) $O(1)$ rounds and $2^{O(n)}$ total work.
\end{observation}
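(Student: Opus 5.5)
The plan is to reduce computing the DFD to $O(\log n)$ calls of a decision procedure, and then give three decision procedures matching (a), (b) and (c).

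\emph{Reduction to the decision problem.} Let $P=(p_1,\dots,p_n)$ and $Q=(q_1,\dots,q_n)$. The DFD is one of the $n^2$ values $\|p_i-q_j\|$. We compute all of them in one round with $O(n^2)$ work, sort them in $O(\log n)$ rounds and $O(n^2\log n)$ work using a standard parallel sort (e.g.\ Cole's), and binary search over the sorted list. This takes $O(\log n)$ invocations of the decision procedure. For (c) we avoid the binary search: we run all $n^2$ decisions simultaneously in $O(1)$ rounds and take the minimal feasible $\delta$ by the standard $O(1)$-round CRCW minimum computation. For a fixed $\delta$, define the $n\times n$ binary matrix $M$ with $M[i,j]=1$ iff $\|p_i-q_j\|\le\delta$. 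Then $\mathrm{DFD}\le\delta$ iff there is a monotone path of $1$-entries from $M[1,1]$ to $M[n,n]$, where each step increases $i$, $j$, or both by one. Building $M$ costs $O(1)$ rounds and $O(n^2)$ work.

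\emph{(a) Anti-diagonal dynamic programming.} Reachability of cell $(i,j)$ depends only on cells $(i-1,j)$, $(i,j-1)$ and $(i-1,j-1)$. We therefore process the $2n-1$ anti-diagonals $i+j=k$ in order, filling each one in $O(1)$ rounds with one processor per cell. Each decision takes $O(n)$ rounds and $O(n^2)$ work. Combined with the binary search this gives $O(n\log n)=\tilde O(n)$ rounds and $\tilde O(n^2)$ work. To get exactly $O(n^2)$ work one can run the DP directly on distances: $D[i,j]=\max(\|p_i-q_j\|,\min(D[i-1,j],D[i,j-1],D[i-1,j-1]))$, again along anti-diagonals. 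This yields the DFD itself in $O(n)$ rounds and $O(n^2)$ work, with no search at all.

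\emph{(b) Reachability by matrix squaring.} View $M$ as a DAG on $N=n^2$ vertices and compute its transitive closure by repeated Boolean squaring of the $N\times N$ adjacency matrix (with self-loops), $O(\log N)=O(\log n)$ times. On a CRCW PRAM one Boolean product takes $O(1)$ rounds with $N^3=n^6$ processors: each entry is an OR of $N$ ANDs, and the OR is computed by concurrent writes. Alternatively, it takes $O(\log n)$ rounds with $O(n^6)$ work without concurrent writes. Either way a decision costs $O(\log n)$ rounds and $\tilde O(n^6)$ work. Naively adding the binary search gives $O(\log^2 n)$ rounds, so instead we run all $n^2$ candidate thresholds in parallel and take the smallest feasible one, for $\tilde O(n^8)$ work. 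This is the step I expect to be the main obstacle. The cleaner fix is to use a $(\min,\max)$ "bottleneck" matrix product on the weighted grid DAG, where each edge into $(i,j)$ has weight $\|p_i-q_j\|$. Such a product is computable in $O(1)$ rounds with $N^3$ processors by the constant-time CRCW minimum over $N$ values. Squaring $O(\log n)$ times then yields the bottleneck value from $(1,1)$ to $(n,n)$, which equals the DFD after also taking the max with $\|p_1-q_1\|$. This gives $O(\log n)$ rounds and $\tilde O(n^6)$ work.

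\emph{(c) Brute force over paths.} Every monotone path from $(1,1)$ to $(n,n)$ has at most $2n-1$ cells and is determined by a sequence of at most $2n-2$ steps from $\{\rightarrow,\uparrow,\nearrow\}$. There are therefore at most $3^{2n}=2^{O(n)}$ candidate paths. We assign a processor group to each path. The group checks that the path is valid (ends at $(n,n)$ and stays in range) and computes the maximum of $\|p_i-q_j\|$ along it; both steps take $O(1)$ rounds with $O(n^2)$ work using CRCW maximum. A final CRCW minimum over all $2^{O(n)}$ valid paths returns the DFD in $O(1)$ rounds and $2^{O(n)}$ total work.
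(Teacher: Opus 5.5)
Parts (a) and (c) are correct. They use the paper's constructions, the anti-diagonal sweep and the enumeration of all $2^{O(n)}$ monotone paths. The difference is that you run them on the weights $\|p_i-q_j\|$ instead of on $M_\delta$, so they output $\dfd$ itself within the stated bounds. The paper's proofs of all three parts only decide whether $\dfd\le\delta$ and never account for the search. Your decision procedure for (b) is the paper's transitive-closure argument: Boolean squaring of the adjacency matrix of the $n^2$-vertex grid DAG, with ORs done by concurrent writes. One slip there: your aside that the version without concurrent writes also costs $O(\log n)$ rounds per decision is off by a $\log n$ factor. Each of the $O(\log n)$ products would then take $O(\log n)$ rounds.

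The genuine gap is the step meant to remove the search overhead in (b). You claim a $(\min,\max)$ product takes $O(1)$ rounds with $N^3$ processors ``by the constant-time CRCW minimum over $N$ values''. The standard constant-time minimum uses $\Theta(N^2)$ processors, one per pair. With $N$ processors per entry, or even $N\,\mathrm{polylog}\,N$, any comparison-based minimum needs $\Omega(\log\log N)$ rounds. So as written, your argument gives either $n^8$ work or $O(\log n\log\log n)$ rounds, and neither is (b).

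You can repair this by leaving the comparison model.
\begin{enumerate}
\item Sort the $n^2$ distances once, in $O(\log n)$ rounds. Replace every weight by its rank, with $N+1$ standing for $\infty$. Each product entry is now the minimum of $N$ integers in $[1,N+1]$.
\item For each entry, every term writes a $1$ into a fresh zeroed bit array of length $N+1$.
\item OR each of the $\sqrt{N}$ blocks of length $\sqrt{N}$ using concurrent writes.
\item Locate the leftmost nonempty block by comparing all pairs of blocks.
\item Locate the leftmost $1$ inside that block the same way.
\end{enumerate}
Steps 3 to 5 find the leftmost $1$ in $O(1)$ rounds with $O(N)$ processors per entry. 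A product then costs $O(1)$ rounds and $O(N^3)=O(n^6)$ work. Hence (b) holds with $O(\log n)$ rounds and $O(n^6\log n)$ work. With this fix your proof also covers the optimization claim, which the paper's own argument does not address.
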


Our first non-trivial result shows how to compute the exact DFD of two given curves in both the PRAM and MPC models (\Cref{sec:discrete-frechet}). Our algorithm is comprised of $O(\log{n})$ iterations, where we gradually compute the reachability information on this matrix: we introduce a compact data structure that encodes reachability information of a ``strip'' of the matrix (a sequence of consecutive columns), which in particular is sufficient to deduce all reachability relations between every two cells in the leftmost and rightmost columns of the strip respectively. We then show how given the reachability information of two consecutive strips of dimension $n \times 2^{i-1}$, we can compute the reachability information of the merged strip (see \Cref{fig:crossing_paths} for an illustration of this process), which after the last iteration allows us to determine whether $M[1,1]$ can reach $M[n,n]$. 
Finally, we search for the DFD solution using a multi-pivot search on all pairwise distances of the two curves. 

\omritin{TODO: reviewer3 says: Theorem 1: the definition of $\sigma$ is confusing}
\omritin{TODO update the theorem}
\begin{restatable}{theorem}{thmExactDFD}\label{thm:exact-dfd}
    Given two curves of length $n$, their DFD can be computed in the PRAM model in $O(\log{n}/\sigma)$ rounds and $\Tilde{O}(n^{2+\sigma})$ total work for any $\sigma > 0$. In the MPC model, the DFD can be computed in $O(\log{n}/\rho^2)$ rounds and $O(n^{2+\rho})$ total space for any constant $\rho \in (0,1)$.
\end{restatable}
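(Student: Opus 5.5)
We first solve the decision problem: given $\delta$, decide whether the DFD is at most $\delta$. Build the $n\times n$ binary matrix $M$ with $M[i,j]=1$ iff $\|p_i-q_j\|\le\delta$. This takes one round and $O(n^2)$ work. The question is whether a monotone (right/up) path of $1$-entries leads from $M[1,1]$ to $M[n,n]$.

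For a strip of consecutive columns $[a,b]$, we encode reachability from left-column cells to right-column cells. The key structural fact is a non-crossing property of monotone paths: if row $i$ of column $a$ reaches rows $\ge \ell_i$ of column $b$, then the reachable set is an interval $[\ell_i,h_i]$ intersected with the $1$-cells, restricted to a suitable set. Moreover, $\ell_i$ and $h_i$ are monotone non-decreasing in $i$ over reachable-source rows. I would verify this by the standard crossing argument: two crossing monotone paths can be exchanged at the crossing point. It implies that in fact every row in $[\ell_i,h_i]$ whose cell in column $b$ is reachable from some source $\le i$ is also reachable from $i$; I would make this precise by defining $h_i$ as the largest reachable row and $\ell_i$ as the smallest. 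Hence a strip summary is $O(n)$ numbers. A single column is trivial: from row $i$ one reaches the maximal run of $1$s upward.

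To merge two adjacent strips $S_1=[a,m]$ and $S_2=[m,b]$, one needs $\ell_i=\min$ over $j\in[\ell^1_i,h^1_i]$ reachable of $\ell^2_j$, and similarly a max for $h$. These are range-min/range-max queries over arrays of length $n$. The catch is that the reachable set inside $[\ell^1_i,h^1_i]$ is not all rows. I would fix this by storing, for column $m$, which rows are reachable from any left row; by monotonicity, min/max over the interval restricted to such rows still gives the answer. With $n^{1+\sigma}$ work per merge, a range-min structure with $O(1/\sigma)$ depth can be built: use a $n^\sigma$-ary tree of blocks, where each level computes block minima in $O(1)$ CRCW rounds using $n^{2\sigma}$ work per block, or using prefix-minima within blocks. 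Queries then take $O(1/\sigma)$ rounds. The doubling gives $\log n$ merge levels, with $n/2^k$ merges of $O(n^{1+\sigma})$ work each, for total $O(n^{2+\sigma})$ work and $O(\log n/\sigma)$ rounds.

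For the optimization, the DFD is one of the $n^2$ pairwise distances. I would do a multi-pivot search: pick $n^{\sigma}$ pivots among the remaining candidates, running that many decision instances in parallel. Each search round then shrinks the candidate set by a factor $n^\sigma$, so $O(1/\sigma)$ rounds of search suffice. This costs $O(\log n/\sigma^2)$ rounds, which is too many. To match $O(\log n/\sigma)$, I would rebalance by using $n^{\sigma/2}$ pivots with $n^{\sigma/2}$ merge work, rescaling $\sigma$ so that the per-decision depth becomes $O(\log n)$. I expect this to be the main obstacle: the range-minimum depth and the pivot count must be balanced. Concretely, I would choose pivots $n^{\sigma}$ but make merges use $O(\log n)$-depth standard PRAM range-min with near-linear work, giving rounds $O(\log n)\cdot O(1/\sigma)$ and work $\tilde O(n^{2+\sigma})$. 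Selection of pivots, i.e. sorting the $n^2$ distances, takes $O(\log n)$ rounds.

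For MPC with space $n^\rho$, the range-min and sorting primitives take $O(1/\rho)$ rounds. The merge step thus takes $O(1/\rho)$ rounds per level, giving $O(\log n/\rho)$ per decision. The $O(1/\rho)$ pivot rounds of search with $n^\rho$ parallel instances then give $O(\log n/\rho^2)$ rounds and $O(n^{2+\rho})$ total space.
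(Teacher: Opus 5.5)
Your overall architecture matches the paper's: strip summaries merged pairwise over $\log n$ levels, then an $n^\sigma$-pivot search over the $n^2$ pairwise distances. Your MPC accounting is also fine. The gap is the PRAM round bound.

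As you note, an $O(1/\sigma)$-depth, $n^{1+\sigma}$-work range-min in every merge makes one decision cost $O(\log n/\sigma)$ rounds. The $O(1/\sigma)$ levels of pivot search then give $O(\log n/\sigma^2)$. Neither repair fixes this:
\begin{itemize}
\item Rescaling $\sigma$ by a constant (using $n^{\sigma/2}$ pivots and $n^{\sigma/2}$ extra merge work) leaves the bound at $\Theta(\log n/\sigma^2)$.
\item An $O(\log n)$-depth range-min per merge gives $O(\log n)$ rounds per merge. Since there are $\log n$ merge levels, that is $O(\log^2 n)$ rounds per decision and $O(\log^2 n/\sigma)$ overall. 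You conflated per-merge depth with per-decision depth.
\end{itemize}
The theorem is claimed for every $\sigma>0$, including $\sigma=o(1)$. Your bound agrees with $O(\log n/\sigma)$ only for constant $\sigma$.

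The missing idea is that no general range-min is needed: each merge can be done in $O(1)$ rounds with $\tilde O(n)$ work, independent of $\sigma$. You already observed that $\ell^2_j$ and $h^2_j$ are non-decreasing over the valid rows $j$ of the middle column. Here valid means reachable from the left boundary and having some destination in the right boundary. Consequently:
\begin{itemize}
\item The minimum of $\ell^2_j$ over valid $j\in[\ell^1_i,h^1_i]$ is attained at the first valid index $t\ge \ell^1_i$, provided $t\le h^1_i$; otherwise row $i$ reaches nothing.
\item The maximum of $h^2_j$ is attained at the last valid index $\le h^1_i$.
\end{itemize}
Computing, for every position of a $0/1$ array, the next (or previous) $1$ is the Find-Smaller problem. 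The paper solves it in $O(1)$ CRCW rounds with $\tilde O(n)$ work, citing Berkman--Vishkin. This yields a decision procedure with $O(\log n)$ rounds and $\tilde O(n^2)$ work. Running $n^\sigma$ decisions in parallel for $O(1/\sigma)$ search levels then gives $O(\log n/\sigma)$ rounds and $\tilde O(n^{2+\sigma})$ work.

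There is also a smaller omission. Your merge needs the set of middle-column rows reachable from the left boundary, but your summary stores only forward values, and you do not say how this set is maintained under merging. The paper handles it by also storing the backward values: the smallest and largest left-boundary row reaching each right-boundary row. These merge symmetrically, and their existence encodes exactly this set.
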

\arnoldin{AF: the LB $\frac{1}{\log{n}} \leq \sigma $ is redundant. Can write $0<\sigma$. I think in all MPC statements that $\rho$ should be implicit, from the model definition. In particular we should stat that in all result, $\rho$ can be replaced by any constant $\rho'<\rho$. How about this formulation:\\
Given two curves of length $n$, their DFD can be computed in the PRAM model in $O(\log{n}/\sigma^2)$ rounds and $O(n^{2+\sigma})$ total work for any $0 \leq \sigma < 1$. In the MPC model, the DFD can be computed in $O(\log{n}/\rho^2)$ rounds and $O(n^{2+\rho})$ total space.
}

\begin{remark}
    As mentioned in the introduction, Alt and Godau~\cite{AG95} present an algorithm for computing the \frechet\ distance between closed curves, which (as they remark in the paper) can be adapted to the PRAM model with the same bounds as in \Cref{thm:exact-dfd}. In fact, they use a similar divide-and-conquer approach. However, their algorithm is slightly more involved because its goal is to compute the distance for closed curves, where the starting point on each curve can be chosen. Moreover, even for the simplified version that we present in this paper for DFD, the adaptation of the divide and conquer algorithm into a parallel algorithm requires some non-trivial steps. Thus, the absence of a publicly available proof of the claim in~\cite{AG95} makes it worthwhile to present a complete parallel algorithm with a formal proof, as we do in this paper.
\end{remark}

Next, we turn our attention to $c$-packed curves, and show that we can find an approximate value of the DFD with greatly reduced total work. 
For this, we first need to simplify the curves, i.e., compute a so-called $\mu$-simplification for each of the two $c$-packed curves. These simplifications allow us to implicitly compute a dynamic programming matrix where we have at most $O(c)$ 1-entries in each row and column. 
The concept of $\mu$-simplifications was introduced in \cite{DHW12} in their algorithm for $c$-packed curves. Such simplifications can be computed using a sequential greedy algorithm. In \Cref{sec:simplification} we define a relaxed version of $\mu$-simplifications, and show how to compute them in a parallel manner. Since simplifications play an important role in many algorithms and applications of curve similarity, we believe that the implementation of such a simplification algorithm in parallel may be of independent interest.

\begin{restatable}{theorem}{thmSimp}\label{thm:MPS-simp}
    In the MPC model, a $(2/\rho+1)$-relaxed $\mu$-simplification of a curve with $n$ vertices can be computed in $O(1/\rho)$ rounds, using $O(n^{\rho})$ space per machine, and with $\Tilde{O}(n)$ total space and total work, for any $\rho < 1$. In the PRAM model, an $O(\log n)$-relaxed $\mu$-simplification of a curve with $n$ vertices can be computed in $O(\log n)$ rounds, $O(n)$ processors, and $O(n\log{n})$ total work.
\end{restatable}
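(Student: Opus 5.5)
We have to guess the definitions. In \cite{DHW12}, a $\mu$-simplification is computed greedily: start at $p_1$, and walk forward until the first vertex at distance at least $\mu$ from the current kept vertex; keep it and repeat. The result has two properties: every segment (except possibly the last) has length at least $\mu$, and every skipped vertex lies within distance $\mu$ of the previous kept vertex. We take a $k$-relaxed $\mu$-simplification to be a subsequence $p_1=q_1,\dots,q_m=p_n$ of the curve such that every removed vertex between $q_j$ and $q_{j+1}$ lies within distance $k\mu$ of $q_j$, and consecutive kept vertices are at distance at least $\mu$ (except possibly the last one). The plan is to replace the inherently sequential greedy walk by a hierarchical one that can be parallelised.

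\textbf{Base step (blocks that fit on one machine).} Split the curve into $n^{1-\rho}$ contiguous blocks of $n^{\rho}$ vertices each, one block per machine. Each machine runs the sequential greedy $\mu$-simplification on its own block, starting from the block's first vertex. Inside a block this gives an exact $\mu$-simplification. The result is violated only at block boundaries, where the last kept vertex of one block and the first kept vertex of the next block may be closer than $\mu$.

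\textbf{Merging step (fix violations bottom-up).} Treat the blocks as leaves of a tree of fan-in $n^{\rho}$ and depth $1/\rho$, and perform $1/\rho$ merge levels. At each level, a machine receives the boundary information of $n^{\rho}$ consecutive children: each child's first kept vertex and last kept vertex. It then resolves the boundary conflicts greedily, scanning left to right: a child's first kept vertex is deleted if it is within distance $\mu$ of the preceding child's last surviving kept vertex.

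The invariant we aim for is that, after $\ell$ levels, every deleted vertex lies within $(2\ell+1)\mu$ of the kept vertex preceding it. The reason is that one deletion at a boundary adds a detour of at most $\mu$ plus a further $\mu$ to the reference point. After $1/\rho$ levels this yields the $(2/\rho+1)$ relaxation. Each level costs $O(1)$ rounds (sorting and routing the boundary vertices to parents), so the total is $O(1/\rho)$ rounds. The total space is $\tilde O(n)$, because each level only forwards $O(1)$ vertices per child.

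\textbf{Main obstacle.} Deleting a block's first kept vertex can cascade, since the block's second kept vertex may now also be too close. I would try to confine the cascade locally: the second kept vertex is at distance at least $\mu$ from the first, so it suffices to re-check against a bounded radius. Alternatively, one can accept short segments and relax only the covering property, with the charge of at most $2\mu$ per level. Making this accounting precise, so that only $O(1)$ information per child needs to be forwarded and distances grow only additively per level, is the crux of the proof.

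\textbf{PRAM version.} In PRAM, take $\rho=1/\log n$: use blocks of size 2 and a binary merge tree of depth $\log n$. Each level costs $O(1)$ rounds and $O(n)$ work, which gives $O(\log n)$ rounds, $O(n)$ processors, $O(n\log n)$ work, and an $O(\log n)$-relaxed $\mu$-simplification.
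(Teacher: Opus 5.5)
\textbf{Gap: the cascade step is unresolved.} Your skeleton (a hierarchy with fan-in $n^\rho$, $1/\rho$ levels, $O(1)$ forwarded data per child, local boundary repair) matches the paper's. But the step you call the crux is missing, and the repair you suggest does not work.

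Suppose blocks are simplified at scale $\mu$. Let $u$ be the preceding child's last surviving vertex, and delete a child's first kept vertex $f$ because $\|u-f\|\le\mu$. For the child's next kept vertex $s$ you only get $\|u-s\|\ge\|f-s\|-\|u-f\|>0$. For example, in 1D take $u=0$ and a block $0.9\mu,-0.9\mu,0.9\mu,\dots$. The greedy keeps every vertex, yet all of them lie within $\mu$ of $u$. So the deletion cascades through the whole block, and the parent machine cannot see this. Your alternative of accepting short segments abandons property (ii), which the theorem requires.

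A larger internal buffer alone would not fix this either. Doing $\mu$-threshold boundary deletions at every level creates new edges of length barely above $\mu$. Take two consecutive children that each collapsed to a single vertex, $a=0.75\mu$ and $b=-0.75\mu$. The level-1 repair keeps both. At the next level, $a$ is deleted against a predecessor $u=0$, but then $\|u-b\|<\mu$. The parent sees only $a$ and the group's last vertex, so it cannot detect this. Hence your invariant $(2\ell+1)\mu$ is not backed by a working mechanism, and your PRAM version inherits the same gap.

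\textbf{How the paper closes it.} The paper uses three ingredients you do not have.
\begin{enumerate}
\item Every intermediate simplification is done at scale $2\mu$, so every recorded edge (a \emph{pair}) has length $>2\mu$. Deleting the first endpoint of a pair against a vertex within $\mu$ therefore leaves the second endpoint at distance $>\mu$, and deletions never propagate.
\item Children whose simplification collapses to one vertex are kept as \emph{singletons} and carried upward. At each level, every maximal run of singletons is re-simplified at scale $2\mu$ and replaced by its pair encoding. Each child forwards only its extremal elements, plus one pair as a breakpoint if both extremes are singletons, and stores everything else. This costs $2\mu$ per level, i.e.\ $(2/\rho)\mu$ overall, and guarantees that the collected sequence has no two consecutive singletons.
\item No $\mu$-threshold deletion is made inside the hierarchy. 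A single final merge round over the collected sequence deletes only singletons and first endpoints of pairs, each within $\mu$ of a nearby surviving vertex. This step is purely local and contributes the final $+\mu$, giving the factor $2/\rho+1$.
\end{enumerate}

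A minor point: your guessed definition forces $p_n$ to be kept and allows a short last edge. The statement's notion requires all edges to be longer than $\mu$ and does not require the endpoints to be kept.
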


Given the simplification, one can show that if we tile the matrix by submatrices of a given size, only a few of them have 1-entries on the boundary, and hence only few are relevant for determining the DFD. We show how to compute the relevant submatrices, which allows us to reduce the size of the problem. Once the remaining matrix is small enough we either brute-force the solution on the remaining graph, or use the best known algorithm for directed reachability, which solves the decision variant of the problem.

We conclude by finding an approximate solution using the algorithm for the decision problem. We first construct a set $D$ of $\Tilde{O}(n)$ possible candidates for the approximate DFD (which can be considered a structure conceptually resembling but weaker than WSPD trees), and then perform a multi-pivot search for the approximate DFD on this set. 

\begin{restatable}{theorem}{thmCPacked}\label{thm:main-c-packed}
    Given $\eps>0$ and two $c$-packed curves of length $n$, a $(1+\eps)$ approximation of their DFD can be computed in the PRAM model in $O(\log{n}+1/\sigma^3)$ rounds and $O(n^{1+\sigma})$ total work for any $\Theta(\frac{\log{c}}{\log{n}}) \leq \sigma < 1$. In the MPC model, a $(1+\eps)$ approximation of the DFD can be computed in $O(1/\rho^3)$ rounds and $O(n^{1+\rho})$ total space for any constant $\rho \in (0,1)$.
\end{restatable}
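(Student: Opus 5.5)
The proof has three stages: relaxed simplification, a sparse decision procedure, and a search over a candidate set. Throughout I write $\delta$ for the decision threshold, $P,Q$ for the two input curves, and $\kappa$ for the relaxation factor of the simplification.

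\textbf{Stage 1: simplification.} Fix a threshold $\delta$ and set $\mu=\eps\delta/(2\kappa)$, where $\kappa$ is the relaxation factor from \Cref{thm:MPS-simp}. In MPC, $\kappa=2/\rho+1=O(1/\rho)$; in PRAM, $\kappa=O(\log n)$. Computing the simplifications $P',Q'$ with \Cref{thm:MPS-simp} costs $O(1/\rho)$ MPC rounds or $O(\log n)$ PRAM rounds, and near-linear work. By the triangle inequality, $|d_{dF}(P,Q)-d_{dF}(P',Q')|\le 2\kappa\mu=\eps\delta$. Hence deciding $d_{dF}(P',Q')\le\delta$ gives an approximate decision procedure for the original curves.

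The key packing fact, as in \cite{DHW12}, is this. A relaxed simplification of a $c$-packed curve is still $O(c)$-packed, and its consecutive vertices are at distance $\Omega(\mu)$. Each vertex of $P'$ lies within $\delta$ of only $O(c\,\delta/\mu)=O(c\kappa/\eps)$ vertices of $Q'$, and symmetrically. Therefore the free-space matrix $M$ (entry $1$ iff the two vertices are within distance $\delta$) has $O(c\kappa/\eps)$ ones per row and column, so $N=O(nc\kappa/\eps)$ ones in total. These ones can be enumerated in parallel by sorting the vertices into a grid of cell size $\delta$.

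\textbf{Stage 2: the decision problem on the sparse matrix.} I plan to shrink the instance by tiling $M$ into blocks of side $s=n^{\sigma}$. A block matters only if it has a one on its boundary, and there are $O(N/s)$ such blocks. Inside each relevant block I compute boundary-to-boundary reachability using the strip-merging procedure of \Cref{thm:exact-dfd} on an $s\times s$ instance. This costs $\tilde O(s^{2+\sigma'})$ work per block, so $\tilde O(N s^{1+\sigma'})$ in total. That is too much unless I only track the ones on each block's boundary: there are $O(c\kappa/\eps)$ of them per boundary row or column segment.

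So instead I would recurse. I contract each block into a bipartite "boundary reachability" graph on its boundary ones, and recurse on the coarser tiling, multiplying the side length by $n^{\sigma}$ at each level. This gives $O(1/\sigma)$ levels. At each level, each block combines $n^{2\sigma}$ children. The reachability within a level is computed with $O(1/\sigma)$ rounds of a doubling or strip-merge procedure on the sparse boundary sets. This should give total rounds $O(1/\sigma^2)$ per decision and work $\tilde O(n^{1+O(\sigma)}\mathrm{poly}(c\kappa/\eps))$. Rescaling $\sigma$ and using $\sigma\ge\Theta(\log c/\log n)$ to absorb $c$ into $n^{\sigma}$ gives $O(n^{1+\sigma})$.

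In MPC, with blocks of size $n^{\rho}$ fitting on one machine, each level is solved locally, giving $O(1/\rho)$ levels of $O(1/\rho)$ rounds each.

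\textbf{Stage 3: searching for the distance.} I build a candidate set $D$ of $\tilde O(n)$ values containing a $(1+\eps)$-approximation of every relevant pairwise distance, using a quadtree/WSPD-like construction computed in parallel by sorting. I then run a multi-pivot search. In each step I evaluate the decision procedure on $n^{\sigma}$ evenly spaced pivots of the current range in parallel. This multiplies the work by $n^{\sigma}$, which is absorbed after rescaling. It takes $O(1/\sigma)$ steps to narrow $D$ down to one interval. Each step costs $O(1/\sigma^2)$ rounds, giving $O(1/\sigma^3)$ overall, plus $O(\log n)$ PRAM rounds for sorting and simplification. In MPC the sorting is $O(1/\rho)$, giving $O(1/\rho^3)$.

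\textbf{Main obstacle.} The hardest step is Stage 2: showing that the per-level merges of sparse boundary reachability can be done in $O(1/\sigma)$ rounds with near-linear work. This requires proving that the number of relevant blocks, each carrying $O(c\kappa/\eps)$ boundary ones per side, stays linear across levels. It also requires the merge to exploit non-crossing monotone paths, as in \Cref{thm:exact-dfd}, so that reachability is encoded by interval structures rather than dense matrices.
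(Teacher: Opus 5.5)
Your pipeline matches the paper's: relaxed simplification via \Cref{thm:MPS-simp}, a per-row/column bound $\lambda$ on 1-entries, tiling and discarding blocks with no boundary 1-entry, recursive contraction to boundary summaries, and a multi-pivot search over an $\tilde O(n)$ candidate set. The gap is Stage~2, the step you yourself flag: it is left unproved, and the route you sketch does not give the bound you claim. Doubling on a parent's contracted graph, or pairwise strip merging over its $n^{\sigma}$ columns of children, needs about $\sigma\log_2 n$ squarings or merge rounds. That is $O(\sigma\log n)$ rounds per level, not $O(1/\sigma)$, so your $O(1/\sigma^2)$-round PRAM decision is unsupported.

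The paper closes this step without any interval or non-crossing encoding. Apply \Cref{lem:submatrices-complexity} at every scale:
\begin{itemize}
\item at side length $S$ there are at most $4\lambda n/S$ relevant blocks;
\item each has at most $4\lambda$ boundary 1-entries, since each side lies in one row or column of $M$;
\item inside a relevant parent, the children's boundary rows and columns carry at most $4\lambda n^{\sigma}$ 1-entries, so at most that many children are relevant.
\end{itemize}
A monotone path from $M[1,1]$ to $M[n,n]$ enters and leaves every block through boundary 1-entries, so it visits only relevant blocks. Hence each parent's contracted graph has $O(\lambda n^{\sigma})$ vertices. In MPC this graph fits on one machine, and the parent's boundary-to-boundary reachability is computed locally in $O(1)$ rounds; this is what the paper does and what your MPC sentence amounts to. In PRAM the paper runs the $|V|^3$-processor transitive closure of \cite{J92} on each contracted graph. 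That costs $n^{O(\sigma)}$ work per parent, absorbed by rescaling $\sigma$, and $O(\sigma\log n)$ rounds per level, so $O(\log n)$ over all levels. Your worry about the bottom level is misplaced: $\lambda n^{1-\sigma}$ blocks times $s^{2+\sigma'}$ work is already $\lambda n^{1+\sigma+\sigma\sigma'}$. Recursion is needed only because the contracted graph still has $O(\lambda^2 n^{1-\sigma})$ vertices, too many for one transitive closure.

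Smaller points:
\begin{itemize}
\item \Cref{clm:simp_is_also_c_packed} only gives $6\alpha c$-packedness for an $\alpha$-relaxed simplification. So $\lambda=O(\alpha^2c/\eps)$ (\Cref{lem:matrix-complexity}), not your $O(c\kappa/\eps)$; this is harmless after absorption.
\item Your two-sided approximate decider is not monotone in $\delta$. Narrowing $D$ to consecutive values $r_i<r_{i+1}$ therefore does not tell you which one is the approximation. The paper uses the fuzzy decider of \cite{DHW12}, with outputs (i)--(iii) as in \Cref{thm:realistic_mpc_descision}; one extra query at $(1+\eps)r_i$ would also suffice.
\item Since $\mu$ depends on $\delta$, the simplification is recomputed for every pivot in every search step, not paid once. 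Your PRAM round accounting must charge it per step.
\item Enumerating all 1-entries by grid bucketing is a valid alternative, for constant $d$, to the paper's hierarchical boundary scan (\Cref{thm:relevant-submatrices}). It turns finding relevant blocks at every scale into sorting.
\item For $D$ the paper needs no WSPD. It gets factor-$2$ approximations in 1D via hierarchical separator indices (\Cref{lem:distance_magnitude_one_d}), applies this per coordinate, and rescales by $(1+\eps)^{\ell}$ for $|\ell|\le\lceil\log_{1+\eps}(2\sqrt d)\rceil$ (\Cref{lem:distance_magnitude_any_d}).
\end{itemize}
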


\section{Preliminaries}
\label{sec:prelims}
Throughout the paper, we write $\Tilde{O}(f(n))$ to denote $O(f(n)\log^c(f(n)))$ for some sufficiently large constant $c > 0$. A polygonal curve $A:[1,n] \rightarrow \reals^d$ consists of $n-1$ line segments $\overline{a_i a_{i+1}}$ for $i \in \{1,2,\dots,n-1\}$, where $a_i=A(i)$. The points $a_1,\dots,a_n$ are the vertices of $A$. We assume for simplicity of presentation that $d$ is constant, however, note that in all our algorithms it will be easy to see that the dependency on $d$ is linear.

Let $A:[1,n] \rightarrow \reals^d$ and $B:[1,n] \rightarrow \reals^d$ be two polygonal curves. We assume for simplicity that both curves have $n$ vertices, but this assumption can be easily removed.
A \dfn{walk} $\pi$ on $A$ and $B$ is a sequence of pairs of indices $(i_1,j_1), (i_2,j_2),\dots,(i_t,j_t)$, such that:
\begin{enumerate}[(i)]
    \item $(i_1,j_1)=(1,1)$ and $(i_t,j_t)=(n,n)$, and 
    \item for all $1<k\le t$ it holds that $(i_k,j_k)\in \{(i_{k-1}+1,j_{k-1}),(i_{k-1},j_{k-1}+1),(i_{k-1}+1,j_{k-1}+1)\}$.
\end{enumerate}
The \dfn{discrete \frechet\ distance} (DFD) between $A$ and $B$ is defined as $$\dfd(A,B)=\min_{\pi \in \Pi}\max_{(i,j)\in \pi}\|a_i-b_j\|$$ where $\Pi$ is the set of all possible walks on $A$ and $B$. 

Let $M$ be the $n\times n$ matrix such that $M[i,j]=\|a_i-b_j\|$. The $i$-th column of $M$ corresponds to the $i$-th vertex of $A$, and the $j$-th row corresponds to the $j$-th vertex of $B$. A walk $\pi$ on $A$ and $B$ corresponds to a \emph{monotone path} in $M$ from $M[1,1]$ to $M[n,n]$, and thus finding such a monotone path $\pi$ with minimum ``cost'' (i.e. minimum $\max_{(i,j)\in \pi}\|a_i-b_j\|$) is equivalent to computing the discrete \frechet\ distance between $A$ and $B$. This can be done using a simple dynamic programming algorithm in $O(n^2)$ time. Alternatively, this problem can be viewed as finding a directed path with the smallest maximal cell weight from the bottom left vertex to the top right, in the corresponding directed grid graph where each vertex has at most three directed edges: to its top, top-right, and right neighbors (see \Cref{fig:dp}).

\begin{figure}[h]
    \centering
    \includegraphics[page=1, scale=1]{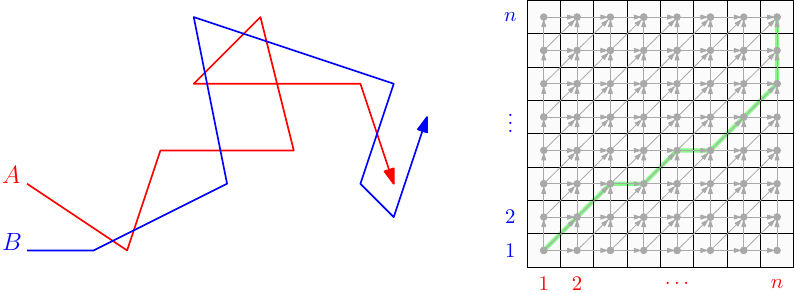}
    \caption{The matrix $M$ of two curves $A$ and $B$, and a monotone path in the corresponding directed grid graph.}
    \label{fig:dp}
\end{figure}

\oldparagraph{The decision version.}
Given a distance threshold $\delta$, let $M_\delta$ be the $n\times n$ binary matrix where $M_\delta[i,j]=1$ if and only if $\|a_i-b_j\|\le \delta$. 
Let $\Pi_\delta$ be the set of all walks $\pi$ on $A$ and $B$ with $\max_{(i,j)\in \pi}\|a_i-b_j\|\le \delta$. In other words, $\Pi_\delta$ is the set of all monotone paths in $M_\delta$ from $M_\delta[1,1]$ to $M_\delta[n,n]$ that use only 1-entries of $M_\delta$. The decision problem can therefore be viewed as a \textbf{reachability problem}, as follows. Consider the directed subgraph of the grid which contains only the vertices that correspond to $1$-entries in $M_\delta$. The discrete \frechet\ distance between $A$ and $B$ is at most $\delta$ if and only if $M_\delta[n,n]$ is reachable from $M_\delta[1,1]$ in this subgraph.

\input{sec_models}

\input{sec_trivialAlgs}

\input{sec_computingDFD}

\section{Simplifying a curve in parallel}\label{sec:simplification}
Curve simplification is a very powerful heuristic that is used in many approximation algorithms in order to reduce running time. Essentially, a simplification of a curve $P$ with $n$ vertices is a curve $P'$ with $k\le n$ vertices such that $\dfd(P,P')$ (or another distance measure) is relatively small, while the curve $P'$ is ``simpler'' in some sense. Many algorithms exist for computing different types of simplifications, each with its own useful properties. In this section we present parallel algorithms for computing a specific type of simplification, which will be useful in the next section for approximating the distance between realistic curves.

In their celebrated work on approximation algorithms for \frechet\ distance between realistic curves, Driemel, Har-Peled and Wenk~\cite{DHW12} introduced the concept of $\mu$-simplifications. The definition that we give here is slightly more general.

\old{
\oldparagraph{$\mu$-simplifications.} A $\mu$-simplification of a polygonal curve $P$ with $n$ vertices is a curve $P^\mu=(p_{i_1},\dots,p_{i_k})$, such that there exists a partition of $P$ into disjoint subcurves $Q_1,\dots,Q_k$ where $p_{i_t}\in Q_t$, and the following properties hold:
\begin{enumerate}[(a)]
    \item Small distance: for every $w \in Q_t$ we have $\|p_{i_t}-w\|\le\mu$ (which immediately implies that $\dfd(P,P^\mu)\le\mu$).
    \item Long edges: for every $t$, $\|p_{i_t}-p_{i_{t+1}}\|>\mu$, i.e., all the edges of $P^\mu$ have length larger than $\mu$.
\end{enumerate}
}

\oldparagraph{A sequential algorithm.} In~\cite{DHW12}, a $\mu$-simplification is defined as the output of the following algorithm. Start with the first vertex $p_1$, add it to $P^\mu$, and set it as the current vertex $p_i$. Then, iterate over the vertices of $P$ that follow $p_i$ to find the first vertex $p_j$ such that $\|p_i-p_j\|>\mu$. Add $p_j$ to $P^\mu$ and set it as the current vertex. Continue the process until reaching the last vertex $p_n$.\footnote{In \cite{DHW12}, the algorithm also adds $p_n$ as the last vertex in $P^\mu$. Here we skip this step.}
Let $P^\mu=(p_{i_1},\dots,p_{i_k})$, and notice that this algorithm also produces a partition of $P$ into disjoint subcurves $Q_1,\dots,Q_k$ such that $p_{i_t}\in Q_t$, for every $w \in Q_t$ we have $\|p_{i_t}-w\|\le\mu$, and all the edges of $P^\mu$ have length larger than $\mu$.

This algorithm is inherently sequential, and to compute a simplification in parallel we would need to merge subcurves of the simplified curves together. For this, we need to allow some relaxation of $\forall_{w \in Q_t}\|p_{i_t}-w\|\le\mu$, as follows.

\oldparagraph{$\alpha$-relaxed $\mu$-simplifications.} For some relaxation parameter $\alpha>1$, an $\alpha$-relaxed $\mu$-simplification $P^\mu_\alpha$ is a polygonal curve $(p_{i_1},\dots,p_{i_k})$ such that $P$ can be partitioned into $k$ disjoint subcurves $Q_1,\dots,Q_k$ where $p_{i_t}\in Q_t$, and the following properties hold.
\begin{enumerate}[(i)]
    \item Small distance: for every $w \in Q_t$ it holds that $\|p_{i_t}-w\|\le\alpha\mu$. \label{property:SmallDistance}
    \item Long edges: for every $t$, $\|p_{i_t}-p_{i_{t+1}}\|>\mu$, i.e., all the edges of $P^\mu_\alpha$ have length larger than $\mu$.\label{property:LongEdges}
\end{enumerate}
We call $Q_1,\dots,Q_k$ an $\alpha$-relaxed $\mu$-partition of $P$.

\subsection{Simplification in the MPC model}
\label{sec:simp_mpc}
\subsubsection*{Warmup: Simplification with $\Tilde{O}(\sqrt{n})$ space per machine and $O(1)$ rounds}
We first describe an algorithm with two rounds in the MPC model, where each machine has $\Tilde{O}(\sqrt{n})$ space. For this, we assume that each machine initially stores a consecutive subcurve of $P$ with $\sqrt{n}$ vertices, i.e., $P$ is divided into $\sqrt{n}$ consecutive subcurves, $P_1,\dots,P_{\sqrt{n}}$, each with $\sqrt{n}$ vertices. This assumption can be guaranteed by performing a standard sorting procedure using $O(1)$ rounds (see \Cref{lem:aggregate}).

\oldparagraph{First round.} For each subcurve, compute a $2\mu$-simplification locally on a single machine, using the algorithm of \cite{DHW12} described above. Denote by $P'_1,\dots,P'_{\sqrt{n}}$ the corresponding $2\mu$-simplifications. Notice that for the concatenation of simplifications $P'=P'_1\circ\dots\circ P'_{\sqrt{n}}$ we get property (\ref{property:SmallDistance}), however, there may be short edges. These short edges may occur only between the last vertex of a simplification $P'_i$ and the first vertex of the consecutive simplification $P'_{i+1}$. Therefore, we proceed to the next round with a sequence of pairs and singletons $\I=(\I_1,\dots,\I_{\sqrt{n}})$, such that for each $i\in[\sqrt{n}]$, if $P'_i$ is a single vertex $v$, then $\I_i$ is a \textbf{singleton} $(v)$, and otherwise, if $P'_i$ has more than one vertex, then $\I_i$ is a \textbf{pair} $(v,u)$ where $v$ is the first vertex and $u$ is the last vertex of $P'_i$.

\oldparagraph{Second round.} We can now process the sequence $\I=(\I_1,\dots,\I_{\sqrt{n}})$ on a single machine. We start with the following \emph{compression procedure}. Consider a maximal consecutive subsequence of singletons in $\I$, and denote it by $S=(\I_j,\dots,\I_k)$. Compute a $2\mu$-simplification $S'$ of $S$ (again using the algorithm of \cite{DHW12}), and replace $P'_j\circ \dots\circ P'_k$ by $S'$. The simplification $S'$ has edges of length larger than $2\mu$, and by the triangle inequality we get that $S'$ is a $2$-relaxed $2\mu$-simplification of $P_j\circ \dots\circ P_k$ (i.e., its edges have length larger than $2\mu$ and its distance to $P_j\circ \dots\circ P_k$ is at most $4\mu$).\omrit{add a little separate observation here?} If $S'$ is a single vertex, replace the subsequence $S$ by a singleton $S'$, and otherwise replace it by a pair consisting of the first and the last vertices of $S'$. Do this for every maximal consecutive subsequence of singletons in $\I$. After this process, the modified sequence $\I$ does not have two consecutive singletons. However, some edges in the modified simplification $P'$ may still be too short. This may happen only between vertices of two consecutive elements in $\I$, so we run the following \emph{merge procedure}. First, if $I_1= u_1$ is a singleton and $I_2=(u_2,u_3)$ is a pair, then delete $u_1$ from $P'$ if $\|u_1-u_2\|\le\mu$. Then, on each pair element $\I_i=(v_1,v_2)$ of $\I$ for $i=1,\dots,|\I|-1$:
\begin{enumerate}
    \item If $\I_{i+1}$ is a pair $(v_3,v_4)$:
    \begin{itemize}
        \item If $\|v_2-v_3\|\le\mu$: delete $v_3$ from $P'$.
    \end{itemize}
    \item Else, if $\I_{i+1}$ is a singleton $v_3$ (then $\I_{i+2}$, if exists, is a pair $(v_4,v_5)$):
    \begin{itemize}
        \item If $\|v_2-v_3\|\le\mu$: delete $v_3$ from $P'$. If $\|v_2-v_4\|\le\mu$: delete $v_4$ from $P'$.
        \item Else ($\|v_2-v_3\|>\mu$), if $\|v_3-v_4\|\le\mu$: delete $v_4$ from $P'$.
    \end{itemize}
\end{enumerate}
Notice that the above procedure can run in parallel for all the pairs, as the computation is done only locally, and only the first element of a pair can be deleted from $P'$. We use this observation only later, when we generalize the algorithm to have a tradeoff between the bound on the space and the number of rounds.

\begin{claim}\label{clm:simplification-2-rounds}
    Let $P'$ be the result of the algorithm after the second round. Then $P'$ is a $5$-relaxed $\mu$-simplification of $P$.
\end{claim}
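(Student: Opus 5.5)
The plan is to verify the two defining properties of a $5$-relaxed $\mu$-simplification separately, building an explicit partition $Q_1,\dots,Q_k$ of $P$ along the way.

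\textbf{Partition before merging.} After the compression procedure, the current curve $P'$ comes with a partition of $P$. Each vertex kept from a pair element $P'_i$ owns its subcurve from the $2\mu$-simplification of $P_i$, so every point of that subcurve is within $2\mu$ of it. Each vertex of a compressed run $S'$ owns the union of the pieces assigned to it in $S'$. A singleton $v$ has all of $P_i$ within $2\mu$, and the simplification $S'$ puts $v$ within $2\mu$ of its representative $s$. By the triangle inequality every point assigned to $s$ is therefore within $4\mu$ of it.

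\textbf{Property (\ref{property:SmallDistance}) after merging.} The merge procedure only deletes a vertex $x$ when some surviving vertex $y$ that precedes it satisfies $\|x-y\|\le\mu$. In each case we reassign the subcurve of $x$ to $y$. This keeps subcurves consecutive, provided everything between $y$ and $x$ already belongs to $y$. That holds because $y$ is the last vertex of one element and $x$ the first vertex of the next, or because $x=v_4$ and $v_3$ was deleted into $v_2$ in the same step.

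The distance bound follows from a short case check. A point $w$ of the subcurve of $x$ satisfies $\|w-x\|\le 4\mu$, so $\|w-y\|\le 5\mu$. In the case where both $v_3$ and $v_4$ are deleted into $v_2$, we compare each directly with $v_2$ rather than chaining, since both comparisons are made with $v_2$. This still gives $\|w-v_2\|\le 4\mu+\mu$. Points that stayed with their original owner are within $4\mu$. Deletions never chain beyond one step, because only the first vertex of a pair (or the singleton) can be deleted, and it is deleted into a vertex that is never deleted itself. Together these give the factor $5$.

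\textbf{Property (\ref{property:LongEdges}).} Edges internal to some $P'_i$ or some $S'$ have length greater than $2\mu$. The only other edges join consecutive elements of $\I$, and after compression no two singletons are adjacent. I would go through the cases of the merge procedure and check that each new adjacency created has been tested against $\mu$.

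For pair--pair, we test $\|v_2-v_3\|$. If it is at most $\mu$, then $v_3$ is deleted and the new edge $v_2v_4$ is internal to the pair, hence longer than $2\mu$. If the pair $(v_3,v_4)$ collapses to one surviving vertex, the situation is the same as for a single vertex.

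For pair--singleton--pair, there are three subcases, and in each the surviving adjacencies are exactly those checked by the procedure:
\begin{itemize}
\item If $v_3$ is deleted, we check $v_2$ against $v_4$.
\item If $v_3$ stays and $v_4$ is deleted, the new edge is $v_3v_5$. This is not $v_3v_4$, so it needs care: I expect to use $\|v_4-v_5\|>2\mu$ and $\|v_3-v_4\|\le\mu$, giving $\|v_3-v_5\|>\mu$ by the triangle inequality.
\item The same reasoning handles $v_2v_4$ when $v_4$ is deleted into $v_2$, and it handles the initial singleton $u_1$.
\end{itemize}

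\textbf{Main obstacle.} The hardest part will be this exhaustive bookkeeping of which vertex becomes adjacent to which after deletions. This includes pairs whose first vertex is deleted while their second is compared onward, and edge cases at the end of $\I$ such as a trailing singleton. I would handle it by defining, for each element, its surviving first vertex, and proving that every edge between consecutive surviving vertices either lies inside an element or was explicitly compared, or is bounded via the triangle inequality as above.
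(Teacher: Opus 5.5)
Your strategy is the same as the paper's. Vertices of round-one simplifications get radius $2\mu$ and vertices of a compressed run $S'$ get radius $4\mu$. Each deleted vertex is reassigned to a surviving vertex within distance $\mu$, giving $5\mu$. A case analysis of the merge procedure then handles long edges. Your small-distance part is fine, and it is more careful than the paper about keeping the reassigned subcurves contiguous. The one imprecision, which the paper shares, is that the initial singleton $u_1$ is absorbed by the \emph{following} vertex $u_2$, not a preceding one. This does not affect the bound $4\mu+\mu$.

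The long-edge part, however, has a step that fails as written. In the pair--pair case you say that after $v_3$ is deleted, the new edge $v_2v_4$ is internal to the pair and hence longer than $2\mu$. It is not internal. It joins the last vertex of $P'_i$ to a vertex of $P'_{i+1}$, and its length can be well below $2\mu$.

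Moreover, in this two-round algorithm a pair records the \emph{first and last} vertices of $P'_{i+1}$ (or of $S'$), not an edge. So $v_4$ need not be the new neighbour of $v_2$ at all. The same conflation underlies your use of $\|v_4-v_5\|>2\mu$ in the pair--singleton--pair case. That inequality can fail when $P'_{i+2}$ has three or more vertices, since its endpoints can be arbitrarily close, and the new edge there is not $v_3v_5$.

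The repair is what the paper does. Let $w$ be the successor of the deleted vertex inside its own simplification. The edge to $w$ is internal, so for example $\|v_3-w\|>2\mu$, and the triangle inequality gives $\|v_2-w\|>2\mu-\mu=\mu$. Likewise $\|v_2-w\|>\mu$ or $\|v_3-w\|>\mu$ when $v_4$ is deleted. Note also that $w$ is never deleted, since only the first vertex of an element can be removed. Replace $v_4$ and $v_5$ by $w$ in these cases, and replace the ``internal, hence $>2\mu$'' claim by this $>\mu$ bound; with those changes your case analysis goes through.
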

\begin{proof}
    Recall that every element $\I_i\in \I$ corresponds to a $2$-relaxed $2\mu$-simplification $P'_i$ of $P_i$.\omrit{slight abuse of notation, because we modify $P'$}
    For every $i$ and $v\in P'_i$, denote by $Q(v)$ the subcurve of $P$ that corresponds to $v$ in the $2$-relaxed $2\mu$-partition of $P_i$, so $\|v-w\|\le 4\mu$ for every $w\in Q(v)$.
    Therefore, if $\|v-u\|\le\mu$ for some vertices $u,v\in P'$, then by the triangle inequality all the vertices in $Q(u)$ have distance at most $5\mu$ to $v$. Since we only delete a vertex $u$ if $\|v-u\|\le\mu$ for the vertex $v$ that precedes it, we get that property (\ref{property:SmallDistance}) holds for the modified $P'$. It remains to show that property (\ref{property:LongEdges}) holds.
    By definition, for every $i$, all the edges of $P'_i$ have length larger than $2\mu$. First, we note that if the first element $I_1 = u_1$ is a singleton and is not deleted, then for $I_2 = (u_2,u_3)$ we have $\|u_1-u_2\|>\mu$, so the first edge satisfies the long-edge
property. Next, we show that after running the merge procedure there are no short edges between two consecutive pair elements.  Consider a pair element $\I_i=(v_1,v_2)$. There are two cases:
    \begin{enumerate}
        \item $\I_{i+1}$ is a pair $(v_3,v_4)$, and $\|v_2-v_3\|\le\mu$. Let $w$ be the vertex that comes after $v_3$ in $P'_{i+1}$. Because $P'_{i+1}$ is a $2$-relaxed $2\mu$-simplification, we have $\|v_3-w\|> 2\mu$, and by the triangle inequality, $\|v_2-w\|>\mu$.
        \item $\I_{i+1}$ is a singleton $v_3$, and $\I_{i+2}$ is a pair $(v_4,v_5)$. Let $w$ be the vertex that comes after $v_4$. If $\|v_2-v_3\|\le\mu$ and $\|v_2-v_4\|\le\mu$ then both $v_3,v_4$ are deleted, and as in the previous case, $\|v_2-w\|>\mu$ because $\|v_4-w\|> 2\mu$. If $\|v_2-v_3\|>\mu$ and $\|v_3-v_4\|\le\mu$, then by similar arguments $\|v_3-w\|>\mu$.
    \end{enumerate}
    Note that if $\I_{i+1}$ is a singleton $v_3$ and it is the last element in $\I$, then $v_3$ is deleted if $\|v_2-v_3\|\le\mu$, and $v_2$ will be the last vertex of $P'$.
\end{proof}

\oldparagraph{Generalization to MPC with $n^\rho$ space and $O(1/\rho)$ rounds, and generalization to PRAM.} The approach above can be naturally generalized for machines with local space $O(n^\rho)$ for any parameter $\rho \in (0,1)$. We briefly sketch the altered approach, and defer the exact details to \Cref{sec:app_simple_mpc}. Similarly, the same algorithm translates in an essentially identical manner to the PRAM model, and we defer the details to \Cref{sec:simp_PRAM_appendix}.

On a high level, the algorithm works as follows. We define a hierarchical partitioning of the curve $P$, where in level $j \in \{0,\dots,1/\rho\}$, the curve $P$ is divided into $n^{1-j\rho}$ consecutive subcurves, $P^j_1,\dots,P^j_{n^{1-j\rho}}$. The algorithm performs ``repeated simplification'' for $1/\rho$ rounds, where in round $j$, we have for each $P^j_i$ a designated machine $i$. This machine receives compressed information $\I(P^j_i)$ about $P^j_i$, where $\I(P^j_i)$ is a sequence containing $O(n^\rho)$ elements of singleton vertices of $P^j_i$ and vertex pairs of $P^j_i$; here a pair signifies a long edge. Intuitively, $\I(P^j_i)$ is a partial $(2j)$-relaxed $\mu$-simplification of $P^j_i$, where a vertex of $P^j_i$ not appearing in $\I(P^j_i)$ was either simplified away by some element in $\I(P^j_i)$, or in some sense we already found for the vertex its long edges, and it was removed from the recursion. The machine replaces each consecutive subsequence of singletons of $\I(P^j_i)$ with a $2\mu$-simplification of it, and (up to some technical details) sends the two extremal elements of the remaining elements to the next level. After the final level, we collect the information gathered in all rounds together, in a manner that guarantees that no consecutive singletons remain. Finally, we use the same merge procedure as described above, and obtain a $(2/\rho+1)$-relaxed $\mu$-simplification. Exactly the same approach works for the PRAM setting, where we obtain an $O(\log{n})$-relaxed $\mu$-approximate simplification in $O(\log{n})$ rounds.

\thmSimp*

\old{
\begin{remark}
    We can get a 3-relaxed, see simplification.tex\omrit{remove this for SODA probably, but keep for later}
\end{remark}
}

\section{Realistic curves}\label{sec:c-packed}
A curve $P:[1,n]\rightarrow \reals^d$ is \emph{$c$-packed} if for any ball $B$ with radius $r$, the total length of $P\cap B$ is at most $c\cdot r$.
As mentioned in the previous section, the concept of $c$-packed was first introduced by Driemel, Har-Peled and Wenk \cite{DHW12}, who also gave a near linear time $(1+\eps)$ approximation for computing the (continuous) \frechet\ distance between two $c$-packed curves. The high level idea of their algorithm is based on computing $\mu$-simplifications of the original curves, on which the computation of \frechet\ distance can be done in near linear time. We outline their approach in more detail below.

Recall that a $\mu$-simplification $P^\mu$ of a curve $P$ with $n$ vertices can be computed in linear time, and has the following properties (see \cite{DHW12}):
(a) $\dfd(P,P^\mu)\le\mu$, and (b) all the edges of $P^\mu$ have length larger than $\mu$. 
As mentioned in the introduction, the decision version of the discrete \frechet\ distance can be viewed as a reachability problem, and thus the running time for traversing the matrix $M_\delta$ is linear in its ``complexity'', i.e., the number of 1-entries in $M_\delta$. Driemel, Har-Peled and Wenk \cite{DHW12} show that for two $c$-packed curves whose edges have length at least $\mu$, the complexity of the $\delta$-reachable free space (a similar concept defined for the continuous distance) is $O(\frac{c\delta}{\mu}n)$. In addition, they show that if $P$ is $c$-packed, then $P^\mu$ is $6c$-packed. This leads to the following approximate decision algorithm. Given two $c$-packed curves $P,Q$ and values $\eps,\delta$, construct $P^{\frac{\eps\delta}{4}}$ and $Q^{\frac{\eps\delta}{4}}$ in linear time. Then, run the exact decision algorithm to check whether $\dfd(P^{\frac{\eps\delta}{4}},Q^{\frac{\eps\delta}{4}})\le\delta'=(1+\frac{\eps}{2})\delta$. Since the complexity of the $\delta$-reachable free space is $O(\frac{c\delta'}{\mu}n)$, this can be done in $O(cn/\eps)$ time.
By property (a) and the triangle inequality, if the answer is yes then $\dfd(P,Q)\le(1+\eps)\delta$, and otherwise $\dfd(P,Q)>\delta$.

Below, we first show that similar properties hold for relaxed simplifications, and then present parallel algorithms with near linear\omrit{explicit?} total work for computing a $(1+\eps)$-approximation for the discrete \frechet\ distance between two $c$-packed curves. \omritin{say something about the challenges: the reachability algorithm is sequential, we can evaluate the entire matrix, optimization}

The following lemma helps us bound the complexity of the binary matrix on which we solve our decision problem. Due to space considerations, we defer its proof to \Cref{sec:packedness_missing_proofs}. 
\begin{lemma} \label{lem:matrix-complexity}
    Let $P$ and $Q$ be two polygonal curves, both $c$-packed.
    Given a distance threshold $\delta$ and a parameter $\eps>0$, $\alpha\ge 1$, set $\mu=\frac{\eps\delta}{4\alpha}$. Let $M=M_{(1+\frac{\eps}{2})\delta}$ be the binary matrix that corresponds to $P'$ and $Q'$, where $P',Q'$ are $\alpha$-relaxed $\mu$-simplifications of $P,Q$, respectively. Then the number of 1-entries in each row and each column of $M$ is $O(\frac{\alpha^2c}{\eps})$.
\end{lemma}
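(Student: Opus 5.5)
By symmetry it suffices to bound the 1-entries in a fixed column of $M$, i.e.\ for a fixed vertex $p$ of $P'$, the number of vertices $q$ of $Q'$ with $\|p-q\|\le(1+\frac{\eps}{2})\delta$. Let $r=(1+\frac{\eps}{2})\delta$ and $B=B(p,r)$. Every such $q$ lies in $B$, so the task is to count vertices of $Q'$ inside $B$. The plan is to charge each such vertex to a piece of the \emph{original} curve $Q$ of length $\Omega(\mu)$ that lies in a slightly larger ball $B'=B(p,r+O(\alpha\mu))$. Then $c$-packedness of $Q$ gives total length at most $c(r+O(\alpha\mu))=O(c\delta)$, and the count is $O(c\delta/\mu)=O(\alpha c/\eps)$, up to the overlap factor computed below.

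For the charging, let $q_1,\dots,q_k$ be the vertices of $Q'$ and $Q_1,\dots,Q_k$ the corresponding $\alpha$-relaxed $\mu$-partition of $Q$. Since every $w\in Q_t$ satisfies $\|q_t-w\|\le\alpha\mu$, if $q_t\in B$ then all of $Q_t$ lies in $B(p,r+\alpha\mu)$.

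The difficulty is that $Q_t$ itself may have tiny length. The long-edge property says $\|q_t-q_{t+1}\|>\mu$, and $Q$ passes continuously from $q_t$ (in $Q_t$) to $q_{t+1}$ (in $Q_{t+1}$). I will therefore charge $q_t$ to the subcurve $\gamma_t$ of $Q$ from $q_t$ to $q_{t+1}$, which has length at least $\mu$. This subcurve may leave $Q_t$, so I instead bound it using $\|q_{t+1}-w\|\le\alpha\mu$ on $Q_{t+1}$. The sets $Q_t$ and $Q_{t+1}$ are consecutive, so $\gamma_t\subseteq Q_t\cup Q_{t+1}$ (plus the connecting edge).

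Hence every point of $\gamma_t$ is within $\alpha\mu$ of $q_t$ or of $q_{t+1}$, and $\|q_t-q_{t+1}\|$ needs a bound. That bound follows because the edge between $Q_t$ and $Q_{t+1}$ has endpoints within $\alpha\mu$ of $q_t$ and of $q_{t+1}$ respectively. Without extra structure, however, the connecting edge can be long.

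This is the main obstacle. To handle it, I would use only the initial portion of $\gamma_t$ of length exactly $\mu$ starting at $q_t$. That portion stays within distance $\mu$ of $q_t$ along the curve, hence within $B(p,r+\mu)$ trivially. Since $\|q_t-q_{t+1}\|>\mu$, this portion ends before reaching $q_{t+1}$.

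The truncated pieces for distinct $t$ overlap only in a bounded way. A point of $Q$ lying on the truncated pieces for $t<t'$ lies between $q_{t'}$ and $q_{t+1}$ in curve order, so $t'=t$ or $t'\le t+1$. I expect at most two pieces to overlap, but this needs care when vertices of $Q'$ are not monotone in curve order. The sequential and relaxed constructions do give a monotone order, since $Q'$ is a subsequence with a consecutive partition.

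Summing, $k_B\cdot\mu\le 2\,c\,(r+\mu)$, so $k_B=O(c\delta/\mu)=O(\alpha c/\eps)$, which is within the claimed $O(\alpha^2c/\eps)$.

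If the truncation argument fails, I would fall back on bounding the whole pieces $Q_t\cup\{\text{edge}\}$ inside $B(p,r+2\alpha\mu)$ together with an $O(\alpha)$ overlap or length loss. That version would yield the stated $\alpha^2$.
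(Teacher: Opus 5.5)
Your argument is correct, and it takes a genuinely different and more economical route than the paper.

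\textbf{The paper's route.} It first proves a separate claim: any $\alpha$-relaxed $\mu$-simplification of a $c$-packed curve is $6\alpha c$-packed, via the hippodrome comparison of $\|P\cap B(p,r+\alpha\mu)\|$ with $\|P^\mu_\alpha\cap B(p,r)\|$. It then charges each simplification vertex inside $B$ to the first $\mu$ of its outgoing \emph{simplified} edge, which lies in the ball of radius $r+\mu$. Invoking the $6\alpha c$-packedness of the simplification is exactly where the second factor of $\alpha$ comes from.

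\textbf{Your route.} You charge each $q_t\in B$ to the first $\mu$ units of arc length of the \emph{original} subcurve of $Q$ from $q_t$ to $q_{t+1}$. That subcurve has length at least $\|q_t-q_{t+1}\|>\mu$, so the charged piece is well defined. This uses only the long-edge property and the fact that the $q_t$ form an ordered subsequence of $Q$'s vertices; the small-distance property is never used. It bypasses the packedness claim entirely and yields the sharper bound $k_B\le c(r+\mu)/\mu+1=O(\alpha c/\eps)$, from which the stated $O(\alpha^2c/\eps)$ follows.

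\textbf{Small clean-ups.}
\begin{itemize}
\item The charged pieces do not merely overlap at most twice; they are pairwise disjoint in the parameter domain, since piece $t$ lies in $[q_t,q_{t+1})$ along $Q$. Your order argument should conclude $t'\le t$, which contradicts $t<t'$, rather than ``$t'\le t+1$''.
\item The last vertex $q_k$ has no successor, so it must be counted separately, adding $+1$.
\item The exploratory paragraph about $\gamma_t\subseteq Q_t\cup Q_{t+1}$ and the fallback plan can simply be dropped.
\end{itemize}

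\textbf{What each approach buys.} The paper's route produces a reusable structural fact, packedness of the relaxed simplification, mirroring \cite{DHW12}. That is what one would need if segments of the simplified curves themselves had to be controlled, as in free-space arguments for the continuous distance. For this purely discrete vertex-count statement, your direct charging is both simpler and tighter. Both proofs rely on the same reading of ``length of a curve inside a ball'': pieces disjoint in parameter are summed with multiplicity.
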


We also need the lemma below, which allows us to ignore large portions of $M$. Consider a submatrix $M'$ of the matrix $M$. The \textbf{boundary cells} of $M'$ are all the cells in its first and last row, and the cells in its first and last column. We say that a submatrix $M'$ is \textbf{relevant} if there is at least one 1-entry among its boundary cells.

\begin{lemma}\label{lem:submatrices-complexity}
    Let $M$ be an $n\times n$ binary matrix such that in each row and each column there are at most $\lambda$ 1-entries. Given $k\le n$, for every $i,j\in \{1,\dots,k\}$, let $M^{i,j}$ be the submatrix of $M$ that contains all the cells $M[a,b]$ for $a\in [(i-1)\frac{n}{k}+1,i\frac{n}{k}]$ and $b\in [(j-1)\frac{n}{k}+1,j\frac{n}{k}]$. Then at most $4\lambda k$ of these submatrices are relevant (i.e. contain a 1-entry on their boundary cells).
\end{lemma}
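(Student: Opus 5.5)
A simple charging argument suffices. Every relevant submatrix $M^{i,j}$ has at least one $1$-entry on its boundary. Such an entry lies in one of four boundary lines of $M^{i,j}$: its first row, last row, first column, or last column. So it is enough to bound, for each of these four types, how many submatrices have a $1$-entry on a boundary line of that type. We will show each count is at most $\lambda k$; a union bound then gives at most $4\lambda k$ relevant submatrices.

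Consider first the submatrices whose first row (in the indexing of the statement, the row-block index $a$) contains a $1$-entry. Fix a block index $i$. All submatrices $M^{i,1},\dots,M^{i,k}$ share the same first row of $M$, namely row $a_i=(i-1)\frac{n}{k}+1$, and each owns a disjoint segment of it. These $k$ segments partition row $a_i$ of $M$. Since that row contains at most $\lambda$ $1$-entries, at most $\lambda$ of the $k$ submatrices $M^{i,\cdot}$ have a $1$-entry on their first row.

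Summing over the $k$ choices of $i$ gives at most $\lambda k$ submatrices of this type. The same argument applied to the last row $i\frac{n}{k}$ gives another $\lambda k$. Applying it to the first and last columns, this time fixing the column-block index $j$ and using that each column of $M$ has at most $\lambda$ $1$-entries, gives $\lambda k$ each.

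One can equally phrase this as charging each relevant submatrix to a single boundary $1$-entry. Each $1$-entry of $M$ can then be charged at most a constant number of times, since a cell lies on the boundary of at most one submatrix per boundary type. Moreover, there are only $4k$ distinguished lines (the first and last rows and columns of the blocks), each containing at most $\lambda$ ones, so the total number of charges is at most $4k\lambda$.

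There is no real obstacle. The only care needed is to verify that the segments of a fixed boundary row across $M^{i,1},\dots,M^{i,k}$ are disjoint and lie in a single row of $M$, so that the per-row bound $\lambda$ applies directly. The degenerate case $n/k=1$, where the first and last rows coincide, only helps, since the union bound still holds. If $k$ does not divide $n$ one uses floors, and the argument is unchanged.
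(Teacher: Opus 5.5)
Your proof is correct and is essentially the paper's argument. The paper notes that all boundary cells lie in $2k$ rows and $2k$ columns of $M$, so there are at most $4\lambda k$ boundary $1$-entries, and since the blocks tile $M$ each such entry certifies at most one relevant submatrix; your per-line, per-boundary-type union bound is the same count organized slightly differently (and in your charging remark, the reason the total is at most $4k\lambda$ rather than a constant multiple of it is that each cell lies in exactly one block, so each $1$-entry is charged at most once).
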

\begin{proof}
    The total number of 1-entries in boundary cells is at most $4\lambda k$ (because all the boundary cells are contained in $2k$ rows and $2k$ columns, each has at most $\lambda$ 1-entries). So there are at most $4\lambda k$ matrices that contain a 1-entry on their boundary.
\end{proof}

\subsection{Finding the relevant submatrices}
Next, we show that when setting $k=n^{1-\rho}$ \orr{but in the next paragraph we set $k = n^{\rho}$ instead}\omrit{Right, the idea is to set $k=n^\rho$ in each round and eventually get the matrices that correspond to $k=n^{1-\rho}$. Maybe we should emphasize this...}, the set of relevant submatrices can be found in the MPC model with $O(n^\rho)$ space per machine, in $O(1/\rho)$ rounds and using $O(n^{1+\rho})$ total space. The idea is to apply \Cref{lem:submatrices-complexity} in each round with $k=n^\rho$ on all the submatrices found in the previous round, until we get the submatrices that correspond to setting $k=n^{1-\rho}$ in the lemma.

\oldparagraph{First round.} Consider all the submatrices $\{M^{i,j}\}_{i,j \in [k]}$ described in \Cref{lem:submatrices-complexity} for the case of $k=n^\rho$. The boundary cells of these matrices are contained in $2n^\rho$ rows and $2n^\rho$ columns of the original matrix $M$, so in total $O(n^{1+\rho})$ cells. 
We check all these cells in one round using $n$ machines, each with $O(n^\rho)$ memory. 
By \Cref{lem:submatrices-complexity}, at most $4\lambda n^\rho$ of them are relevant. Each submatrix has $O(n/k)=O(n^{1-\rho})$ rows and columns.

\oldparagraph{Second round.} Apply \Cref{lem:submatrices-complexity} on each of the submatrices that were found in the first round, again with $k=n^\rho$. In each submatrix we get $4 n^\rho$ relevant rows and columns, and in total $O(n)$ boundary cells per submatrix (each row and column now has $O(n^{1-\rho})$ cells). In total over all the relevant submatrices from the first round, there are $O(\lambda n^{1+\rho})$ cells to check. By \Cref{lem:submatrices-complexity}, at most $4\lambda n^{2\rho}$ of the submatrices are relevant (dividing each submatrix again is as if we applied the lemma with $k^2=n^{2\rho}$). Each submatrix has $O(n/k^2)=O(n^{1-{2\rho}})$ rows and columns. 

\oldparagraph{Round $i$.} At the beginning of the $i$-th round we get at most $4\lambda n^{(i-1)\rho}$ submatrices, each with $O(n^{1-{(i-1)\rho}})$ rows and columns. By applying \Cref{lem:submatrices-complexity} with $k=n^\rho$ on each of them, we get $O(n^{1-{(i-1)\rho}}\cdot n^\rho)=O(n^{1-{(i-2)\rho}})$ cells to check in each submatrix, so in total $O(n^{1-{(i-2)\rho}}\cdot \lambda n^{(i-1)\rho})=O(\lambda n^{1+\rho})$ cells to check. We are then left with at most $4\lambda n^{i\rho}$ relevant submatrices, each has $O(n^{1-{(i-1)\rho}}/n^\rho)=O(n^{1-{i\rho}})$ rows and columns.

After $\frac{1-\rho}{\rho}=O(1/\rho)$ rounds, we are left with at most $4\lambda n^{1-\rho}$ relevant submatrices (that contain a 1-entry on their boundary), each has $O(n^\rho)$ rows and columns. Notice that by the recursiveness of the algorithm, these are a subset of the relevant matrices that we would get from applying \Cref{lem:submatrices-complexity} on $M$ with $k=n^{1-\rho}$, where any removed relevant submatrix is necessarily contained in some non-relevant submatrix (since we only recurse when a submatrix has 1-entries on its boundary). We conclude that the set of at most $4\lambda n^{1-\rho}$ relevant matrices can be found in $O(1/\rho)$ rounds, using $O(n^\rho)$ space per machine, and $O(n^{1+\rho})$ total space.


We notice that this procedure translates directly to the PRAM setting as well, where the role of each MPC machine in each round can be replaced by $O(n^{\rho})$ processors, acting in $O(1)$ rounds. This is since in each round a machine just checks the boundary of some sub-matrices for 1-entries, where each boundary cell can be checked independently by some processor. 

\begin{theorem}\label{thm:relevant-submatrices}
    Let $M$ be an $n\times n$ binary matrix such that in each row and each column there are at most $\lambda$ 1-entries, for some constant $\lambda\ge 1$. 
    For every $i,j\in [1,\dots,n^{1-\rho}]$, let $M^{i,j}$ be the submatrix of $M$ that contains all the cells $M[a,b]$ for $a\in [(i-1)n^\rho+1,i\cdot n^\rho]$ and $b\in [(j-1)n^\rho+1,j\cdot n^\rho]$. In the MPC model, for any constant $\rho < 1$, we can compute in $O(1/\rho)$ rounds, using $O(\lambda n^\rho)$ space per machine and $O(\lambda n^{1+\rho})$ total space a subset of $\{M^{i,j}\}_{i,j \in [n^{1-\rho}]}$ of size at most $4\lambda n^{1-\rho}$, where any $M^{i,j}$ that is not in the output is either not relevant, or is contained in a non-relevant submatrix in $M$.
    
    Similarly, in the PRAM setting we can compute such a set of submatrices in $O(1/\rho)$ rounds, using $O(\lambda n^{1+\rho})$ processors, for any constant $\rho < 1$. \orr{changed here}
\end{theorem}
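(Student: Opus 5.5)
We prove \Cref{thm:relevant-submatrices} by formalizing the round-by-round procedure described above as an induction on the round index $i\in\{1,\dots,L\}$, where $L=\lceil (1-\rho)/\rho\rceil=O(1/\rho)$. The invariant after round $i$ is that we hold a list $\mathcal{S}_i$ of at most $4\lambda n^{i\rho}$ aligned square submatrices of side $n^{1-i\rho}$ (the blocks obtained by applying \Cref{lem:submatrices-complexity} to $M$ with $k=n^{i\rho}$). Every such block of that side length that is \emph{not} in $\mathcal{S}_i$ is either non-relevant or contained in a non-relevant block of an earlier level. The base case is round $1$, which is \Cref{lem:submatrices-complexity} applied directly with $k=n^\rho$.

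For the inductive step, we subdivide each block $B\in\mathcal{S}_{i-1}$ into $n^{\rho}\times n^{\rho}$ sub-blocks. The boundary cells of these sub-blocks lie in $2n^\rho$ rows and $2n^\rho$ columns of $B$. Each of these rows and columns is a subset of a row or column of $M$, so it has at most $\lambda$ ones. Applying \Cref{lem:submatrices-complexity} to $B$, with its row/column bound still $\lambda$, gives at most $4\lambda n^\rho$ relevant sub-blocks per $B$. Hence $|\mathcal{S}_i|\le 4\lambda n^{(i-1)\rho}\cdot n^\rho$.

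Correctness of the invariant follows because a sub-block is discarded only if it is non-relevant, or if its parent was already discarded (hence contained in a non-relevant block). After $L$ rounds the side length is $n^\rho$, matching the blocks $M^{i,j}$ of the statement, and the count is at most $4\lambda n^{1-\rho}$. A small rounding detail applies when $1/\rho$ is not an integer: the last round uses a smaller subdivision factor, and all bounds only improve.

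The implementation and resource accounting is the main work. In round $i$, the number of boundary cells to inspect is $|\mathcal{S}_{i-1}|\cdot 4n^\rho\cdot n^{1-(i-1)\rho}=O(\lambda n^{1+\rho})$, so the total space per round is $O(\lambda n^{1+\rho})$. We assign these cells to machines in chunks of size $O(\lambda n^\rho)$. Each cell's value $\|a_x-b_y\|\le\delta$ is computable once the machine holds the corresponding two curve vertices, which we route using the sorting/aggregation primitive (\Cref{lem:aggregate}) in $O(1)$ rounds.

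The obstacle I expect is the aggregation step: for each sub-block we must compute the OR over its boundary cells, which may be spread across many machines. We also need to produce the new list $\mathcal{S}_i$ with consecutive indexing, so that the next round can again assign work evenly. I would handle this by tagging every boundary cell with the identifiers of the sub-blocks it borders (at most $4$), sorting by tag, and computing a grouped OR. A prefix-sum then renumbers the surviving sub-blocks. All of this takes $O(1/\rho)$ rounds by \Cref{lem:aggregate}, or $O(1)$ if $\lambda n^\rho$ fan-in suffices, since each group has size $O(n^{1-(i-1)\rho})$ but can be reduced in a tree of fan-in $n^\rho$.

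Since the rounds are constant per level when aggregation is constant, the total is $O(1/\rho)$ rounds. If \Cref{lem:aggregate} costs $O(1/\rho)$ rounds, I would instead observe that the ORs can be folded so that only $O(1)$ rounds are needed per level, or else accept the bound and adjust. For PRAM, each boundary cell is checked by its own processor, and the OR is computed with concurrent writes in $O(1)$ time on CRCW. Compaction by prefix sums costs $O(\log n)$, but it can be avoided by writing each survivor into a slot indexed by its parent block and position, which takes $O(\lambda n^{1+\rho})$ processors and $O(1)$ rounds per level.
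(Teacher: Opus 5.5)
Your recursion, your invariant, and your argument for why discarded blocks are harmless are the same as the paper's. However, the counting step of your induction does not prove the invariant you state.

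Applying \Cref{lem:submatrices-complexity} to a single parent $B\in\mathcal{S}_{i-1}$ bounds the relevant children of $B$ by $4\lambda n^{\rho}$, not by $n^{\rho}$. Multiplying by $|\mathcal{S}_{i-1}|$ therefore gives only $|\mathcal{S}_i|\le 16\lambda^2 n^{i\rho}$. Iterating the per-parent bound yields $(4\lambda)^{i}n^{i\rho}$, i.e.\ $(4\lambda)^{\Theta(1/\rho)}n^{1-\rho}$ blocks after the last round, instead of the explicit $4\lambda n^{1-\rho}$ in the statement. The same blow-up enters your per-round work estimate, since that estimate uses $|\mathcal{S}_{i-1}|\le 4\lambda n^{(i-1)\rho}$.

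The missing idea is to apply \Cref{lem:submatrices-complexity} to all of $M$ with $k=n^{i\rho}$. This is what the paper means by ``dividing each submatrix again is as if we applied the lemma with $k^2=n^{2\rho}$''. The children of the blocks in $\mathcal{S}_{i-1}$ are blocks of this global partition, and each block you keep was checked to have a 1-entry on its own boundary. So $\mathcal{S}_i$ is a subset of the relevant blocks of the global partition. Their boundary cells lie in $2n^{i\rho}$ rows and $2n^{i\rho}$ columns of $M$, so there are at most $4\lambda n^{i\rho}$ of them, regardless of how survivors are distributed among parents. With this, $|\mathcal{S}_{i-1}|\cdot 4n^{\rho}\cdot n^{1-(i-1)\rho}\le 16\lambda n^{1+\rho}$, as you need.

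Your round accounting is also left as a hedge rather than an argument. You say routing through \Cref{lem:aggregate} takes $O(1)$ rounds, but that lemma costs $O(1/\rho)$. A fan-in-$n^{\rho}$ OR-tree over a sub-block boundary also has depth up to $\Theta(1/\rho)$. Your fallback ``accept the bound'' therefore gives $O(1/\rho^2)$ rounds overall.

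The per-parent bound is the right tool for the OR. All sub-blocks of one parent $B$ have at most $4\lambda n^{\rho}$ boundary 1-entries in total. Each machine can therefore forward only the 1-entries it finds, tagged with their sub-block, to a leader machine for $B$. That leader receives $O(\lambda n^{\rho})$ words in one round and then knows $B$'s surviving children.

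Your PRAM shortcut (a slot indexed by parent and grid position) does not avoid compaction. The slot range grows by a factor $n^{2\rho}$ per level, so allocating processors per slot in the next level exceeds $O(\lambda n^{1+\rho})$.

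That said, the paper's own proof does not address any of these implementation issues. It simply charges one round of boundary checks per level, and $O(n^{\rho})$ processors per MPC machine in PRAM. Relative to the paper, the essential gap in your proposal is the counting step above.
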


\subsection{The decision problem}
We now have all the building blocks that we need in order to present an approximate decision algorithm. We follow the same approach as in~\cite{DHW12}.

Given two polygonal curves $P$ and $Q$, each of length $n$, and parameters $\eps,\delta>0$ and $\alpha > 1$, consider the following (sequential) algorithm:
\begin{enumerate}
    \item Let $\mu=\frac{\eps\delta}{4\alpha}$. 
    \item Compute $\alpha$-relaxed $\mu$-simplifications for $P$ and $Q$, and denote them by $P^\mu$ and $Q^\mu$, respectively.
    \item If $\dfd(P^\mu,Q^\mu)\le (1+\frac{\eps}{2})\delta$, return YES, and otherwise return NO.
\end{enumerate}
\begin{claim}
    If the algorithm above returns YES, then $\dfd(P,Q)\le (1+\eps)\delta$, and if the algorithm returns NO, then $\dfd(P,Q)> \delta$.
\end{claim}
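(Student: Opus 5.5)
The plan is to use the triangle inequality for the discrete \frechet\ distance together with the small-distance property (\ref{property:SmallDistance}) of relaxed simplifications. The key intermediate fact is that if $P^\mu=(p_{i_1},\dots,p_{i_k})$ is an $\alpha$-relaxed $\mu$-simplification of $P$ with partition $Q_1,\dots,Q_k$, then $\dfd(P,P^\mu)\le\alpha\mu$. To prove this, walk through the partition. While the frog on $P$ traverses the vertices of $Q_t$ (a consecutive subcurve), the frog on $P^\mu$ stays at $p_{i_t}$. When the first frog moves to the first vertex of $Q_{t+1}$, the second frog simultaneously hops to $p_{i_{t+1}}$. This is a valid walk: it starts at $(1,1)$ because $p_{i_1}\in Q_1$ forces $Q_1$ to start at $p_1$, and it ends at the last pair. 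Every pair it visits has distance at most $\alpha\mu$ by property (\ref{property:SmallDistance}). With $\mu=\frac{\eps\delta}{4\alpha}$ this gives $\dfd(P,P^\mu)\le\eps\delta/4$, and likewise $\dfd(Q,Q^\mu)\le\eps\delta/4$.

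Next, I would use the fact that DFD satisfies the triangle inequality. The standard argument composes two walks: from optimal walks for $(A,B)$ and $(B,C)$, build a coupling for $(A,C)$ by merging the two walks along the common index on $B$. Any pair $(a_i,c_l)$ in the merged walk shares some $b_j$ with both an $(A,B)$ pair and a $(B,C)$ pair, so its distance is at most the sum of the two costs. I expect this merging to be the only real technical point, and I would simply cite it as a well-known fact. The care needed is to ensure the merged sequence advances monotonically in the allowed steps, which follows by ordering the pairs by the $B$-index and interleaving.

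Both directions then follow. If the algorithm returns YES, then
\[\dfd(P,Q)\le \dfd(P,P^\mu)+\dfd(P^\mu,Q^\mu)+\dfd(Q^\mu,Q)\le \tfrac{\eps\delta}{4}+(1+\tfrac{\eps}{2})\delta+\tfrac{\eps\delta}{4}=(1+\eps)\delta.\]
If it returns NO, then $\dfd(P^\mu,Q^\mu)>(1+\frac{\eps}{2})\delta$. Suppose for contradiction that $\dfd(P,Q)\le\delta$. Then
\[\dfd(P^\mu,Q^\mu)\le \tfrac{\eps\delta}{4}+\delta+\tfrac{\eps\delta}{4}=(1+\tfrac{\eps}{2})\delta,\]
which is a contradiction, so $\dfd(P,Q)>\delta$. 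Note that the long-edges property (\ref{property:LongEdges}) is not needed for correctness; it matters only for the complexity bound of \Cref{lem:matrix-complexity}.
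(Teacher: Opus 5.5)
Your proof is correct and matches the paper's argument: bound $\dfd(P,P^\mu)$ and $\dfd(Q,Q^\mu)$ by $\alpha\mu=\eps\delta/4$ using the small-distance property, then apply the triangle inequality in both directions. Your contradiction in the NO case is just the reverse triangle inequality the paper uses, and you additionally spell out the walk witnessing $\dfd(P,P^\mu)\le\alpha\mu$ and the walk-composition proof of the triangle inequality, both of which the paper takes for granted.
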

\begin{proof}
Notice that $2\alpha\mu=\frac{\eps\delta}{2}$. By the triangle inequality, if $\dfd(P^\mu,Q^\mu)\le (1+\frac{\eps}{2})\delta$ 
then 
$$\dfd(P,Q)\le \dfd(P,P^\mu)+\dfd(P^\mu,Q^\mu)+\dfd(Q,Q^\mu)\le \alpha\mu+(1+\frac{\eps}{2})\delta+\alpha\mu=(1+\eps)\delta.$$
And, if $\dfd(P^\mu,Q^\mu) > (1+\frac{\eps}{2})\delta$ then
$$\dfd(P,Q)> \dfd(P^\mu,Q^\mu)-\dfd(P,P^\mu)-\dfd(Q,Q^\mu)> (1+\frac{\eps}{2})\delta-\alpha\mu-\alpha\mu=\delta.$$
\end{proof}

Note that if $\dfd(P,Q)\in(\delta,(1+\eps)\delta]$ then the algorithm may return either YES or NO. In ~\cite{DHW12}, the above \textbf{approximate decider} is then turned into a so-called \textbf{``fuzzy'' decider}, by slightly adjusting $\eps$ and $\delta$ and calling the above algorithm (at most twice) with different parameters. This fuzzy decider can then be used in a binary search algorithm. Below, we show how to implement the approximate decider in parallel.
Set $\alpha=2/\rho+1$, and $\mu=\frac{\eps\delta}{4\alpha}$.

\oldparagraph{Step 1: Compute simplifications.} First, compute $\alpha$-relaxed $\mu$-simplifications $P'$ and $Q'$ of $P$ and $Q$, respectively, using the algorithm from \Cref{thm:MPS-simp}.
Let $M=M_{(1+\frac{\eps}{2})\delta}$ be the $n\times n$ binary matrix that corresponds to $P'$ and $Q'$. Then by \Cref{lem:matrix-complexity}, the number of 1-entries in each row and column of $M$ is $\lambda=O(\frac{\alpha^2c}{\eps})$.

\oldparagraph{Step 2: Find the relevant submatrices.} Then, run the algorithm from \Cref{thm:relevant-submatrices}, which computes a set $S$ of at most $4\lambda n^{1-\rho}$ relevant submatrices.

\oldparagraph{Step 3: Compute the reachability information in relevant submatrices.} Next, compute the reachability information for each relevant submatrix from $S$ on a single machine:
Each relevant submatrix $M'\in S$ has $n^\rho$ rows and columns, and at most $4\lambda$ 1-entries on its boundary. The machine only gets the vertices that correspond to these rows and columns, so the space remains $n^\rho$. It can then construct $M'$ explicitly, and for every 1-entry on the boundary of $M'$, compute and store the set of 1-entries on the boundary of $M'$ that are reachable from it. Since there are at most $4\lambda$ 1-entries on the boundary of $M'$, the total amount of information stored on a single machine after this process is $O(\lambda^2)$.

\oldparagraph{Step 4: Aggregate the reachability information.} We now aggregate this information from all the relevant submatrices in $S$ as follows. Consider a partition of $M$ into submatrices of size $n^{\rho+\rho/2}\times n^{\rho+\rho/2}$. By \Cref{thm:relevant-submatrices}, the number of relevant submatrices from this larger partition (that may contain some relevant submatrices from $S$) is at most $4\lambda n^{1-3/2\rho}$. We compute those relevant submatrices using the algorithm from \Cref{thm:relevant-submatrices}. Notice that this algorithm produces a hierarchy of relevant submatrices of different sizes, where in each level of the hierarchy the number of relevant submatrices is proportional to the number of rows and columns that define the partition.\omrit{very hand wavy, sorry, improve for next version} We can thus use this hierarchy of relevant submatrices to recursively aggregate the reachability information from relevant submatrices in $S$.

In the first level of the hierarchy, each submatrix $M_i$ has size $n^{\rho+\rho/2}\times n^{\rho+\rho/2}$, and thus contains a subset $S_i\subseteq S$ of at most $n^\rho$ relevant submatrices. We then send the reachability information of the matrices from $S_i$ to a single machine, which can compute the reachability information for the boundary of the larger submatrix $M_i$. Recall that there are again only $4\lambda$ 1-entries on the boundary of $M_i$, so we are again left with a total of $O(\lambda^2)$ space for the reachability information in each submatrix $M_i$.
We continue this process recursively, using the matrices of size $n^{2\rho}\times n^{2\rho}$, so each of them contains at most $n^\rho$ submatrices of size $n^{\rho+\rho/2}\times n^{\rho+\rho/2}$ from the first round.\omrit{TODO for the next version, elaborate} In the last round, we will have all the reachability information in a single machine, which can decide if there exists a path from $M[1,1]$ to $M[n,n]$. The total number of rounds will then be $O(1/\rho)$.

\oldparagraph{}After turning the approximate decider into a fuzzy decider (as in~\cite{DHW12}), we obtain the following theorem.
\begin{theorem}
\label{thm:realistic_mpc_descision}
    Let $\rho < 1,\eps > 0$ be constants, and assume that $\frac{\log{c}}{\log{n}}$ is sufficiently small compared to $\rho,\eps$. Given two $c$-packed curves $P$ and $Q$ and a threshold $\delta \geq 0$, there is an algorithm in the MPC model with space bound $O(n^\rho)$, total space $O(n^{1+\rho})$, and $O(1/\rho^2)$ rounds, that returns one of the following outputs: (i) a $(1+\eps)$-approximation of $\dfd(P,Q)$, (ii) $\dfd(P,Q)<\delta$, or (iii) $\dfd(P,Q)>\delta$. The answer returned is correct.
\end{theorem}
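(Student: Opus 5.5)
The plan is to build the fuzzy decider from the approximate decider described above, following \cite{DHW12}, and then bound its parallel cost. The cost bound comes from the four steps: Step 1 (\Cref{thm:MPS-simp}), Steps 2 and 4 (\Cref{thm:relevant-submatrices}), and Step 3 (purely local computation).

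\textbf{Correctness of the approximate decider.} Fix $\alpha=2/\rho+1$ and $\mu=\frac{\eps\delta}{4\alpha}$. By the claim above, a YES answer certifies $\dfd(P,Q)\le(1+\eps)\delta$ and a NO answer certifies $\dfd(P,Q)>\delta$, provided Steps 2--4 decide reachability in $M$ exactly. I would argue exactness as follows. Any monotone path of 1-entries from $M[1,1]$ to $M[n,n]$ crosses the tiled submatrices. It enters and leaves each tile through boundary cells, and these are 1-entries. So every tile the path visits is relevant. By \Cref{thm:relevant-submatrices}, a relevant tile that is dropped lies inside some non-relevant larger submatrix. The path cannot enter a non-relevant larger submatrix, since it would have to cross that submatrix's boundary at a 1-entry. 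Hence no tile the path uses is discarded.

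The per-tile boundary-to-boundary reachability computed in Step 3 composes correctly. Adjacent tiles share a boundary along a single grid step, so a path exiting one tile enters the next at a boundary cell. Merging at each level of the hierarchy in Step 4 is then transitive closure on at most $O(\lambda)\cdot n^{\rho}$ boundary 1-entries, which fits on one machine. Here I would use that $\lambda=O(\alpha^2c/\eps)=O(c/(\rho^2\eps))$ is at most $n^{o(\rho)}$ by the assumption on $\frac{\log c}{\log n}$. This is what keeps the $\lambda$ factors in space and total space absorbed in $O(n^\rho)$ and $O(n^{1+\rho})$ after rescaling $\rho$ by a constant.

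\textbf{Fuzzy decider.} Following \cite{DHW12}, I would call the approximate decider with parameters $(\delta,\eps')$ and $((1+\eps')^{-1}\delta,\eps')$, where $\eps'=\Theta(\eps)$.
\begin{itemize}
\item If the first call says NO, output $\dfd>\delta$.
\item If the second call says YES, then $\dfd\le\delta$; I would refine this to strict $<$ by slightly shrinking the parameter.
\item Otherwise $\dfd\in\big((1+\eps')^{-1}\delta,(1+\eps')\delta\big]$, so $\delta$ itself is a $(1+\eps)$-approximation.
\end{itemize}
The two calls can run in parallel, which at most doubles the space.

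\textbf{Round count.} This is the step I expect to be the main obstacle.
\begin{itemize}
\item Step 1 takes $O(1/\rho)$ rounds by \Cref{thm:MPS-simp}.
\item Step 2 takes $O(1/\rho)$ rounds by \Cref{thm:relevant-submatrices}.
\item Step 3 takes $O(1)$ rounds.
\item Step 4 climbs a hierarchy whose side length grows by a factor $n^{\rho/2}$ per level, from $n^{\rho}$ to $n$, so it has $O(1/\rho)$ levels.
\end{itemize}
Computing the relevant submatrices at each level naively costs another $O(1/\rho)$ rounds via \Cref{thm:relevant-submatrices}. I would not try to reuse a single hierarchy. Instead I would accept this cost, which gives $O(1/\rho)\cdot O(1/\rho)=O(1/\rho^2)$ rounds overall and matches the claimed bound.

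I also need to check that each merged supertile contains at most $n^{\rho}$ child tiles whose data fit on one machine. A supertile of side $n^{\rho}\cdot n^{\rho/2}$ has $n^{\rho}$ children. Only relevant children carry data, each of size $O(\lambda^2)$, so each machine holds $O(\lambda^2 n^{\rho})$, which is again absorbed by rescaling $\rho$. Finally, total space per level is $O(\lambda n^{1+\rho})$ by \Cref{lem:submatrices-complexity}, which gives the stated $O(n^{1+\rho})$.
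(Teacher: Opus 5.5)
Your proposal is correct and follows essentially the same route as the paper. Like the paper, you build the approximate decider from Steps 1--4: relaxed simplification via \Cref{thm:MPS-simp}, relevant tiles via \Cref{thm:relevant-submatrices}, local boundary-to-boundary reachability, and hierarchical aggregation over supertiles whose side grows by a factor $n^{\rho/2}$, with the $\lambda$ factors absorbed by rescaling $\rho$. You then turn it into a fuzzy decider as in \cite{DHW12}. The only small difference is that you re-run the relevant-submatrix computation at each of the $O(1/\rho)$ aggregation levels instead of reusing one hierarchy; this makes the stated $O(1/\rho^2)$ round bound explicit and leaves correctness intact, since any discarded tile lies inside a non-relevant submatrix that no path from $M[1,1]$ can enter.
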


In the PRAM model, we adopt a similar approach, as described in \Cref{sec:decision_PRAM}. 

\begin{theorem}
\label{thm:realistic_pram_descision}
    Given two $c$-packed curves $P$ and $Q$, constant parameters $\eps>0$, $\delta>0$, and a value $\Theta(\frac{\log{c}}{\log{n}}) \leq \sigma<1$, there is an algorithm in the PRAM model with $O(\log{n} + 1/\sigma^2)$ rounds and $\Tilde{O}(n^{1+\sigma})$ total work, that returns one of the following outputs: (i) a $(1+\eps)$-approximation of $\dfd(P,Q)$, (ii) $\dfd(P,Q)<\delta$, or (iii) $\dfd(P,Q)>\delta$. The answer returned is correct.
\end{theorem}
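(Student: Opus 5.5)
We mirror the MPC decision procedure, replacing each machine's local computation with a batch of processors and using the PRAM reachability algorithm where the MPC version computes sequentially inside a machine. Set $\rho:=\sigma$ and $\alpha=O(\log n)$. We work with the $\alpha$-relaxed $\mu$-simplification that the PRAM part of \Cref{thm:MPS-simp} produces in $O(\log n)$ rounds and $O(n\log n)$ work, with $\mu=\frac{\eps\delta}{4\alpha}$. By \Cref{lem:matrix-complexity}, every row and column of $M=M_{(1+\eps/2)\delta}$ then has $\lambda=O(\alpha^2c/\eps)=\Tilde{O}(c)$ 1-entries. This polylogarithmic loss in $\lambda$ is absorbed by the $\Tilde{O}$ in the work bound. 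The approximate-decider claim already proved applies verbatim with this $\alpha$, so correctness reduces to deciding reachability in $M$.

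The rounds are spent as follows.
\begin{enumerate}
\item We run the PRAM version of \Cref{thm:relevant-submatrices}. This gives at most $4\lambda n^{1-\sigma}$ relevant $n^\sigma\times n^\sigma$ submatrices in $O(1/\sigma)$ rounds with $O(\lambda n^{1+\sigma})$ processors.
\item For each relevant submatrix $M'$, we compute boundary-to-boundary reachability among its at most $4\lambda$ boundary 1-entries. We do not do this by sequential DP. Instead we use \Cref{thm:exact-dfd}-style strip merging or the $O(\log n)$-round reachability of \Cref{obs:trivial_algs}(b), restricted to the $O(\lambda n^\sigma)$ 1-entries of $M'$ (each row has at most $\lambda$ ones). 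Run on the sparse graph, this costs $\Tilde{O}((\lambda n^{\sigma})^{c'})$ work per block. To keep the total work at $\Tilde{O}(n^{1+\sigma})$ we would rather use a finer block size $n^{\sigma/3}$, so that the polynomial blow-up stays within $n^\sigma$ per relevant block. The number of relevant blocks is near-linear in $n^{1-\sigma/3}$, so each block affords $n^{\sigma}\cdot n^{\sigma/3}$ work, which the polynomial cost fits inside.
\item We aggregate bottom-up along the hierarchy produced in step 1, exactly as in Step 4 of the MPC algorithm. Each parent block combines at most $n^{\sigma}$ children. At each child, the state is the $O(\lambda^2)$ boundary reachability pairs. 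Combining children is a reachability computation on a graph with $\Tilde{O}(\lambda n^\sigma)$ vertices, namely the children's boundary 1-entries with edges inside children and between adjacent boundary cells. This is again solved by PRAM transitive closure in $O(\log n)$ rounds. A better choice is constant rounds via the brute-force or Boolean matrix squaring approach with a fixed number of squarings, exploiting the $O(1/\sigma)$ branching depth.
\end{enumerate}

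The hierarchy has $O(1/\sigma)$ levels. Making each level cost $O(1/\sigma)$ rounds gives $O(1/\sigma^2)$ rounds, and adding the $O(\log n)$ for the simplification gives $O(\log n+1/\sigma^2)$. Finally, we convert to the fuzzy decider as in~\cite{DHW12}, with at most two calls, which yields outputs (i)–(iii).

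The main obstacle is step 3. We must show that merging $n^\sigma$ children with $O(\lambda^2)$ summaries each costs only $O(1/\sigma)$ rounds, rather than $O(\log n)$ per level (which would give $\log n/\sigma$), while keeping work $\Tilde{O}(n^{\sigma}\cdot\mathrm{poly}(\lambda))$ per parent. My first attempt would exploit the grid structure. Children form an $n^{\sigma/2}\times n^{\sigma/2}$ arrangement, and we merge them in $O(1/\sigma)$ sub-levels, each combining only $O(1)$ or $n^{\sigma'}$ children. Each sub-level is done by Boolean matrix products of $\lambda\times\lambda$ reachability matrices, computable in $O(1)$ rounds on a CRCW PRAM with $\mathrm{poly}(\lambda)$ work, because the OR of $\lambda$ bits is $O(1)$ on CRCW. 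The condition $\sigma\ge\Theta(\log c/\log n)$ is used exactly here, to ensure $\mathrm{poly}(\lambda)\le n^{O(\sigma)}$ so that these products fit the work budget.
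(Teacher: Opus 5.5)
Your overall plan is the paper's: PRAM simplification with $\alpha=O(\log n)$, relevant submatrices via \Cref{thm:relevant-submatrices}, boundary-to-boundary reachability per relevant block, bottom-up aggregation over an $O(1/\sigma)$-depth hierarchy, and the fuzzy decider. However, step 3 as written leaves the round bound unproved, and the obstacle you identify there comes from a miscount. The transitive-closure algorithm of \cite{J92} runs in $O(\log|V|)$ rounds, not $O(\log n)$. At every level of the hierarchy the graph you solve has only $|V|=\Tilde{O}(\lambda n^{\sigma})$ vertices, namely the boundary 1-entries of at most $n^\sigma$ relevant children. So a level costs $O(\sigma\log n+\log\lambda)$ rounds, and summing over the $O(1/\sigma)$ levels gives $O(\log n+\log\lambda/\sigma)$. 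With $\eps$ constant, $\log\lambda=O(\log c+\log\log n)$. The hypothesis $\sigma\ge\Theta(\log c/\log n)$ gives $\log c/\sigma=O(\log n)$. Also $\log\log n/\sigma\le(\log\log n)^2+1/\sigma^2$. Hence the total is within $O(\log n+1/\sigma^2)$. This is exactly the paper's argument: run directed reachability on each relevant block, recurse $O(1/\sigma)$ times, and pay only $O(\sigma\log n)$ rounds per level. No level needs to be $O(1/\sigma)$ rounds.

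The replacement you sketch does not deliver $O(1/\sigma)$ rounds per level. Suppose each sub-level merges $O(1)$ children with a constant number of $\lambda\times\lambda$ Boolean products. Then reducing $n^{\sigma}$ children to one parent takes $\Theta(\sigma\log n)$ sub-levels, not $O(1/\sigma)$. Suppose instead a sub-level merges $n^{\sigma'}$ children in $O(1)$ rounds. Then you must compose reachability along monotone paths that can cross $\Theta(n^{\sigma'})$ children, i.e., an iterated Boolean product of that many matrices. A constant number of pairwise products cannot do this, brute force over paths is exponential, and you give no constant-depth method. The first variant ($O(1)$ rounds per sub-level, $O(\sigma\log n)$ sub-levels) would in fact suffice, for the same reason as above. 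So drop the $O(1/\sigma)$-per-level target and use the $O(\log|V|)$ accounting. Your steps 1, 2 and the fuzzy-decider conversion then go through as stated.
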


\subsection{Optimization}
Next, we turn our attention to finding a $(1+\eps)$-approximation of the DFD. Unlike our algorithm for general curves, computing all $\Theta(n^2)$ distances and performing a multi-pivot search is too costly with regard to the total space and total work we aim towards. Instead, we first compute a set $D$ of size $|D| = \Tilde{O}(nd)$, that contains a good approximation of all possible distance values between pairs of points on the curves. More accurately, for every pair of points $x,y$ on the two curves, there exists a value $r \in D$ for which $r \leq \|x-y\|_2 \leq (1+\eps)r$. Given such a set $D$, we can perform a multi-pivot search and find a $(1+\eps)$-approximation of the DFD, using the algorithm for the decision version. In the spirit of \cite{FF23} (see Lemma 8.1), we first show how to construct such a set $D$ for points on the line, i.e. $d = 1$ (with additional relaxed guarantees), and then reduce the higher dimension case to the single dimension case.

We conclude with the following theorem, and the full proof is given in \Cref{sec:optimization_missing_proofs}.

\thmCPacked*

\old{
\begin{theorem}
\label{thm:main_finding_cpack_MPC}
    Let $\rho < 1,\eps > 0$, be constants, and assume that $\frac{\log{c}}{\log{n}}$ is smaller than some constant. Given two $c$-packed curves of length $n$, a $(1+\eps)$-approximation of their DFD can be found in the MPC model in $O(1/\rho^3)$ rounds, using $n^{\rho}$ space per machine, and using $\Tilde{O}(n^{1+\rho})$ total space. 
\end{theorem}

\begin{theorem}
\label{thm:main_finding_cpack_PRAM}
    For any constant $\eps > 0$ and value $\Theta(\frac{\log{c}}{\log{n}}) \leq \sigma < 1$, given two $c$-packed curves of length $n$, a $(1+\eps)$-approximation of their DFD can be found in the PRAM model in $O(\log{n}+1/\sigma^3)$ rounds, and $\Tilde{O}(n^{1+\sigma})$ total work.
    
\end{theorem}
}

\section*{Conclusion}
In this paper, we provide the first parallel algorithms for the discrete \frechet\ distance in both the MPC and PRAM models. In addition, we present more efficient parallel algorithms for $c$-packed curves, in both models. Our algorithms for $c$-packed curves also require the implementation of a simplification algorithm in parallel, which may be of independent interest. Our algorithms are close to optimal in total work, given the SETH-based lower bounds in $\cite{Bringmann14}$. 

As mentioned in the introduction, Alt and Godau~\cite{AG95} presented an algorithm for computing the continuous \frechet\ distance between closed curves, which uses a similar divide-and-conquer approach for the decision procedure: they split the free space diagram into smaller diagrams, on which they recursively compute the reachability information for the entire boundary. They do this by partitioning the boundary into intervals, and computing the lowest and highest point on the opposite side of the boundary which is reachable from each interval.
We remark that a slightly simpler approach can be taken in the case of deciding the continuous \frechet\ distance, by splitting the diagram into strips, as we do for the discrete case. Then, we only need to recursively compute the reachability information for the right and left boundary of each strip, by dividing the boundary into intervals as in~\cite{AG95}. The parallel algorithms in the PRAM and MPC model are then similar to the algorithms that we describe for deciding the discrete version.

Clearly, as this is the first paper on this problem, many questions remain open. 
We state two of them here, which we believe are the natural extension of our work. 
The first open question is computing the (discrete) \frechet\ distance between two curves such that only one of them is $c$-packed. This problem was considered in the sequential model by van der Hoog, Rotenberg and Wong~\cite{HRW24} for the discrete distance, and by Gudmundsson, Mai and Wong~\cite{GMW24} for the continuous distance. In both cases there are near-linear time algorithms, and it will be interesting to see if we can achieve efficient parallel algorithms in this case.

The second open question is whether the discrete \frechet\ distance for general curves can be computed in $O(1)$ MPC rounds. One potential avenue is to adapt the approach of \cite{CZ24}, which shows constant round algorithms for dynamic programming problems by reducing them to graph problems in planar graphs with known node embeddings. However, the techniques of \cite{CZ24} do not seem to readily extend to the type of graph problem our dynamic program corresponds to, namely to either directed reachability, or an exact undirected shortest path problem.

\section*{Acknowledgments}
Omrit Filtser is partially supported by the Israel Science Foundation
    (grant No.~2135/24). Arnold Filtser is partially supported by the Israel Science Foundation
    (grant No.~1042/22). Orr Fischer is partially supported by the Israel Science Foundation (grant No.~1042/22 and~800/22).
    
\oldparagraph{AI Disclosure:} All ideas contained in the paper are of the authors. AI was only used for proofreading and minor editing.

\bibliographystyle{alpha}
\bibliography{refs}

\appendix

\input{sec_computingDFD-apx}

\section{Simplifying a curve in parallel - missing proofs}
\label{sec:app_simple}
\subsection{Simplification in the MPC model}
\label{sec:app_simple_mpc}
In this section, we generalize our approach from \Cref{sec:simp_mpc} such that the space bound for each machine is $O(n^\rho)$, for some parameter $\rho < 1$. We assume w.l.o.g. that $1/\rho$ is an integer, reducing $\rho$ if necessary. The following definition will be useful to us in the description of the algorithm.

\begin{definition}\label{def:pairenc}
    Given a sequence of points $C = (v'_1,\dots,v'_k)$, we define the pair encoding of $C$ as the sequence of pairs $\left((v'_1,v'_2),(v'_2,v'_3),\dots,(v'_{k-1},v'_k)\right)$ if $k \geq 2$, and as $v'_1$ if $k = 1$.    
\end{definition}

Next, we describe the algorithm. Consider the following hierarchical partition of the vertices of $P$:

\begin{itemize}
    \item \emph{Level $0$:} The $n$ vertices of $P$.
    \item \emph{Level $1$:} Partition $P$ into $n^{1-\rho}$ consecutive subcurves $P^1_1,\dots,P^1_{n^{1-\rho}}$, each containing $n^\rho$ vertices.
    \item \emph{Level $2$:} Partition $P$ into $n^{1-2\rho}$ consecutive subcurves $P^2_1,\dots,P^2_{n^{1-2\rho}}$, where
    $P^2_i = P^1_{(i-1)n^\rho+1}\circ\cdots\circ P^1_{i\cdot n^\rho}$; that is, each $P^2_i$ is the concatenation of $n^\rho$ consecutive level-$1$ subcurves.
    \item \emph{Level $j \geq 1$:} More generally, partition $P$ into $n^{1-j\rho}$ consecutive subcurves $P^j_1,\dots,P^j_{n^{1-j\rho}}$, each formed by concatenating $n^\rho$ consecutive subcurves from level $j-1$. Hence there are $1/\rho$ levels in total, and level $1/\rho$ consists of the full curve $P$.
\end{itemize}

We initially have $1/\rho$ ``repeated simplification'' rounds (one for each level, starting from level one), followed by a single merge round. In round $j \geq 1$ of the ``repeated simplification'', we only use $n^{1-j\rho}$ machines, where the $i$-th machine is assigned to handle the subcurve that corresponds to $P^j_i$. Each machine $i$ receives a set $\I(P^j_i)=(\I_1,\dots,\I_k)$ of $O(n^\rho)$ elements. For $j = 1$, we set $\I(P_i^1) = P_i^1$, where each vertex is considered by the machine as a ``singleton''. In subsequent rounds the sequence $\I(P_i^j)$ may contain as elements either singleton vertices, or vertex pairs, where all vertices belong to $P^j_i$ and appear in the sequence in an order consistent to that of $P^j_i$. The machine runs a compression procedure, where it computes a $2\mu$-simplification for each maximal consecutive subsequence of singletons, and replaces it with the pair encoding (\Cref{def:pairenc}) of the simplification. We denote the modified sequence $\I'(P^j_i)$, and note that it does not have two consecutive singletons. 

Next, we define $\I(P^{j+1}_{i})$ from $\I'(P^j_{(i-1)n^\rho+1}),\dots,\I'(P^j_{i \cdot n^\rho})$ as follows. For every $\ell \in [(i-1)n^\rho+1,i \cdot n^\rho]$, if $\I'(P^j_{\ell})$ has a single element $\I'(P^j_{\ell}) = \{\I_1\}$, then we add $\I_1$ to $\I(P^{j+1}_{i})$. 

Else, if $\I'(P^j_\ell) = (\I_1,\dots,\I_k)$ has at least two elements, then we add the leftmost and rightmost elements $\I_1,\I_k$ to $\I(P^{j+1}_{i})$, and in the case that $\I_1,\I_k$ are both singletons, we additionally add $\I_2$ to $\I(P^{j+1}_{i})$. We note that $\I_2$ is necessarily a pair since $\I'(P^j_\ell)$ does not contain consecutive singletons (as noted earlier).\footnote{We took here an arbitrary pair, because pairs are ignored and are only used as ``breakpoints'' between sequences of singletons.} As the final step in round $j$, the machine locally stores all elements in $\I'(P^j_\ell)$ not sent to $\I(P^{j+1}_i)$ for the final merge round. We say that these elements are added to $P'$ in this step. 

We notice that at most three elements are added to $\I(P^{j+1}_{i})$ from each machine responsible for $P^j_{\ell}$ for $\ell \in [(i-1)n^\rho+1,i \cdot n^\rho]$. In a single round, each machine responsible for $P^j_{\ell}$ sends the relevant elements to the machine responsible for $P^{j+1}_{i}$. Therefore, the machines of the next level receive $O(n^\rho)$ elements in total. We repeat until we reach the final level, i.e. $1/\rho$ rounds. In the final round $1/\rho$, we have a single machine holding all remaining elements. The single active machine computes a $2\mu$-simplification for each maximal consecutive singleton sequence and stores all elements in $\I'(P^{1/\rho}_1)$ to be in $P'$. 

In the merge round, we define the sequence $\I=(\I_1,\dots,\I_k)$ of $O(n)$ elements (pairs and singletons), where $\I$ consists of the elements stored for $P'$ in the simplification rounds. We note that there are no two consecutive singletons in $\I$. Using the machines, we run the single-round merge procedure as described in \Cref{sec:simp_mpc}. Namely, we do the following. If the first element of $\I$ is a singleton $\I_1 = v_1$, and $\I_2 = (v_2,v_3)$ where $\|v_1-v_2\| \leq \mu$, then we remove $v_1$ from $P'$. We then perform the following on each pair element $\I_i=(v_1,v_2)$ of $\I$ for $i=1,\dots,|\I|-1$:
\begin{enumerate}
    \item If $\I_{i+1}$ is a pair $(v_3,v_4)$:
    \begin{itemize}
        \item If $\|v_2-v_3\|\le\mu$: delete $v_3$ from the pair.
    \end{itemize}
    \item Else, if $\I_{i+1}$ is a singleton $v_3$ (then $\I_{i+2}$, if exists, is a pair $(v_4,v_5)$):
    \begin{itemize}
        \item If $\|v_2-v_3\|\le\mu$: delete $v_3$. If $\|v_2-v_4\|\le\mu$: delete $v_4$ from the pair.
        \item Else ($\|v_2-v_3\|>\mu$), if $\|v_3-v_4\|\le\mu$: delete $v_4$ from the pair.
    \end{itemize}
\end{enumerate}

We note that this step can be easily implemented using the sorting procedure of MPC in $O(1)$ rounds. This concludes the description of the algorithm. Next, we show that the resulting sub-sequence $P'$ after the merge procedure is a $(2/\rho+1)$-relaxed $\mu$-simplification of $P$.

Recall that in an $\alpha$-simplification of a curve $R$, we get a subset of vertices $r_1,\dots,r_k$ such that $R$ can be partitioned into disjoint subcurves $R_1,\dots,R_k$, with $r_i \in R_i$, satisfying the small distance and long edge properties. We say that $v$ represents $u$ in a simplification if $v$ remains at the end of the simplification and the subcurve $R_i$ corresponding to $v$ contains $u$.

\begin{claim}
\label{clm:app_simp_mpc_many_prop}
    The following properties hold at the start of the merge process.
    \begin{enumerate}[(1)]
        \item \label{itm:app_simp_long_edge} If a pair $\I = (v_1,v_2)$ is added to $P'$ at any round, then $\|v_1-v_2\| > 2\mu$.
        \item \label{itm:app_simp_short_edge} Let $p_1,\dots,p_k \in P$ be the vertices participating in elements of $P'$, ordered by their order in $P$. There exists a partition of $P$ into disjoint subcurves $Q_1,\dots,Q_k$ such that for every $w \in Q_i$, we have $\|p_i-w\| \leq (2/\rho) \mu$.
        
    \end{enumerate}
\end{claim}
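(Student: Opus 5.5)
Both properties will be proved by induction on the round index $j$, tracking an invariant on the sequences $\I(P^j_i)$ and on the elements already stored into $P'$.

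\textbf{Property (\ref{itm:app_simp_long_edge}).} Pairs are created only in the compression procedure, as the pair encoding (\Cref{def:pairenc}) of a $2\mu$-simplification of a maximal run of singletons. By the definition of the sequential algorithm of \cite{DHW12}, every edge of a $2\mu$-simplification has length larger than $2\mu$, so each created pair $(v'_t,v'_{t+1})$ satisfies $\|v'_t-v'_{t+1}\|>2\mu$. After creation, pairs are never modified before the merge round. They are only forwarded to the next level unchanged or stored in $P'$. Hence every pair added to $P'$ satisfies (\ref{itm:app_simp_long_edge}).

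\textbf{Property (\ref{itm:app_simp_short_edge}).} Here I would prove a stronger, level-indexed invariant. At the start of round $j$, assign every vertex $u$ of $P$ to a representative $r(u)$. This representative is either a vertex that is a singleton in some $\I(P^{j'}_\ell)$ for the current level, or a vertex belonging to some pair (already stored in $P'$ or still travelling). The assignment must satisfy:
\begin{enumerate}[(a)]
\item the set of vertices represented by a given representative is a consecutive subcurve containing it;
\item $\|r(u)-u\|\le 2(j-1)\mu$.
\end{enumerate}
The base case $j=1$ is trivial: every vertex represents itself.

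In round $j$, a maximal run of singletons $s_1,\dots,s_m$ is replaced by a $2\mu$-simplification. That simplification comes with a partition of the run into consecutive groups, each within distance $2\mu$ of its chosen vertex. I compose: every $u$ represented by some $s_a$ is reassigned to the chosen vertex $c$ of the group of $s_a$. By the triangle inequality,
\[
\|c-u\|\le \|c-s_a\|+\|s_a-u\|\le 2\mu+2(j-1)\mu = 2j\mu .
\]
Consecutiveness (a) is preserved, because groups of consecutive singletons unite consecutive blocks. Between two singletons of the run there are no pair vertices, since the run is maximal. Vertices belonging to pairs keep what they represent. Elements that are stored into $P'$ rather than forwarded keep their representative sets fixed from then on.

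After the last round $j=1/\rho$, every remaining representative is a vertex of an element of $P'$, and the bound is $2(1/\rho)\mu=(2/\rho)\mu$. The representative sets are consecutive and disjoint and cover $P$, so they give the required partition $Q_1,\dots,Q_k$.

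\textbf{Main obstacle.} The delicate point is the bookkeeping showing that the representative sets of the final vertices $p_1,\dots,p_k$ are consecutive and ordered by $p_i$. In particular, I must check that vertices not forwarded to the next level, for example interior elements of $\I'(P^j_\ell)$ stored in $P'$, do not later get absorbed into a simplification of a run of singletons spanning across them. This holds because each block forwards its extremal elements, plus the pair $\I_2$ whenever both extremes are singletons. Therefore any stored interior singleton is separated from later singleton runs by a forwarded pair. So runs at the next level never jump over stored vertices, and the order of the vertices of $P'$ matches the order of their represented blocks.
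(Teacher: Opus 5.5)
Your proposal is correct and follows essentially the paper's argument: property (1) is argued identically, and your round-by-round composition of representative maps is the paper's virtual forest in different form (the final representatives are the roots, spans are contiguous, and the $2\mu$ added per level telescopes to $(2/\rho)\mu$). Your explicit check that runs at the next level never straddle stored elements, because each block forwards its extremes plus the pair $\mathcal{I}_2$ when both extremes are singletons, makes precise a step that the paper's ``union of consecutive child spans'' passes over quickly.
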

\begin{proof}
    We show each property holds.
    \begin{enumerate}[(1)]
        \item A pair $(v_1,v_2)$ is only created after $v_1,v_2$ participated in a $2\mu$-simplification process together, and both were not deleted in the simplification. Hence $\|v_1-v_2\| > 2\mu$.
        \item We define for purposes of analysis the following virtual forest. To avoid confusion, we use the term nodes for nodes of the forest, and vertices for parts of the curve $P$.
        
        The forest has $(1/\rho)+1$ levels, numbered $0,\dots,1/\rho$. In level 0, we have each of the $n$ vertices of $P$ as a node, and in level $j \geq 1$ add a node for each of the vertices participating in an element of $\I'(P^j_i)$ that replaced a sequence of singletons following a $2\mu$-simplification. Edges in the forest may only connect consecutive levels, and are described next. 
        
         We connect a node corresponding to vertex $u$ in level $j$ to a node corresponding to vertex $v$ in level $j+1$ if $u,v \in \I(P^{j+1}_i)$ for some $i$, they participate together in a $2\mu$-simplification in the machine associated with $P^{j+1}_i$, and $v$ represents $u$ in this simplification.

         In the algorithm, singletons in $\I(P^j_i)$ must also appear in the previous level (also as a singleton), and every singleton participates in some simplification. Therefore, rooting the forest at its highest level, every tree has its leaves in level~0.

         We next argue that these trees induce a partition of $P$ into disjoint \emph{consecutive} subcurves. For every node $x$ in the forest, define $\mathsf{span}(x)$ to be the set of level-0 leaves in the subtree of $x$. We claim that $\mathsf{span}(x)$ always corresponds to a contiguous subcurve of $P$. This holds for level~0 trivially. For a node $x$ in level $j+1$, all of its children are nodes from level $j$ that lie in the same machine input $\I(P^{j+1}_i)$, and these inputs are ordered consistently with the order of $P$. Moreover, the children of $x$ are exactly the nodes represented by $x$ in a single $2\mu$-simplification applied to a maximal consecutive singleton subsequence. Hence $\mathsf{span}(x)$ is the union of consecutive child spans, and is therefore contiguous. Consequently, the root nodes $r_1,\dots,r_k$ have disjoint spans whose union is all of $P$, and these spans appear along $P$ in order. Let $p_1,\dots,p_k$ be the vertices corresponding to $r_1,\dots,r_k$, and let $Q_i$ be the subcurve of $P$ induced by $\mathsf{span}(r_i)$.

         In every level, we make a $2\mu$-simplification. Hence in total, the distance between $p_i$ and any leaf $w \in Q_i$ in its tree is at most $\|p_i-w\| \leq 2\mu \cdot (1/\rho)$, and the claim follows.   
    \end{enumerate}
\end{proof}

\begin{claim}\label{clm:simplification-many-rounds}
     After the merge round, $P'$ is a $(2/\rho+1)$-relaxed $\mu$-simplification of $P$. 
\end{claim}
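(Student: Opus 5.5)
The proof will mirror that of \Cref{clm:simplification-2-rounds}, with \Cref{clm:app_simp_mpc_many_prop} replacing the fact that each $P'_i$ is a $2$-relaxed $2\mu$-simplification.

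\textbf{Small distance.} Start from the partition $Q_1,\dots,Q_k$ of \Cref{clm:app_simp_mpc_many_prop}(\ref{itm:app_simp_short_edge}). Every surviving vertex $p_i$ is within $(2/\rho)\mu$ of all of $Q_i$. The merge procedure deletes a vertex $u$ only when there is an immediately preceding surviving vertex $v$ with $\|v-u\|\le\mu$. We merge $Q(u)$ into the subcurve of that $v$; contiguity is preserved because $u$ and $v$ are consecutive in $P'$ at that point. By the triangle inequality every $w\in Q(u)$ satisfies $\|v-w\|\le \mu+(2/\rho)\mu$.

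Two cases need a check. First, a deleted $v_4$ may be charged to $v_3$ while $v_3$ is itself deleted in favour of $v_2$. The procedure excludes this: $v_4$ is compared against $v_3$ only when $\|v_2-v_3\|>\mu$, i.e.\ when $v_3$ survives, and otherwise $v_4$ is compared against $v_2$ directly. So there are no chains, and the bound stays $(2/\rho+1)\mu$. Second, when $\I_1=v_1$ is deleted, its subcurve is merged forward into that of $v_2$ rather than backward. This keeps contiguity, and the bound still holds since $\|v_1-v_2\|\le\mu$. This gives property (\ref{property:SmallDistance}) with $\alpha=2/\rho+1$.

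\textbf{Long edges.} Edges inside a pair have length $>2\mu$ by \Cref{clm:app_simp_mpc_many_prop}(\ref{itm:app_simp_long_edge}). Every consecutive-pair edge inside a pair encoding is likewise an edge of a $2\mu$-simplification, hence longer than $2\mu$. It remains to handle edges between consecutive elements of $\I$. Since $\I$ has no two consecutive singletons, each such edge either runs from the last vertex of a pair $\I_i=(v_1,v_2)$ to the next pair or singleton, or starts at the initial singleton $\I_1$. We redo the case analysis of \Cref{clm:simplification-2-rounds}.
\begin{enumerate}
\item If $\I_{i+1}=(v_3,v_4)$ and $v_3$ is deleted, the new edge is $v_2v_4$, and $\|v_2-v_4\|\ge\|v_3-v_4\|-\|v_2-v_3\|>2\mu-\mu$.
\item If $\I_{i+1}=v_3$ is a singleton, the sub-cases are handled the same way, using $\|v_4-v_5\|>2\mu$.
\item If the initial singleton is kept, the first edge is long by the deletion test.
\end{enumerate}

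\textbf{Main obstacle.} Pairs overlap in the pair encoding: $(v_1,v_2),(v_2,v_3),\dots$ share vertices, and deleting ``$v_3$ from the pair'' must not shorten another edge. Only the \emph{first} vertex of a pair following another element can be deleted, and in a pair encoding that vertex is the first vertex of the whole encoded sequence, so no shared vertex is ever deleted. We also need that an element's neighbour in $\I$ is the element stored adjacent to it along $P$, even when elements come from different rounds. This follows from the order-consistency of the spans established in the forest argument of \Cref{clm:app_simp_mpc_many_prop}. Finally, a surviving pair $v_4v_5$ cannot have lost $v_5$ as well, because deletions touch only first coordinates, so the triangle-inequality bounds above remain valid.
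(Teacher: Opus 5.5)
Your argument is the paper's. The small-distance part uses the same charging: a deleted vertex goes to the preceding survivor, the $v_2/v_3/v_4$ rule excludes chains so the charge reaches $p_{i-2}$ at worst, and a deleted initial singleton goes forward to $v_2$. The long-edge part is the same case analysis, using $\|v_3-v_4\|>2\mu$ and the triangle inequality.

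One step is wrong, though. In your ``main obstacle'' paragraph it is false that no shared vertex is ever deleted. Take consecutive pairs $(a,b),(b,c)$ from one pair encoding. The merge rule applied to $\I_i=(a,b)$ tests $\|b-b\|=0\le\mu$ and deletes $b$ from $(b,c)$. So the first vertex of a pair following another element need not be the first vertex of its encoded sequence.

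This matters. Suppose the deletion removed $b$ from $P'$, as the two-round wording ``delete $v_3$ from $P'$'' would suggest. Then the new edge $ac$ would have no lower bound, since $\|a-b\|,\|b-c\|>2\mu$ says nothing about $\|a-c\|$.

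The correct resolution is that in the generalized merge, deletion is \emph{from the pair}, i.e.\ of that occurrence only. Then $b$ survives as the never-deleted second coordinate of $(a,b)$, and the vertex sequence of $P'$ is unchanged. Your case~1 covers this directly: with $v_2=v_3=b$ and $v_4=c$ it gives $\|v_2-v_4\|>2\mu-0$.

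With that replacement your proof is complete. The paper's own proof indexes the vertices as if consecutive pairs were disjoint and does not address this point.
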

\begin{proof}
We need to show that Properties  (\ref{property:SmallDistance}),(\ref{property:LongEdges}) of a $(2/\rho+1)$-relaxed $\mu$-simplification hold.

We start by showing Property~(\ref{property:SmallDistance}). Let $v \in P$, and let $Q_v$ be the curve containing $v$ in the partition guaranteed by \Cref{clm:app_simp_mpc_many_prop}(\ref{itm:app_simp_short_edge}). We split into cases:
\begin{enumerate}
    \item If $v \in Q_1$, we note that $p_1$ can only be deleted if $I_1 = p_1$ is a singleton, and assuming $\|p_1-p_2\| \leq \mu$. In this case, $\I_2 = (p_2,p_3)$ is the first pair, hence $p_2$ is not deleted in any of the cases, and by the triangle inequality $\|p_2-v\| \leq (2/\rho+1)\mu$. Next, we consider the deletions by iteration over the pairs.
    \item If we delete a vertex $p_i$ and $p_{i-1}$ survives, then $\|p_{i-1}-p_i\|\le \mu$, and by triangle inequality we have $\|p_{i-1}-v\| \leq (2/\rho+1)\mu$.
    \item Finally, we note that if we delete vertex $p_i$ and $p_{i-1}$, then necessarily $p_{i-2}$ survives and $\|p_{i-2}-p_i\|\le \mu$. Then by triangle inequality $\|p_{i-2}-v\| \leq (2/\rho+1)\mu$.
\end{enumerate}
    
    It remains to show that property (\ref{property:LongEdges}) holds. First we consider the special case where $I_1 = p_1$ is a singleton. Then $I_2 = (p_2,p_3)$ is a pair, and $p_2$ is undeleted by the process. If $\|p_1-p_2\| \leq \mu$, then $p_2$ becomes the first element in $P'$, and if $\|p_1-p_2\| > \mu$ then the condition holds for the two first elements of $P'$. Now consider a pair element $I_j = (p_i,p_{i+1})$. There are two cases:
    \begin{enumerate}
        \item $\I_{j+1}$ is a pair $(p_{i+2},p_{i+3})$, and $\|p_{i+1}-p_{i+2}\|\le\mu$. Then $p_{i+2}$ is deleted from $P'$, and by \Cref{clm:app_simp_mpc_many_prop}(\ref{itm:app_simp_long_edge}) we have that $\|p_{i+3}-p_{i+2}\| > 2\mu$. Hence by triangle inequality we have $\|p_{i+3}-p_{i+1}\|>\mu$.
        \item $\I_{j+1}$ is a singleton $p_{i+2}$, and $\I_{j+2}$ is a pair $(p_{i+3},p_{i+4})$. If $\|p_{i+1}-p_{i+2}\|\le\mu$ and $\|p_{i+1}-p_{i+3}\|\le\mu$ then both $p_{i+2},p_{i+3}$ are deleted, and as in the previous case, $\|p_{i+1}-p_{i+4}\|>\mu$ because $\|p_{i+3}-p_{i+4}\|> 2\mu$. If $\|p_{i+1}-p_{i+2}\|>\mu$ and $\|p_{i+2}-p_{i+3}\|\le\mu$, then by similar arguments $\|p_{i+2}-p_{i+4}\|>\mu$.
    \end{enumerate}
    Note that if $\I_{j+1}$ is a singleton $p_{i+2}$ and it is the last element in $\I$, then $p_{i+2}$ is deleted if $\|p_{i+1}-p_{i+2}\|\le\mu$, and $p_{i+1}$ will be the last vertex of $P'$. 
\end{proof}

\subsection{Simplification in the PRAM model}
\label{sec:simp_PRAM_appendix}
The algorithm and analysis of \Cref{thm:MPS-simp} can be translated in an essentially identical manner to the PRAM model, and we obtain as a corollary that an $O(\log n)$-relaxed $\mu$-simplification of a curve with $n$ vertices can be computed in $O(\log n)$ rounds using $O(n)$ processors.

We define the hierarchical partition almost exactly as before: in level zero we have the $n$ vertices of $P$, and for $j \geq 1$, the $j$-th level divides $P$ into $n/4^{j}$ consecutive subcurves, $P^j_1,\dots,P^j_{n/4^{j}}$ such that each subcurve is a concatenation of four consecutive subcurves from the previous level. The number of levels is therefore $O(\log{n})$, where the curve in the last level is $P$.

In step $j$, each processor is allocated a set $\I$ of at most four elements, and using $O(1)$ rounds computes on its own a $2\mu$-simplification for each maximal consecutive subsequence of singletons. The information that is then passed to the next step is as follows: if $\I$ has a single element $\I_1$, then pass it to the next level. Else, if $\I$ has at least two elements, then pass the first and the last elements to the next level, and in the case that both of them are singletons, also add one pair between them. We notice that at most three elements are passed to the next level from each processor. 

After the last step, we propagate all the information back to level zero, i.e., we get a sequence $\I$ of $O(n)$ elements (pairs and singletons), in which there are no two consecutive singletons. Then we run the merge procedure as described in \Cref{sec:simp_mpc}, which can be trivially performed in $O(1)$ rounds. Exactly as argued in the previous section, the resulting simplification is an $O(\log{n})$-relaxed $\mu$-simplification of $P$. 


\section{Realistic curves - missing proofs}
\subsection{Bounding the complexity of the matrix}\label{sec:packedness_missing_proofs}
As mentioned above, Driemel, Har-Peled and Wenk \cite{DHW12} show that if $P$ is $c$-packed, then the $\mu$-simplification $P^\mu$ is $6c$-packed. We provide a similar result for $\alpha$-relaxed $\mu$-simplifications. The proof follows the general approach of \cite{DHW12}, with minor modifications to account for the $\alpha$ factor in the distance. For completeness, we include a sketch of the proof.\omrit{proof can move to the appendix}
\omritin{TODO: reviewer 1: Claim 5.1: I think, here a visualization of the hippodromes would help a lot with the readability of the proof}
\begin{claim}\label{clm:simp_is_also_c_packed}
    If $P$ is $c$-packed, then any $\alpha$-relaxed $\mu$-simplification $P^\mu_\alpha$ of $P$ is $6\alpha c$-packed.
\end{claim}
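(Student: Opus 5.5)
Following the hippodrome argument of \cite{DHW12}, I will fix an arbitrary ball $B=B(x,r)$ and bound the length of $P^\mu_\alpha\cap B$. Write $P^\mu_\alpha=(p_{i_1},\dots,p_{i_k})$ with associated $\alpha$-relaxed $\mu$-partition $Q_1,\dots,Q_k$. The key fact is that every edge $e_t=\overline{p_{i_t}p_{i_{t+1}}}$ has length $>\mu$. Moreover, its endpoints lie on $P$, and by the small-distance property the subcurve $Q_t\circ Q_{t+1}$ of $P$ joining them stays within distance $\alpha\mu$ of one of them. So I will charge the length of each edge inside $B$ to length of $P$ inside a ball of radius comparable to $r$, picking up a factor $\alpha$.

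\emph{Case 1: $r\ge \mu$ (large balls).} Consider an edge $e_t$ that meets $B$. Since $p_{i_t}$ and $p_{i_{t+1}}$ are both vertices of $P$, the portion of $P$ between them is a connected subcurve of $P$ connecting them, hence of length at least $\|p_{i_t}-p_{i_{t+1}}\|\ge|e_t\cap B|$. I want this charged length to lie in a slightly larger ball. For short edges (length $\le 2r$, say) the whole edge lies in $B(x,3r)$, and the corresponding subcurve $Q_t\circ Q_{t+1}$ lies within $\alpha\mu\le\alpha r$ of an endpoint, so inside $B(x,(3+\alpha)r)$. The subcurves for distinct $t$ are disjoint up to $O(1)$ overlaps (consecutive $t$ share $Q_{t+1}$, giving a factor 2). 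For long edges (length $>2r$), $|e_t\cap B|\le 2r$, and the subpath of $P$ from $p_{i_t}$ to $p_{i_{t+1}}$ must cross from inside to outside, or pass near $B$. I will instead use that the segment passes through $B$, while the subcurve of $P$ stays within the hippodrome of radius $\alpha\mu$ around... This needs the hippodrome: the points of $P$ in $Q_t\cup Q_{t+1}$ are within $\alpha\mu$ of the endpoints, not of the segment. So instead I use continuity: between $Q_t$ and $Q_{t+1}$, the curve $P$ traverses from within $\alpha\mu$ of $p_{i_t}$ to within $\alpha\mu$ of $p_{i_{t+1}}$. I expect the cleanest route is to bound the number of long edges meeting $B$ rather than their length.

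\emph{Case 2: $r<\mu$ (small balls).} Since every edge has length $>\mu>r$, each edge contributes at most $2r$. So it suffices to bound the number $N$ of edges meeting $B$ by $O(\alpha c)$. For each such edge, $P$ contains the subcurve between $p_{i_t}$ and $p_{i_{t+1}}$ of length $>\mu$. At least one endpoint of $e_t$ is at distance $\ge\mu/2$ from any given point, which lets me locate a piece of $P$ of length $\ge\mu/2$ (or some constant fraction) inside $B(x,O(\alpha\mu))$. These pieces are essentially disjoint, giving $N\mu/2\le c\cdot O(\alpha\mu)$, hence $N=O(\alpha c)$ and total length $O(\alpha c r)$. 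The same counting handles long edges in Case 1 with $\mu$ replaced by $r$: a segment meeting $B$ and of length $>2r$ forces $P$ to travel a distance $\ge r$ within $B(x,O(r+\alpha\mu))$.

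The main obstacle is making these charging pieces disjoint and contained in a ball of radius $O(\alpha\cdot\max(r,\mu))$ while keeping the ratio to $r$ bounded. In particular, in Case 2 the ball size scales with $\mu$ rather than $r$, and that is exactly compensated by the edge-count bound. I would follow the case structure of \cite{DHW12} verbatim, replacing each use of ``$P$ is within $\mu$ of the simplification vertex'' by $\alpha\mu$, and track constants to land on $6\alpha c$.
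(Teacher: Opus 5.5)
There is a genuine gap. The step that carries the whole proof is never established: charging each edge $e_t=\overline{p_{i_t}p_{i_{t+1}}}$ of $P^\mu_\alpha$ that meets $B(x,r)$ to length of $P$ near $x$. It fails in exactly the hard case, where a single edge passes through $B(x,r)$ while both of its endpoints are at distance $\gg\alpha\mu$ from $x$. For such an edge, your Case~2 step (``one endpoint is at distance $\ge\mu/2$ from $x$, which locates a piece of $P$ of length $\ge\mu/2$ inside $B(x,O(\alpha\mu))$'') gives nothing. All you know is that $Q_t$ and $Q_{t+1}$ are near the endpoints, and nothing in your argument places any part of $P$ near $x$. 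In Case~1, the assertion that a long edge meeting $B$ ``forces $P$ to travel a distance $\ge r$ within $B(x,O(r+\alpha\mu))$'' is the same missing lemma restated, not proved. The closing plan to follow \cite{DHW12} verbatim defers the argument rather than supplying it.

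The idea you discarded is the one that works, and your objection to it is mistaken. A point within $\alpha\mu$ of an endpoint of $e_t$ is within $\alpha\mu$ of $e_t$. So every vertex of $P$ from $p_{i_t}$ to $p_{i_{t+1}}$ lies in the hippodrome $e_t\oplus B(0,\alpha\mu)$. Since the hippodrome is convex, every edge of $P$ between them lies in it too, including the edge joining $Q_t$ to $Q_{t+1}$.

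Now let $S'=e_t\cap B(x,r)$. This portion of $P$ runs from $p_{i_t}$ to $p_{i_{t+1}}$, so it crosses the slab between the two hyperplanes orthogonal to $e_t$ through the endpoints of $S'$. Its part inside the slab has length at least $\|S'\|$ and lies in $S'\oplus B(0,\alpha\mu)\subseteq B(x,r+\alpha\mu)$. The portions for distinct $t$ are disjoint, so summing gives $\|P^\mu_\alpha\cap B(x,r)\|\le\|P\cap B(x,r+\alpha\mu)\|$. This is the paper's key inequality, and it makes your short/long split unnecessary:
\begin{itemize}
\item For $r\ge\alpha\mu$, it directly gives the bound $c(r+\alpha\mu)\le 2cr$.
\item For $r<\alpha\mu$, exceeding $6\alpha cr$ would require more than $3\alpha c$ edges meeting $B(x,r)$. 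Each has length $>\mu$ and hence (as $\alpha\ge 1$) contributes at least $\mu$ inside $B(x,r+\alpha\mu)$. Applying the inequality with radius $2\alpha\mu$ then gives $\|P\cap B(x,3\alpha\mu)\|>3\alpha c\mu$, contradicting $c$-packedness.
\end{itemize}

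Note also that your charging scheme, even if completed, would not give the stated constant. The short-edge charge alone, $2(3+\alpha)cr$, exceeds $6\alpha cr$ whenever $\alpha<3/2$.
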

\begin{proof}
    First, we claim that for any ball $B(p,r)$ it holds that ($\star$)$\|P\cap B(p,r+\alpha\mu)\|\ge \|P^\mu_\alpha\cap B(p,r)\|$.

    Let $S=(p_i,p_j)$ be an edge of $P^\mu_\alpha$ that intersects $B(p,r)$, and let $S'$ be this intersection. By the properties of $P^\mu_\alpha$, there is some index $i\le k\le j$ such that $B(p_i,\alpha\mu)$ contains $p_i,\dots,p_k$, and $B(p_j,\alpha\mu)$ contains $p_{k+1},\dots,p_j$. Thus, $P[i,j]=p_i,\dots,p_j$ is contained in the hippodrome of radius $\alpha\mu$ around $S$ (the Minkowski sum $S\oplus B(0,\alpha\mu)$). Consider the two hyperplanes passing through the endpoints of $S'$ and orthogonal to $S$. Observe that $P[i,j]$ must intersect both hyperplanes, and therefore the portion of $P[i,j]$ inside the hippodrome $S'\oplus B(0,\alpha\mu)$ is larger than $\|S'\|$. Because $S'\subseteq B(p,r)$, this hippodrome is contained in $B(p,r+\alpha\mu)$. By summing over all the edges, we get that $\|P\cap B(p,r+\alpha\mu)\|\ge \|P^\mu_\alpha\cap B(p,r)\|$.

    Next, assume by contradiction that $\|P^\mu_\alpha\cap B(p,r)\|> 6\alpha cr$ for some ball $B(p,r)$. 
    
    If $r\ge\alpha\mu$, then because $P$ is $c$-packed we have
    $\|P\cap B(p,r+\alpha\mu)\|\le c(r+\alpha\mu)\le 2cr$. By ($\star$) we get $\|P\cap B(p,r+\alpha\mu)\|\ge \|P^\mu_\alpha\cap B(p,r)\|> 6\alpha cr$, a contradiction.
    
    If $r<\alpha\mu$ then let $U$ denote the segments of $P^\mu_\alpha$ intersecting $B(p,r)$, and observe that each such segment can contribute at most $2r$ to the length of $P^\mu_\alpha$ inside $B(p,r)$, so $|U|> 6\alpha cr/2r=3\alpha c$. Since every edge of $P^\mu_\alpha$ has length of at least $\mu$, we have $\|P^\mu_\alpha\cap B(p,r+\alpha\mu)\|\ge \|U\cap B(p,r+\alpha\mu)\|\ge |U|\cdot\mu> 3\alpha\mu c$. By ($\star$) we get that $\|P\cap B(p,3\alpha\mu)\|\ge \|P^\mu_\alpha\cap B(p,2\alpha\mu)\|\ge \|P^\mu_\alpha\cap B(p,r+\alpha\mu)\|> 3\alpha\mu c$, a contradiction to the $c$-packedness of $P$.
\end{proof}

\begin{proof}[Proof of \Cref{lem:matrix-complexity}]
    Consider a ball $B$ of radius $(1+\frac{\eps}{2})\delta$ centered at some arbitrary point $w$ (see \Cref{fig:concentric_balls}).
    Denote by $V$ the set of vertices of $P^\mu$ that are contained in $B$, and consider the ball $B'$ of radius $(1+\frac{\eps}{2}+\frac{\eps}{4\alpha})\delta$ centered at $w$. Since the length of each edge of $P^\mu$ is at least $\frac{\eps\delta}{4\alpha}$, and the radius of $B'$ is larger by $\frac{\eps\delta}{4\alpha}$ than the radius of $B$, we get that each vertex in $V$ is adding at least $\frac{\eps\delta}{4\alpha}$ to the length of $P^\mu$ inside $B'$. So the total length of $P^\mu$ inside $B'$ is at least $|V|\frac{\eps\delta}{4\alpha}$. Because $P^\mu$ is $6\alpha c$-packed (by \Cref{clm:simp_is_also_c_packed}), the length of $P^\mu$ inside $B'$ is at most $6\alpha c (1+\frac{\eps}{2}+\frac{\eps}{4\alpha})\delta$, we get that $|V|\frac{\eps\delta}{4\alpha}\le 6\alpha c (1+\frac{\eps}{2}+\frac{\eps}{4\alpha})\delta$, and $|V|=O(\frac{\alpha^2c}{\eps})$.
    We conclude that any ball $B$ of radius $(1+\frac{\eps}{2})\delta$ contains $O(\frac{\alpha^2c}{\eps})$ vertices of $P^\mu$. The 1-entries of row $i$ in $M$ correspond to vertices of $P^\mu$ that are contained in the ball of radius $(1+\frac{\eps}{2})\delta$ centered at the $i$-th vertex of $Q^\mu$, and therefore we get that there are $O(\frac{\alpha^2c}{\eps})$ such entries.
    By symmetric arguments, the claim also holds for the columns of $M$.
\begin{figure}[h!]
    \centering
    \includegraphics[scale=1]{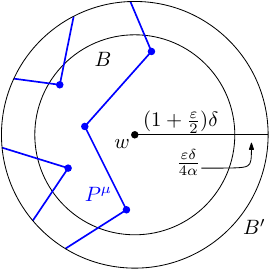}
    \caption{The concentric balls $B$ and $B'$. The edges of $P^\mu$ inside $B$ are in blue.}
    \label{fig:concentric_balls}
\end{figure}
\end{proof}

\subsection{The decision problem in the PRAM model}\label{sec:decision_PRAM}
In the PRAM model, we compute the simplification via \Cref{thm:MPS-simp}, and apply the procedure relating to PRAM in \Cref{thm:relevant-submatrices} with a constant $\sigma > 0$ to find the relevant submatrices. Each such relevant submatrix $M'$ has $n^\sigma$ rows and columns, and $O(\frac{\alpha^2c}{\eps})$ 1-entries on its boundary. We apply in parallel for each such matrix $M'$  and each pair of 1-entries on its boundary the directed reachability algorithm mentioned above, and store the set of 1-entries on its boundary that are reachable from it. We are left with a graph of size $\Tilde{O}(\frac{c}{\eps \sigma^2} n^{1-\sigma})$. Like in MPC, we recurse until the graph is of size $O(n^{\sigma})$; this recursion has depth $O(1/\sigma)$. Recall that given a directed graph $G = (V,E)$, it is possible to determine in PRAM for all pairs of vertices $u,v$ whether $u$ can reach $v$ in $O(\log{|V|})$ rounds, using $O(|V|^3)$ processors and $O(|V|^3\log{|V|})$ total work (see \cite{J92}). Finally, we apply the directed reachability algorithm mentioned above to determine whether $M[1,1]$ can reach $M[n,n]$. In total, we use $O(\sigma \log{n}+1/\sigma^2)$ rounds. 

\subsection{Optimization}\label{sec:optimization_missing_proofs}

\begin{lemma}
\label{lem:distance_magnitude_one_d}
    In the MPC model, given a set of points $X = \{x_1,\dots,x_n\} \in \reals$, we can output a set $D \subseteq \reals$ of size $|D| = \Tilde{O}(n)$ such that for any two points $x,y \in X$ there exists a value $r \in D$ for which $r/2 \leq |x-y| \leq r$. The algorithm uses $O(1)$ rounds and $\Tilde{O}(n)$ total space.
\end{lemma}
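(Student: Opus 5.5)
Sort $X$ using the standard MPC sorting primitive (\Cref{lem:aggregate}) in $O(1)$ rounds, so that $x_1\le x_2\le\dots\le x_n$ and every point knows its rank. Then build a one-dimensional WSPD-like structure from the consecutive gaps $g_i=x_{i+1}-x_i$. The plan is to let $D$ be all powers of two $2^k$ that can occur as a rounded distance, but only those that are ``witnessed'' by few scales per point. Concretely, take $D=\bigcup_i\{2^{\lceil\log_2|x_i-x_j|\rceil}: j \in J_i\}$ for a carefully chosen $J_i$ of size $O(\log n)$. Any such value $r=2^{\lceil \log_2 |x-y|\rceil}$ satisfies $r/2\le|x-y|\le r$. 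So the whole task is to make sure every pair's rounded value appears in $D$ while keeping $|D|=\Tilde{O}(n)$.

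For $J_i$, use the differences $x_{i+2^\ell}-x_i$ for $\ell=0,\dots,\log n$. These are computable in $O(1)$ rounds, since each point needs only $O(\log n)$ rank-shifted partners, and rank lookups reduce to sorting and joining by key. Now fix a pair $x_i<x_j$ and let $2^\ell\le j-i<2^{\ell+1}$. Then $d_1=x_{i+2^\ell}-x_i\le x_j-x_i\le x_{j}-x_{j-2^\ell}+x_{i+2^\ell}-x_i$. This only gives a factor-$2$ sandwich between a sum of two recorded differences, not a single recorded value. The remedy is to also include, for each recorded difference $d$, the values $2^{\lceil\log_2 d\rceil+t}$ for $t\in\{0,1,\dots,T\}$. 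This does not work for unbounded ratios, however, because $x_j-x_i$ can be much larger than both recorded differences when a large gap lies in between.

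I would therefore switch to the cleaner canonical argument. Let $k=\lceil\log_2(x_j-x_i)\rceil$. The pair is witnessed by the maximal gap $g_m$ with $i\le m<j$, since $g_m\le x_j-x_i\le (j-i)g_m\le n g_m$. So the rounded value lies in $\{2^{\lceil\log_2 g_m\rceil+t}: 0\le t\le \lceil\log_2 n\rceil+1\}$. Define $D$ as this set over all $n-1$ gaps, which gives $|D|=O(n\log n)$. Computing it needs only the sorted order and the consecutive gaps, which takes $O(1)$ rounds. Each gap then emits $O(\log n)$ values, and removing duplicates by sorting costs $\Tilde{O}(n)$ total space.

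The main obstacle is justifying the step $x_j-x_i\le n g_m$ together with the exponent bookkeeping. Since $x_j-x_i=\sum_{m'=i}^{j-1}g_{m'}\le (j-i)\max g$, we get $\lceil\log_2(x_j-x_i)\rceil\in[\lceil\log_2 g_m\rceil,\lceil\log_2 g_m\rceil+\lceil\log_2 n\rceil+1]$. Finally, I would check the degenerate case of zero distances (coincident points), handling it by adding $0$ to $D$.
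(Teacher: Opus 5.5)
Your final construction is correct, and it takes a different route from the paper. The paper never rounds. It fixes the dyadic index sets $S_i=\{n/2^i,2n/2^i,\dots\}$ and lets each index $j$ record its distances to the nearest element of every $S_i$ on each side, which is $O(\log n)$ values. It then shows that any $j_1<j_2$ share a recorded ``hub'' $j'\in[j_1,j_2]$, namely the unique element of the coarsest $S_i$ meeting $[j_1,j_2]$. The larger of the two hops $|x_{j_1}-x_{j'}|$, $|x_{j'}-x_{j_2}|$ is then within a factor $2$ of $|x_{j_1}-x_{j_2}|$.

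You instead round every distance up to a power of two and use that the largest intermediate gap satisfies $g_m\le x_j-x_i\le n\,g_m$, so $O(\log n)$ dyadic scales per gap suffice. Your exponent bookkeeping is right; by $\lceil a+b\rceil\le\lceil a\rceil+\lceil b\rceil$, even $t\le\lceil\log_2 n\rceil$ is enough. Your treatment of zero distances is also right.

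Your argument is more elementary: it needs only sorting and consecutive gaps, with no separator structure. It also gives exactly the stated orientation $r/2<|x-y|\le r$. The paper's hub argument as written yields $r\le|x-y|\le 2r$ and implicitly requires doubling the recorded values. In exchange, the paper's construction keeps $D$ a set of genuine pairwise distances and avoids computing $\lceil\log_2(\cdot)\rceil$.

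One correction to your exploratory second paragraph: the sparse-table idea does not fail, and your stated reason is wrong. Since $2^\ell\le j-i<2^{\ell+1}$, the windows $[i,i+2^\ell]$ and $[j-2^\ell,j]$ overlap and together cover $[i,j]$, so no large gap can sit outside both. Each of the two recorded differences is at most $x_j-x_i$, and their sum is at least $x_j-x_i$. Hence the larger one, $d$, satisfies $d\le x_j-x_i\le 2d$, and putting $2d$ into $D$ already proves the lemma. That variant is essentially the paper's two-hop argument with a sparse table in place of the hierarchical separators. You should either drop that paragraph or state it correctly.
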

\begin{proof}
    For simplicity, we assume w.l.o.g. that $n = 2^k$ for some integer $k$, and assume $x_1,\dots,x_n$ are ordered in ascending order. For $1 \leq i \leq \log{n}$, let $S_i = \{n/2^i,2n/2^i,3n/2^i,\dots,(2^i-1)n/2^i\}$ be a set of indices which partitions $[n]$ into segments of size $n/2^i$. For an index $j \in [n]$, let 
    $C(j) = \bigcup_{1 \leq i \leq \log{n}} \{\max_{s \in S_i, s \leq j} s\} \cup \{\min_{s \in S_i, s \geq j} s\}$, be the closest indices to $j$ in each of $S_i$ from both left and right. In particular, we note that $|C(j)| = O(\log{n})$, since at most two indices are added for each $i \in [\log{n}]$. 
    
    Next, we show that for any $j_1,j_2 \in [n]$ where $j_1 < j_2$, there exists a value $j' \in C(j_1) \cap C(j_2)$ such that $j_1 \leq j' \leq j_2$. Indeed, this holds as for the first $i$ such that $S_i$ separates $j_1,j_2$, i.e. there exists $j' \in S_i$  $j_1 \leq j' \leq j_2$. Moreover, we note that $j' \in S_i$  is contained in both $C(j_1),C(j_2)$: assume otherwise, therefore there must be two consecutive indices in $j'_1,j'_2 \in S_i$ such that $j_1 \leq j'_1 < j'_2 \leq j_2$. But by construction of $S_i$, any two consecutive indices hold that either $j'_1 \in S_{i-1}$ or $j'_2 \in S_{i-1}$, contradicting the minimality of $i$. 
    
    Our algorithm is defined as follows: First, we locally compute $C(j)$ for all $j \in [n]$. This can be done since it only requires knowledge of the value $n$. Next, for each $x_j$, we simultaneously compute the distances $D_j = \{|x_j-x_{j'}| \mid j' \in C(j)\}$, and set $D = \bigcup_{j \in [n]} D_j$. This can be trivially done in $O(1)$ rounds, using $O(n\log{n})$ total space. 
    
    Finally, we show that for any $x,y \in X$, there exists $r \in D$ such that $r/2 \leq |x-y| \leq r$. Recall that we showed that we have a point $z$ such that $x \leq z \leq y$ and such that both $|x-z|,|y-z| \in D$. Assuming w.l.o.g. that $|x-z| \leq |y-z|$, we have that $|x-y|/2 \leq |y-z| \leq |x-y|$, and the claim follows. 
\end{proof}

\begin{lemma}
\label{lem:distance_magnitude_any_d}
    In the MPC model, given a set of points $X = \{x_1,\dots,x_n\} \subseteq \reals^d$ and constant value $\eps > 0$, we can output a set $D \subseteq \reals$ of size $|D| = \Tilde{O}(nd)$ such that for any two points $x,y \in X$ there exists a value $r \in D$ for which $r \leq \|x-y\|_2 \leq (1+\eps)r$. The algorithm uses $O(1)$ rounds and total space of $\Tilde{O}(nd)$.
\end{lemma}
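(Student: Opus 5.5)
We reduce to the one-dimensional case of \Cref{lem:distance_magnitude_one_d} by projecting onto a suitable family of directions. Fix a set $U\subseteq \reals^d$ of unit vectors that is an angular $\theta$-net of the sphere: for every unit vector $e$ there is $u\in U$ with $\langle u,e\rangle \ge \cos\theta \ge 1/(1+\eps/4)$. For constant $\eps$ the size $|U|$ depends only on $\eps$ and $d$. The target bound $\Tilde{O}(nd)$ is linear in $d$, so we would rather have $|U|=O(d)$ up to polylogarithmic factors. We can get this by taking the $d$ coordinate axes together with their rotations by a random rotation, or by a Johnson–Lindenstrauss-type argument over the $n^2$ pairs. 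We also accept the cruder alternative of first reducing the dimension to $O(\log n/\eps^2)$ via JL with distortion $1+\eps/4$, and then using an explicit net of size $2^{O(\log(1/\eps)/\eps^2)}\cdot$polylog. Either way we aim for $|U|=\Tilde{O}(d)$.

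For each $u\in U$ we form the one-dimensional point set $X_u=\{\langle u,x_i\rangle\}_i$. Computing each coordinate $\langle u,x_i\rangle$ is local per point, so each point is replicated $|U|$ times and processed in $O(1)$ rounds. We then run the algorithm of \Cref{lem:distance_magnitude_one_d} on all $X_u$ in parallel (sorting is done by the standard $O(1)$-round MPC sort). This yields sets $D_u$ with $|D_u|=\Tilde{O}(n)$ such that every projected difference $|\langle u,x-y\rangle|$ lies in $[r/2,r]$ for some $r\in D_u$.

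The 1D guarantee is only a factor-2 bracket, not $(1+\eps)$, so we refine it. For each $r\in D_u$ we add the values $r(1+\eps/4)^{-t}\cdot\frac{1}{1+\eps/4}$ for $t=0,\dots,\lceil\log_{1+\eps/4}2\rceil$, which is $O(1/\eps)$ values. After this, every $|\langle u,x-y\rangle|$ is within a factor $1+\eps/4$ above some element. We also rescale all values by $1/\cos\theta$ where needed.

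Correctness for a pair $x,y$ goes as follows. Take $u\in U$ close to the direction of $x-y$. Then $\cos\theta\|x-y\|\le|\langle u,x-y\rangle|\le\|x-y\|$. With $r'$ the refined value satisfying $r'\le|\langle u,x-y\rangle|\le(1+\eps/4)r'$, we get $r'\le\|x-y\|\le(1+\eps/4)^2 r'\le(1+\eps)r'$. The key subtlety is that the lower bound $r\le\|x-y\|$ must hold for the chosen $r$. This is automatic here, since projections never increase length, as long as we use the unscaled projected value.

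The size is $|D|=|U|\cdot\Tilde{O}(n)\cdot O(1/\eps)=\Tilde{O}(nd)$, and the whole computation takes $O(1)$ rounds with total space $\Tilde{O}(nd)$. The main obstacle is obtaining a direction family of size $\Tilde{O}(d)$ rather than exponential in $d$, while still guaranteeing that every pair has a nearly aligned direction. If the linear-in-$d$ bound proves hard, we first try JL to dimension $O(\log n)$. The paper allows the $\Tilde{O}$ notation to absorb factors depending on the constant $\eps$, so for constant $d$ a plain net already suffices.
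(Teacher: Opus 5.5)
Your argument is fine when $d$ is a constant. In that case a plain angular net has size $(1/\eps)^{O(d)}=O(1)$, and your factor-$2$ refinement together with ``projections never increase length'' gives $r'\le\|x-y\|_2\le(1+\eps/4)^2r'$.

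The lemma, however, asserts $|D|=\Tilde{O}(nd)$, and the step meant to get this dependence on $d$ is false. No family of $\Tilde{O}(d)$ (or even $\mathrm{poly}(d)$) unit vectors contains, for every unit $e$, some $u$ with $\langle u,e\rangle\ge\cos\theta\ge 1/(1+\eps/4)$. A cap $\{e:\langle u,e\rangle\ge t\}$ has normalized measure at most $e^{-dt^2/2}$, so you need at least $e^{d\cos^2\theta/2}=2^{\Omega(d)}$ directions. Randomly rotated axes only achieve $\max_u|\langle u,e\rangle|=O(\sqrt{\log|U|/d})$. The JL fallback fails as well. A $\theta$-net with $\theta=\Theta(\sqrt{\eps})$ in dimension $k=\Theta(\log n/\eps^2)$ has size $(1/\eps)^{\Theta(k)}=n^{\Theta(\log(1/\eps)/\eps^2)}$, not $2^{O(\log(1/\eps)/\eps^2)}\cdot\mathrm{polylog}$ as you wrote. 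That breaks the $\Tilde{O}(nd)$ size and space bounds.

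The missing idea is that you never need a nearly aligned direction. A crude $\sqrt d$-approximation is enough, and the $d$ coordinate axes already give one. Take the coordinate $i$ maximizing $|x^i-y^i|$. The one-dimensional lemma gives $r\in D_i$ with $r/2\le\|x-y\|_\infty\le r$. Since $\|x-y\|_\infty\le\|x-y\|_2\le\sqrt d\,\|x-y\|_\infty$, this places $\|x-y\|_2$ in $[r/2,\sqrt d\,r]$.

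Now widen your refinement grid from a factor-$2$ range to a factor-$2\sqrt d$ range, i.e.\ add $(1+\eps)^\ell r$ for all $|\ell|\le\lceil\log_{1+\eps}(2\sqrt d)\rceil$. Some grid value $r'$ then satisfies $r'\le\|x-y\|_2\le(1+\eps)r'$. Here the lower bound comes from choosing the right grid index, since the grid extends above $r$, not from projections shrinking lengths. This costs only an $O(\log d/\eps)$ factor, so $|D|=\Tilde{O}(nd)$, and it is exactly the paper's proof.
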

\begin{proof}
    Let $T = \lceil \log_{(1+\eps)}(2\sqrt{d}) \rceil$. Simultaneously for each coordinate $i \in [d]$, we consider the set of points $X_i = \{x^i_1,\dots,x^i_n\} \in \reals$, where $x_j^i$ is the $i$-th coordinate of $x_j$, and compute a set $D_i$ according to \Cref{lem:distance_magnitude_one_d}. Following this, we set 
    $D = \{(1+\eps)^\ell \cdot r \mid \ell \in [-T,T], r \in \cup_{i \in [d]} D_i\}$.

    We show that $D$ satisfies the property we require. For any two points $x,y$, and coordinate $i$, we have a value $r_i \in D_i$ such that $r_i/2 \leq |x^i-y^i| \leq r_i$. In particular, for some $i \in [d]$ we have a value $r \in D_i$  such that 
    $r/2 \leq \|x-y\|_{\infty} \leq r$. Moreover, for any $\ell \in [-T,T]$, we have that $(1+\eps)^\ell r \in D$.
    
    We recall the well-known inequality that for any point $x \in \reals^d$, it holds that $\|x\|_\infty \leq \|x\|_2 \leq \sqrt{d} \cdot \|x\|_\infty$. Combining this with the statement above, we conclude that there exists an $r' \in D$ such that 
    $r' \leq \|x-y\|_2 \leq (1+\eps)r'$, and the claim follows.
\end{proof}

To prove \Cref{thm:main-c-packed}, we prove the two following theorems separately.
\begin{theorem}
\label{thm:main_finding_cpack_MPC}
    Let $\rho < 1,\eps > 0$, be constants, and assume that $\frac{\log{c}}{\log{n}}$ is smaller than some constant. Given two $c$-packed curves of length $n$, a $(1+\eps)$-approximation of their DFD can be found in the MPC model in $O(1/\rho^3)$ rounds, using $n^{\rho}$ space per machine, and using $\Tilde{O}(n^{1+\rho})$ total space. 
\end{theorem}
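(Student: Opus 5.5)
We combine the candidate set of \Cref{lem:distance_magnitude_any_d} with the fuzzy decider of \Cref{thm:realistic_mpc_descision} in a multi-pivot search.

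\textbf{Candidate set.} Let $X$ be the union of the vertices of the two curves, so $|X|=2n$. We apply \Cref{lem:distance_magnitude_any_d} with parameter $\eps'=\eps/3$. In $O(1)$ rounds and $\Tilde{O}(n)$ total space this gives a set $D$ with $|D|=\Tilde{O}(n)$. For every pair $x,y\in X$ some $r\in D$ satisfies $r\le\|x-y\|\le(1+\eps')r$. Since $\dfd(P,Q)=\|a_i-b_j\|$ for some pair of vertices, some $r^*\in D$ satisfies $r^*\le\dfd(P,Q)\le(1+\eps')r^*$. We sort $D$ in $O(1)$ rounds using \Cref{lem:aggregate}.

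\textbf{Multi-pivot search.} We maintain a contiguous interval of the sorted $D$ that still contains a value $r$ with $r\le\dfd(P,Q)\le(1+\eps')r$. In each phase we pick $n^{\rho}$ evenly spaced pivots from the current interval. For each pivot $\delta$ we run the decider of \Cref{thm:realistic_mpc_descision} on $\delta$ and, to absorb the fuzziness, also on $(1+\eps')\delta$. All these calls run in parallel.

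\begin{itemize}
\item A call on $\delta$ costs $O(1/\rho^2)$ rounds and $O(n^{1+\rho})$ total space.
\item To keep the total at $\Tilde{O}(n^{1+\rho})$, we run the decider with local-space parameter $\rho/2$ and use $n^{\rho/2}$ pivots per phase. This changes the rounds only by constant factors.
\end{itemize}

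If any call returns a $(1+\eps')$-approximation, we output it and stop. Otherwise every call reports correctly whether $\dfd(P,Q)<\delta$ or $\dfd(P,Q)>\delta$. By monotonicity of these answers in $\delta$, we shrink the interval to the range between the largest pivot reported below $\dfd$ and the smallest pivot reported above it. Each phase cuts $|D|=\Tilde{O}(n)$ by a factor of $n^{\rho/2}$, so after $O(1/\rho)$ phases the interval has $O(1)$ candidates. We then test each remaining candidate $r$ directly by running the approximate decider with threshold $r$. The smallest candidate answered YES yields a $(1+\eps')^2$-approximation, which is at most $(1+\eps)$ for small $\eps$.

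\textbf{Costs and the main obstacle.} The total is $O(1/\rho)$ phases of $O(1/\rho^2)$ rounds each, i.e.\ $O(1/\rho^3)$ rounds, with total space $\Tilde{O}(n^{1+\rho})$. The hypothesis on $\log c/\log n$ is inherited directly from \Cref{thm:realistic_mpc_descision}.

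The main obstacle is the correctness of the interval update under fuzzy answers. We must check that discarding candidates never removes every $r\in D$ with $r\le\dfd\le(1+\eps')r$, unless a $(1+\eps)$-approximation has already been returned. The argument is that correct strict answers $\dfd<\delta$ and $\dfd>\delta$ bracket the true value. The true value lies within a $(1+\eps')$ factor of some candidate, so the bracketing candidate is kept. Our first attempt at this check will be to state the invariant explicitly and verify it in the boundary case where $\dfd$ falls between two consecutive pivots.
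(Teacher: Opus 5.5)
Your plan is the paper's: build $D$ with \Cref{lem:distance_magnitude_any_d}, sort it, and run an $O(1/\rho)$-phase multi-pivot search driven by the fuzzy decider of \Cref{thm:realistic_mpc_descision}. Running the decider at parameter $\rho/2$ with $n^{\rho/2}$ pivots correctly keeps the total space at $\Tilde{O}(n^{1+\rho})$, a point the paper leaves to its implicit re-parametrization remark. The extra calls at $(1+\eps')\delta$ are unnecessary. The interval invariant holds provided the new interval is closed at the lower pivot $\delta_a$: if the witness $r$ has $r<\delta_a$, then $\delta_a<\dfd(P,Q)\le(1+\eps')r<(1+\eps')\delta_a$, so $\delta_a$ becomes the new witness.

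\textbf{The gap is in the endgame.} There you switch to the approximate YES/NO decider and output the smallest candidate answered YES. That decider only guarantees YES $\Rightarrow\dfd(P,Q)\le(1+\eps')r$ and NO $\Rightarrow\dfd(P,Q)>r$. So on the witness $r$ with $r<\dfd(P,Q)\le(1+\eps')r$ it may legitimately answer NO. Every candidate $\ge\dfd(P,Q)$ must be answered YES, so the smallest YES is then the first remaining candidate at or above $\dfd(P,Q)$. Consecutive elements of $D$ can be arbitrarily far apart in ratio, so this output can be arbitrarily larger than $\dfd(P,Q)$, and the claimed $(1+\eps')^2$ bound does not follow. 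In general, YES/NO answers alone only give $\dfd(P,Q)\in(r_N,(1+\eps')r_N]\cup[r_Y,(1+\eps')r_Y]$, where $r_N$ is the last NO and $r_Y$ the first YES, and you cannot tell which piece holds.

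The fix is to keep using the fuzzy decider: run one more phase in which every remaining candidate is a pivot. If no call returns an approximation, all answers are correct strict comparisons. The witness $r$ cannot be answered $\dfd(P,Q)<r$, so it is answered $\dfd(P,Q)>r$. Hence the largest candidate $\hat r$ answered $\dfd(P,Q)>\hat r$ satisfies $r\le\hat r<\dfd(P,Q)\le(1+\eps')r\le(1+\eps')\hat r$. Therefore $(1+\eps')\hat r$ is a $(1+\eps')$-approximation. This is what the paper's recursion amounts to at its bottom level.
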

\begin{proof}
     We reduce to the DFD decision problem in the following simple manner: let $x_1,\dots,x_n  \in \reals^d$, and $y_1,\dots,y_n \in \reals^d$ be the points of the two curves respectively. Set $X = \{x_1,\dots,x_n,y_1,\dots,y_n\}$ and compute a set $D$ according to \Cref{lem:distance_magnitude_any_d}, setting the lemma's parameter $\eps = \eps$. Recall that $D$ is defined such that we are guaranteed that for every two points $x,y$ of the two curves respectively, there exists a value $r \in D$ for which $r \leq \|x-y\|_2 \leq (1+\eps)r$. In particular, we are guaranteed that $D$ contains a $(1+\eps)$-approximation of the DFD.
     
     We sort the values $D = \{r_1,\dots,r_{|D|}\}$, and run a multi-pivot search to find an approximation of the DFD: Recall that the decision algorithm of \Cref{thm:realistic_mpc_descision} allows us to determine whether a given value $\delta$ is a $(1+\eps)$-approximation of the DFD of the two curves, or whether it is too small or too large. We let $r'_1,\dots,r'_{n^\rho} \in D$ be the $n^\rho$-quantiles of $D$, and run the algorithm $n^\rho$ times in parallel, where for each we set $\delta = r'_1,\dots,r'_{n^\rho}$ respectively, and set our approximation parameter as $\eps$. After identifying the relevant segment in $D$ which contains a $(1+\eps)$-approximation of the DFD, we recursively run on that segment. We note that the depth of this recursion is $O(1/\rho)$, i.e. is of constant depth.
\end{proof}

\begin{theorem}
\label{thm:main_finding_cpack_PRAM}
    For any constant $\eps > 0$ and value $\Theta(\frac{\log{c}}{\log{n}}) \leq \sigma < 1$, given two $c$-packed curves of length $n$, a $(1+\eps)$-approximation of their DFD can be found in the PRAM model in $O(\log{n}+1/\sigma^3)$ rounds, and $\Tilde{O}(n^{1+\sigma})$ total work.
    
\end{theorem}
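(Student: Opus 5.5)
We mirror the proof of \Cref{thm:main_finding_cpack_MPC}, replacing each MPC ingredient by its PRAM counterpart. The plan is to build a candidate set $D$ of size $\Tilde{O}(n)$ that contains a $(1+\eps)$-approximation of the DFD. We then search over $D$ with a multi-pivot search, using the PRAM decider of \Cref{thm:realistic_pram_descision} as the oracle.

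\textbf{Step 1: candidate set.} Let $X=\{x_1,\dots,x_n,y_1,\dots,y_n\}$ be the vertices of both curves, and build $D$ as in \Cref{lem:distance_magnitude_any_d}. Both \Cref{lem:distance_magnitude_any_d} and \Cref{lem:distance_magnitude_one_d} are stated for MPC, so we first transfer them to PRAM.
\begin{itemize}
\item For each coordinate, sort the projected points, which takes $O(\log n)$ rounds and $O(n\log n)$ work by a standard parallel sort.
\item Each index $j$ computes its $O(\log n)$ anchors $C(j)$ arithmetically from $n$ alone, and then the corresponding differences.
\item Finally, multiply every value by $(1+\eps)^\ell$ for all $\ell\in[-T,T]$.
\end{itemize}
This takes $O(\log n)$ rounds and $\Tilde{O}(nd)$ work, and gives $|D|=\Tilde{O}(n)$. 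The argument of \Cref{lem:distance_magnitude_any_d} then shows that some $r\in D$ satisfies $r\le \dfd(P,Q)\le(1+\eps)r$, because the DFD equals $\|x_i-y_j\|$ for some pair. Sorting $D$ costs another $O(\log n)$ rounds.

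\textbf{Step 2: multi-pivot search.} Take the $n^{\sigma'}$-quantiles of the current interval of $D$, where $\sigma'=\Theta(\sigma)$ is a small constant fraction of $\sigma$. Run the decider of \Cref{thm:realistic_pram_descision} on all quantiles in parallel. Each run outputs either a $(1+\eps)$-approximation, in which case we stop, or ``too small''/``too large''. The answers are monotone, so we can identify the consecutive pair of quantiles that brackets the DFD and recurse on that sub-interval. Each level shrinks the interval by a factor $n^{\sigma'}$, so the recursion depth is $O(1/\sigma)$. Each level costs $O(\log n+1/\sigma^2)$ rounds, which gives $O(\log n/\sigma+1/\sigma^3)$ rounds in total. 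If the interval ends with $O(1)$ candidates and none was certified, we test them directly.

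\textbf{Main obstacle.} The hard part is the round and work accounting.
\begin{itemize}
\item \emph{Rounds.} The naive bound contains a term $\log n/\sigma$, but the target is $O(\log n+1/\sigma^3)$. To fix this we do the $O(\log n)$-round parts once, before the search: the simplifications (\Cref{thm:MPS-simp}) and the per-coordinate sorting and quantile computation. Simplification depends on $\delta$ through $\mu$, so we precompute it for all $\delta\in D$ in parallel in $O(\log n)$ rounds, using $\Tilde{O}(n\cdot|D|)$ work. That work bound is too large, so instead we precompute simplifications only for the $n^{\sigma'}$ pivots of each level. We then rely on the finer bound from \Cref{sec:decision_PRAM}, namely $O(\sigma\log n+1/\sigma^2)$ rounds per decision, apart from the simplification. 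Summed over $O(1/\sigma)$ levels this gives $O(\log n+1/\sigma^3)$.
\item \emph{Work.} At each level we make $n^{\sigma'}$ parallel decider calls, each costing $\Tilde{O}(n^{1+\sigma'})$ work, for a total of $\Tilde{O}(n^{1+2\sigma'})$ work per level. Taking $\sigma'=\sigma/2$ brings this to $\Tilde{O}(n^{1+\sigma})$ per level, and the $O(1/\sigma)$ levels only add a factor absorbed in $\Tilde{O}$ when $\sigma\ge\Theta(\log c/\log n)$.
\end{itemize}
The assumption $\sigma\ge\Theta(\log c/\log n)$ is used exactly as in \Cref{thm:realistic_pram_descision}, so that the $\mathrm{poly}(c)$ factors are absorbed into $n^{\sigma}$.
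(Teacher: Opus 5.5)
Your route is the paper's. You build $D$ via \Cref{lem:distance_magnitude_any_d} in $O(\log n)$ rounds (per-coordinate sorting, the index sets $C(j)$, the $(1+\eps)^\ell$ rescaling), then run a multi-pivot search over the sorted $D$ with the decider of \Cref{thm:realistic_pram_descision}. The paper's proof is exactly this two-step sketch and simply asserts the final round count. Your work accounting ($n^{\sigma/2}$ parallel calls of $\Tilde{O}(n^{1+\sigma/2})$ work per level) is fine.

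The gap is in the round accounting you yourself single out: your fix does not remove the $\log n/\sigma$ term. Computing simplifications ``only for the $n^{\sigma'}$ pivots of each level'' is not a precomputation. The pivots of level $k+1$ are known only after the calls of level $k$ return, and speculating on all candidates would require $n^{k\sigma'}$ simplifications, i.e.\ $\Tilde{O}(n^{1+k\sigma'})$ work. Since $\mu=\eps\delta/(4\alpha)$ depends on $\delta$, each of the up to $\Theta(1/\sigma)$ sequential levels still pays the $O(\log n)$ rounds of \Cref{thm:MPS-simp}. Moreover, the per-decision bound $O(\sigma\log n+1/\sigma^2)$ you borrow from \Cref{sec:decision_PRAM} is not supported by the procedure described there, even with simplification excluded. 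That recursion has depth $O(1/\sigma)$, and every level runs the $O(\log|V|)$-round transitive-closure algorithm on graphs with $|V|=n^{\Theta(\sigma)}$. This already totals $\Theta(\log n)$ rounds, which is consistent with the $O(\log n+1/\sigma^2)$ stated in \Cref{thm:realistic_pram_descision}, not with $O(\sigma\log n)$.

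So what your argument (like the paper's sketch) actually proves is $O(\log n/\sigma+1/\sigma^3)$ rounds. This matches the claimed $O(\log n+1/\sigma^3)$ when $\sigma$ is a constant, or when $\sigma\le(\log n)^{-1/2}$, since then $\log n/\sigma\le 1/\sigma^3$. It does not match in between: for $\sigma=(\log n)^{-1/3}$ it gives $(\log n)^{4/3}$ instead of $O(\log n)$. To reach the stated bound you would need a genuinely new ingredient. One option is a decider whose \emph{entire} cost, simplification included, is $O(\sigma\log n+1/\sigma^2)$ rounds. Another is a search that does not pay a $\Theta(\log n)$-round decider at each of $\Theta(1/\sigma)$ sequential levels. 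Otherwise, state the weaker bound.

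A minor point: at the end of the search there is nothing to ``test directly''. Suppose consecutive $r_a<r_{a+1}$ in $D$ receive ``$\dfd(P,Q)>r_a$'' and ``$\dfd(P,Q)<r_{a+1}$''. Then the $r\in D$ with $r\le\dfd(P,Q)\le(1+\eps)r$ must satisfy $r\le r_a$. Hence $r_a<\dfd(P,Q)\le(1+\eps)r_a$, so $r_a$ is already a valid output. Re-running the fuzzy decider need not certify anything.
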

\begin{proof}
In the PRAM model, we again adopt an identical approach: we can compute a set $D$ as in \Cref{lem:distance_magnitude_any_d} using $\Tilde{O}(n)$ processors and $O(\log{n})$ rounds, by first constructing a relaxed set for one dimension (which only requires sorting and other trivial operations), and then computing $D$ for general dimension (which uses only trivial operations). Finally, we perform a multi-pivot search, and conclude that the DFD can be approximated in $O(\log{n}+1/\sigma^3)$ rounds, for any $\sigma > 0$.
\end{proof}

\old{
\section{Conclusion and open problems}
\label{sec:conclusions}
In this paper, we provide the first parallel algorithms for the discrete \frechet\ distance in both the MPC and PRAM models. In addition, we present more efficient parallel algorithms for $c$-packed curves, in both models. Our algorithms for $c$-packed curves also require the implementation of a simplification algorithm in parallel, which may be of independent interest. All our algorithms are close to optimal, given the SETH-based lower bounds in $\cite{Bringmann14}$. 
Clearly, as this is the first paper on this problem, many questions remain open. We state two of them here, which we believe are the natural extension of our work. 
The first open question is computing the continuous \frechet\ distance in parallel. In our paper we focus on the discrete version, which is also slightly easier to compute in the sequential model. The main challenge in computing the continuous distance in parallel is the optimization procedure, because the number of ``critical events'' becomes $n^3$ instead of $n^2$. The optimization procedure in \cite{AG95} uses parametric search which is rather complex, however, in \cite{HR11} the optimization procedure is simpler and based on sampling critical values, and might be easier to implement in parallel.\omrit{not sure we want this last sentence} The second open question is computing the (discrete) \frechet\ distance between two curves such that only one of them is $c$-packed. This problem was considered in the sequential model by Gudmundsson, Mai and Wong~\cite{GMW24} for the continuous distance, and by van der Hoog, Rotenberg and Wong~\cite{HRW24} for the discrete distance. In both cases there are near-linear time algorithms, and it will be interesting to see if we can achieve efficient parallel algorithms in this case.
}
\end{document}

%% file: mymacros-disc.tex
\usepackage{graphicx}
\usepackage{xcolor}
\usepackage[margin=1in,footskip=0.25in]{geometry}
\definecolor{forestgreen}{rgb}{0.13, 0.55, 0.13}

\graphicspath{{./figures/}}

\def\I{\mathcal{I}}

\def\reals{{\mathbb R}}

\def\eps{{\varepsilon}}

\newcommand{\frechet}{Fr\'echet}
\newcommand{\dfd}{d_{dF}}

\renewcommand{\arraystretch}{1.5}

\def\dfn#1{\emph{\textbf{#1}}}

\date{}

\newcommand{\old}[1]{{{}}}
\newcommand{\temp}[1]{{{}}}
\newcommand{\fullversion}[1]{{{}}}

\newcommand{\oldparagraph}[1]{\vspace{5pt}\noindent\textbf{#1}}

\newcommand{\omrit}[1]{}
\newcommand{\omritin}[1]{}
\newcommand{\orr}[1]{}
\newcommand{\orrin}[1]{}
\newcommand{\arnold}[1]{}
\newcommand{\arnoldin}[1]{}
\newcommand{\leonid}[1]{}
\newcommand{\leonidin}[1]{}

%% file: sec_models.tex
\subsection*{Models of computation}

\oldparagraph{PRAM:} The PRAM model is perhaps the most prominent model in the study of parallel algorithms for geometric problems (see \cite{GO04} Chapter 46). It is a shared memory model where the processors operate synchronously. In each step, processors can read and write to the shared memory (and perform local computations), and thus all communication is done via the shared memory. In each node, we can perform a constant number of operations, given an input word. 

An algorithm can be represented as a rooted DAG, with constant incoming degree, where the first layer represents the input (i.e. we have a set of nodes without an incoming edge, each representing a word of input), and each node in a layer following this represents an operation applied to the results of prior operations or to input words.

In the analysis of a PRAM algorithm, three complexity parameters are considered: (i) the number of rounds, which is the height of the DAG, (ii) the number of processors, which is the maximum number of nodes in any layer of the DAG but the first, and (iii) total work done, which is the number of nodes after the first layer.

There are several well studied variants of PRAM (see \cite{J92} chapter 1 for an extensive introduction). For the purposes of this work, we focus solely on the ``common'' CRCW PRAM, where several processors may read the same memory cell in a single round, and may write into the same cell as long as they attempt to write the same value.

\oldparagraph{MPC:} \arnold{AF: perhaps mention that it represents some real-life models by name, like Hadoop and MapReduce}
The Massively Parallel Computation (MPC) model is another extensively studied model of parallel computation. It is a synchronous distributed model in which we have a set of $M$ machines, each with a unique identifier in $[1,M]$, and each with local space $S$. Initially, an input of size $N$ memory words is partitioned between the machines in an arbitrary manner. In each round, each machine may send a message to each other machine, where the only restriction on the size of the messages is that the total number of memory words each machine sends or receives is at most $S$ per round. After each machine receives its messages, it performs a local computation step, in which it can compute a new state encoded using $S$ words of memory, based on its current state and the received messages of the latest round. The protocol ends when all machines terminate with some output. The round complexity of a protocol is the worst case number of rounds after which all machines terminate. The total space is $M \cdot S$, and we assume $M \cdot S \geq N$.

\oldparagraph{Useful procedures.}
We rely on the following lemma from prior works, which shows that sorting and aggregate functions (i.e. associative and commutative) can be computed in MPC in constant time.

\begin{lemma}[Aggregation and Sorting, Informal; see e.g. \cite{GSZ11}]
    \label{lem:aggregate}
    Assume each machine has space $N^\rho$, for some integer $N$ and value $\rho > 0$. Given a set of $N$ integer elements, which is explicitly stored on the machines, and a partition $\mathcal{P} = \{P_1,\dots,P_r\}$ of the elements, known to all machines, we can compute an aggregate function (e.g. sum, $\min$, $\max$) on each of $P_1,\dots,P_r$ simultaneously in $O(1/\rho)$ rounds. Similarly, the elements of each of $P_1,\dots,P_r$ can be sorted\footnote{In the sorting problem, each element receives as output its rank in the sorted multi-set, and the elements are reshuffled so that for any two elements $x,y$ where $x < y$, the id of the machine storing $x$ is no larger than the id of the machine storing $y$.} simultaneously in $O(1/\rho)$ rounds.
\end{lemma}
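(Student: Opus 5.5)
We prove the two parts separately: aggregation by a shallow communication tree of fan-in $N^{\rho}$, and sorting by a sample-sort that reduces to aggregation and prefix sums. Throughout, we assume (after one round of reshuffling) that the $N$ elements sit on $m=\Theta(N^{1-\rho})$ machines, each holding $O(N^\rho)$ elements.

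\textbf{Aggregation.} First sort or route the elements so that each part $P_t$ occupies a contiguous range of machines; for now assume this, and remove the assumption after sorting is established. Each machine computes locally the aggregate of every part it holds. Only the (at most two) parts that straddle its boundary need further combination, since every other part is entirely local.

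The boundary values are combined along a virtual tree of degree $d=N^{\rho/2}$ built over the machines. A node at level $\ell$ receives $d$ partial values from its children, merges adjacent partials belonging to the same part, and forwards at most two ``open'' partials (its leftmost and rightmost parts) upward. Each machine therefore receives $O(d)\le N^\rho$ words per round. The tree has height $\log_d m=O(1/\rho)$, so the upward pass takes $O(1/\rho)$ rounds.

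A symmetric downward pass, also $O(1/\rho)$ rounds, broadcasts the final aggregate of each part to every machine holding elements of it. The same tree also computes prefix sums, because forwarding subtree totals downward is just a different associative operator. We will need prefix sums for sorting.

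\textbf{Sorting.} We use the standard sample-sort recursion, as in~\cite{GSZ11}.

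1. Each element becomes a sample independently with probability $N^{\rho/2}/N$. With high probability this gives $O(N^{\rho/2})$ samples.
2. The samples are gathered on one machine, which sorts them locally and chooses every $(\log N)$-th sample as a splitter, producing $s\approx N^{\rho/2}/\log N$ splitters.
3. The splitters are broadcast to all machines along the degree-$d$ tree in $O(1/\rho)$ rounds.
4. Each machine buckets its elements by the splitters.
5. Using the prefix-sum procedure, we compute the size of every bucket and the offset of every element, and route each bucket to a contiguous block of machines.

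A Chernoff bound shows that every bucket has size $O(N\log N/s)$ with high probability. The elements of a single part $P_t$ are handled by sorting lexicographically by the pair (part id, key), so all parts are sorted simultaneously. That same sort also provides the contiguity assumed in the aggregation step.

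Each recursion level shrinks the problem size from $N$ to about $N^{1-\rho/2}\,\mathrm{polylog}\,N$. Recursion on the buckets runs in parallel. Once a bucket fits in a single machine, of size $N^\rho$, we sort it locally. This happens after $O(1/\rho)$ levels, because the exponent of the bucket size drops by roughly $\rho/2$ per level. Ranks are then obtained by one more prefix-sum pass.

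The main obstacle is making the round count add up to $O(1/\rho)$ rather than $O(1/\rho^2)$. Each level uses $O(1/\rho)$ rounds for broadcasting and prefix sums, and there are $O(1/\rho)$ levels, which naively gives $O(1/\rho^2)$ rounds.

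The first thing we would try is the choice of splitters in \cite{GSZ11}, which take $N^{\Theta(\rho)}$ splitters chosen so that the bucket sizes shrink geometrically in the exponent. We would then charge the broadcast cost of a level to the current problem size $N_\ell$ rather than to $N$: the tree depth at that level is $\log_{N^{\rho/2}} (N_\ell/N^\rho)$. Summed over all levels, this is a telescoping sum bounded by $O(1/\rho)$.

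If this charging argument fails, we fall back to proving the weaker bound $O(1/\rho^2)$. That bound still suffices for every application in this paper, because $\rho$ is a constant there.
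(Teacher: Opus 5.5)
The paper does not prove this lemma; it cites it from \cite{GSZ11}. Judged on its own, your argument has a genuine gap exactly where you flag it, and the proposed repair fails.

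\textbf{Why the charging argument fails.} You use $N^{\Theta(\rho)}$ splitters per level, which you need so the splitter set fits on one machine and can be broadcast. Then the bucket size at level $\ell$ is $N_\ell\approx N^{1-\ell\rho/2}$, so its exponent drops \emph{arithmetically}, not geometrically. Your charged depth at level $\ell$ is $\log_{N^{\rho/2}}(N_\ell/N^\rho)\approx 2/\rho-\ell-2$. Summing over the $\Theta(1/\rho)$ levels gives $\sum_{j\le 2/\rho} j=\Theta(1/\rho^2)$. This is an arithmetic series, not a telescoping one, and charging to $N_\ell$ instead of $N$ saves only a factor of $2$. Your aggregation step relies on sorting to make the parts contiguous, so both halves of the lemma are established only with $O(1/\rho^2)$ rounds.

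\textbf{The fallback does not suffice.} It does not prove the stated lemma. It also changes the paper's explicit $\rho$-dependences. The $O(\log n/\rho)$ of Theorem~\ref{thm:MPC_main}, which sorts once per merge, becomes $O(\log n/\rho^2)$. The $O(\log n/\rho^2)$ of Theorem~\ref{thm:exact-dfd} becomes $O(\log n/\rho^3)$.

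\textbf{The missing idea.} The problem must shrink \emph{polynomially} per level, for example $N_\ell\mapsto \tilde O(\sqrt{N_\ell})$. Then a level costing $O(\log_{N^\rho} N_\ell)$ rounds gives a geometric series $\sum_\ell 2^{-\ell}/\rho=O(1/\rho)$. This needs about $\sqrt{N_\ell}$ splitters, far more than fit on a machine, so broadcasting them must be replaced as follows.
\begin{itemize}
\item Sort the sample of size $\sqrt{N_\ell}$ by brute force. Spread all $N_\ell$ pairwise comparisons over machines in blocks of size $N^{\rho/2}$, and sum the rank counts with a tree of fan-in $N^{\Theta(\rho)}$.
\item Find each element's bucket by routing it down a distributed $N^{\rho/2}$-ary search tree on the splitters, spending $O(1)$ rounds per tree level.
\item Replicate each tree node in proportion to its sample-predicted load, and send each query to a random copy of the next node. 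Then w.h.p.\ no machine receives more than $O(N^\rho)$ words.
\end{itemize}
This is essentially what the multi-search routine of \cite{GSZ11} provides. It gives the recurrence $T(N_\ell)=T(\tilde O(\sqrt{N_\ell}))+O(\log_{N^\rho}N_\ell)$, with all buckets handled in parallel, which solves to $O(1/\rho)$ rounds.
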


In several of our claims analyzing the round and space complexity of our MPC (resp. PRAM) algorithms, we make an implicit re-parametrization argument where we prove the claim for total space of $n^{k+O(\rho)}$ (resp. number of processors $n^{k+O(\sigma)}$) for some $k$. By choosing $\rho$ (resp. $\sigma$) to be a constant smaller and re-parameterizing by this new value, we can remove the constant factors in the space complexity, at the cost of a larger constant factor in the round complexity. For brevity, we omit these details from the proofs. 

%% file: sec_trivialAlgs.tex
\section{``Trivial'' algorithms in the PRAM model}\label{subsec:trivial}
In this section we present a few ``trivial'' algorithms for computing the discrete \frechet\ distance in the PRAM model, to serve as a baseline. Each of these algorithms has some advantage over the others, either by its simplicity or by one of the three parameters mentioned in the introduction.

The observation below follows from the dynamic programming algorithm for computing the DFD of two curves.

\begin{observation}
\label{obs:trivial_alg_n_time}
    Given two curves of length $n$, their DFD can be computed in the PRAM model in $2n-1$ rounds, using $O(n^2)$ total work.
\end{observation}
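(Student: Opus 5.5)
We simulate the standard dynamic program for the reachability formulation, processing the matrix one anti-diagonal per round. Since $M$ has $2n-1$ anti-diagonals, this gives $2n-1$ rounds for the decision problem.

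\textbf{Decision version.} Fix a threshold $\delta$ and define the Boolean table $R[i,j]$, meaning that cell $(i,j)$ is reachable from $(1,1)$ in $M_\delta$. The recurrence is $R[1,1]=M_\delta[1,1]$ and
\[
R[i,j]=M_\delta[i,j]\wedge\bigl(R[i-1,j]\vee R[i,j-1]\vee R[i-1,j-1]\bigr),
\]
where out-of-range entries are treated as $0$. Each of the three predecessors of $(i,j)$ lies on anti-diagonal $i+j-1$ or $i+j-2$.

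We process anti-diagonal $s=i+j$ in round $s-1$. One processor per cell of that diagonal reads its at most three predecessors, computes $\|a_i-b_j\|\le\delta$ (constant work, since $d$ is constant), and writes $R[i,j]$. Concurrent reads suffice, and writes go to distinct cells, so even CREW is enough. Correctness follows by induction on $s$, because $R[n,n]=1$ exactly when $M_\delta[n,n]$ is reachable. The total work is $O(1)$ per cell, i.e.\ $O(n^2)$.

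\textbf{Optimization without extra rounds.} Binary search over $\delta$ would multiply both the round count and the work by $\log n$. To avoid this, we run the bottleneck version of the same DP directly. Define
\[
D[i,j]=\max\Bigl(\|a_i-b_j\|,\ \min\bigl(D[i-1,j],\,D[i,j-1],\,D[i-1,j-1]\bigr)\Bigr),
\]
with $D[1,1]=\|a_1-b_1\|$ and out-of-range entries set to $+\infty$.

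This is exactly the Eiter--Mannila recurrence, and $D[n,n]=\dfd(A,B)$. By induction over monotone paths, $D[i,j]$ equals the minimum, over monotone paths from $(1,1)$ to $(i,j)$, of the maximum cell cost along the path. Each cell again needs $O(1)$ operations that depend only on the previous diagonals. So the same anti-diagonal schedule computes $D[n,n]$ in $2n-1$ rounds with $O(n^2)$ total work.

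\textbf{Main obstacle.} There is no real obstacle here; the only point needing care is the model accounting. In the paper's DAG view, the per-cell computation ($d$ coordinate differences, squaring and summing, a min of three values and a max of two) has constant depth only because $d$ is constant. With $d$ treated as a variable, the round count would pick up a factor of $O(\log d)$ or $O(d)$. Since the paper assumes $d$ is constant, the bound stands as $2n-1$ rounds, up to the constant-factor depth per anti-diagonal that the observation's round count implicitly absorbs.
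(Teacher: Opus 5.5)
Your proposal is correct and uses the same anti-diagonal wavefront as the paper, which processes the $2n-1$ diagonals $x+y=i+1$ one per round with at most $n$ processors. Your explicit use of the Eiter--Mannila min--max recurrence is a useful addition: the paper's proof only sweeps the Boolean reachability values of $M_\delta$ for a fixed $\delta$, which gives just the decision version, while your bottleneck DP yields the DFD value itself within the stated $2n-1$ rounds and $O(n^2)$ work (and if you precompute all $\|a_i-b_j\|$ before the sweep, the dependence on $d$ becomes additive rather than multiplicative).
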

\begin{proof}
    Consider $2n-1$ ``diagonals'' of the matrix $M_\delta$, as depicted in the figure below: In the $i$-th round, $1\le i\le 2n-1$, at most $n$ processors will compute the reachability values of the cells in the $i$-th diagonal (the cells $(x,y)$ such that $x+y=i+1$).
    \begin{figure}[h]
        \centering
        \includegraphics[page=3, scale=1]{figures/DFD.pdf}
    \end{figure}
\end{proof}
Alternatively, another simple solution is to use the best known bound for directed reachability in the PRAM model:

\begin{observation}
\label{obs:trivial_alg_polylog_time}
    Given two curves of length $n$, their DFD can be computed in the PRAM model in $O(\log{n})$ rounds, using $O(n^6\log{n})$ total work.
\end{observation}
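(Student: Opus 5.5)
The plan is to reduce the decision problem to directed reachability in a graph built from the free-space matrix, and then solve that reachability in logarithmic depth. Fix a threshold $\delta$. In one round, with $n^2$ processors, I compute every distance $\|a_i-b_j\|$ and hence the binary matrix $M_\delta$; this costs $O(n^2)$ work since $d$ is constant. Next I build the directed graph $G_\delta$ on the $n^2$ cells. It has an edge from $(i,j)$ to each of $(i+1,j)$, $(i,j+1)$ and $(i+1,j+1)$ whenever both endpoints are $1$-entries. Each cell determines its at most three out-edges independently, so this takes $O(1)$ rounds and $O(n^2)$ work.

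As noted in the preliminaries, $\dfd(A,B)\le\delta$ holds if and only if $(n,n)$ is reachable from $(1,1)$ in $G_\delta$. I then invoke the standard PRAM transitive-closure algorithm cited from \cite{J92}. For a graph on $|V|$ vertices it decides all-pairs reachability in $O(\log|V|)$ rounds with $O(|V|^3\log|V|)$ work, via repeated Boolean matrix squaring of $(I+\mathrm{Adj})$. Each squaring has depth $O(1)$ on a common CRCW PRAM: every output entry is an OR of $|V|$ bits, which processors compute by writing $1$ concurrently. There are $O(\log |V|)$ squarings, each with $O(|V|^3)$ work. With $|V|=n^2$ this gives $O(\log n)$ rounds and $O(n^6\log n)$ work for one decision.

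It remains to optimize over $\delta$. The DFD equals one of the $n^2$ values $\|a_i-b_j\|$. I would run the decision procedure in parallel for every candidate value: each run costs $O(n^6\log n)$ work, so all $n^2$ runs together cost $O(n^8\log n)$, which exceeds the claimed bound. To stay within $O(n^6\log n)$ work, the search must run only a constant number of decisions per round.

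\textbf{Option 1: sort and binary search.} Sort the $n^2$ candidate values in $O(\log n)$ rounds using Cole's merge sort with $O(n^2\log n)$ work. Then binary search, giving $O(\log n)$ sequential decisions of $O(\log n)$ rounds each. That yields $O(\log^2 n)$ rounds. This fits a $\tilde O$-round reading of the statement, but not literally $O(\log n)$.

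\textbf{Option 2 (the one I would pursue): avoid the search entirely.} Replace Boolean transitive closure by bottleneck (min–max) path closure on the same graph. Give each edge, and each vertex, the weight $\|a_i-b_j\|$ of its cell. Repeatedly square the weight matrix over the $(\min,\max)$ semiring. After $O(\log n)$ squarings, the entry for $((1,1),(n,n))$ equals $\min_\pi\max_{(i,j)\in\pi}\|a_i-b_j\|=\dfd(A,B)$. Each squaring takes the minimum over $n^2$ values for each of the $n^4$ entries.

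I expect the \textbf{main obstacle} to be computing this minimum within the work budget. Computing a minimum in $O(1)$ rounds on common CRCW requires either quadratically many processors or $O(\log\log n)$ depth with linear work. A plain tree reduction instead costs $O(\log n)$ depth per squaring. I would therefore use the doubly-logarithmic minimum algorithm, with $O(n^2)$ work per entry, accepting $O(\log n\log\log n)$ rounds. Alternatively, I would replace the weights by their ranks among the $n^2$ sorted distance values and encode each min via OR-threshold bits, which adds only an $O(\log n)$ work factor. Making this second route run in $O(1)$ depth per squaring is the step I would check carefully; it would give $O(\log n)$ rounds and $\tilde O(n^6)$ work.
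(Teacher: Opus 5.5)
Your decision procedure is exactly the paper's argument: build $G_\delta$ on the $n^2$ cells, then take the transitive closure by $O(\log n)$ Boolean squarings, each an OR computed in $O(1)$ rounds by concurrent writes. The paper's proof in fact stops there. It decides reachability for one fixed $\delta$ and never says how the value of the DFD is obtained. You are right that this step is not free under the stated bounds:
running all $n^2$ thresholds costs $O(n^8\log n)$ work; binary search costs $O(\log^2 n)$ rounds; the multi-pivot search the paper uses for its exact algorithm would cost $O(\log n/\sigma)$ rounds and $\tilde{O}(n^{6+\sigma})$ work.

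As written, though, your proposal does not close this gap either. The two routes you actually analyze give $O(\log^2 n)$ and $O(\log n\log\log n)$ rounds. The route that could meet the bound is explicitly left unchecked. The ``OR-threshold bits'' idea, if it means fixing the minimum one bit at a time, would cost $O(\log n)$ depth per squaring.

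The missing ingredient is a fast minimum over small integers. Replace the distances by their ranks in $[1,N]$ with $N=n^2$ (one sort: $O(\log n)$ rounds, $O(n^2\log n)$ work). A minimum of $N$ integers from $[1,N]$ can then be computed on a common CRCW PRAM in $O(1)$ rounds with $O(N)$ work, as follows.
For an entry $(u,v)$ of the $(\min,\max)$-square, processor $w$ writes $1$ into position $\max(W[u,w],W[w,v])$ of a zeroed bit array of length $N$. Split this array into $\sqrt N$ blocks of length $\sqrt N$ and compute the OR of each block by concurrent writes. Find the first nonempty block by letting each of the fewer than $N$ pairs $b'<b$ with block $b'$ nonempty mark $b$ as not first. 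Then find the first $1$ inside that block in the same way.

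Each squaring then takes $O(1)$ rounds and $O(N^3)=O(n^6)$ work, clearing the arrays included. Since a monotone path has fewer than $2n$ edges, $\lceil\log_2(2n)\rceil$ squarings suffice. This gives $O(\log n)$ rounds and $O(n^6\log n)$ work for the exact DFD. Completed this way, your Option 2 proves the observation as literally stated, which is more than the paper's own proof establishes.
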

\begin{proof}
    First, the processors check that $M_\delta[1,1] = M_\delta[n,n] = 1$, and output zero otherwise. Using a very simple algorithm for transitive closure (see \cite{J92}), 
    given a directed graph $G = (V,E)$ and two vertices $u,v \in V$, it is possible to determine in PRAM whether $u$ can reach $v$ in $O(\log{|V|})$ rounds, using $O(|V|^3)$ processors and $O(|V|^3\log{|V|})$ total work. Running this procedure on the directed graph corresponding to $M_\delta$, we can determine whether $M_\delta[1,1]$ can reach $M_\delta[n,n]$ in the graph described above using the same number of rounds, processors, and total work, where $|V| = O(n^2)$. 
\end{proof}

 Finally, we note that we can trivially compute the DFD in $O(1)$ rounds, but with $2^{O(n)}$ total work.

\begin{observation}
\label{obs:trivial_alg_const_time}
    Given two curves of length $n$, their DFD can be computed in the PRAM model in $O(1)$ rounds, using $2^{O(n)}$ total work. 
\end{observation}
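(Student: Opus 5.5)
The plan is to prove \Cref{obs:trivial_alg_const_time} by brute force over all walks, followed by a brute-force minimum over all candidate values. Every walk $\pi$ on $A$ and $B$ is a monotone staircase path of length at most $2n-1$ in $M$. Such a path is determined by a word over the three moves $\{(1,0),(0,1),(1,1)\}$ of length at most $2n-2$, so there are at most $3^{2n-2}=2^{O(n)}$ walks. We enumerate all such words; invalid words (those leaving the grid or not ending at $(n,n)$) are discarded. Each word gets a block of $O(n)$ processors, indexed by the word itself, so no communication is needed to decide which processor handles which candidate.

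For a fixed word $w=w_1\cdots w_t$, we first compute the cell visited after each prefix, i.e., the prefix sums of the moves. In $O(1)$ rounds of common CRCW we cannot do general prefix sums cheaply. Instead, for each prefix length $k$ and each candidate cell $(i,j)$ we assign a processor that checks whether the numbers of horizontal, vertical and diagonal moves among $w_1,\dots,w_k$ give that cell. Counting occurrences in $O(1)$ rounds also needs brute force. To avoid this, we index words by their final move counts and by precomputed positions; since we may spend $2^{O(n)}$ work, we can simply precompute, for every word and every $k$, the cell $(i_k,j_k)$ as part of a table that is hard-coded by the index. A cleaner route is to enumerate directly all sequences of $t\le 2n-1$ cells (there are at most $(n^2)^{2n}=2^{O(n\log n)}$ of them). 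I would avoid this because it exceeds $2^{O(n)}$, so instead I enumerate words and treat the cell positions $(i_k,j_k)$ as functions of the index that the processor reads off directly.

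The validity of a walk is then checked by $O(n)$ processors, each checking one step in $O(1)$. Then $\mathrm{cost}(\pi)=\max_k\|a_{i_k}-b_{j_k}\|$ is computed using the standard common-CRCW $O(1)$-round maximum of $m$ values with $m^2$ processors (pairwise comparisons, then a single AND-by-write). This costs $O(n^2)$ per walk. Finally, $\dfd$ is the minimum of these costs over $2^{O(n)}$ walks, computed with the same quadratic-processor trick using $(2^{O(n)})^2=2^{O(n)}$ work, in $O(1)$ rounds.

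The main obstacle is the prefix-position step in $O(1)$ rounds. I would handle it by observing that the cell after prefix $k$ depends only on the word and on $k$, and that counting the occurrences of each move type in a prefix of length $\le 2n$ is a function of $O(n)$ input bits of the index. Any such function is computable in $O(1)$ CRCW rounds with $2^{O(n)}$ processors, by a lookup or DNF: one processor per possible assignment writes the answer if it matches. This keeps the total work $2^{O(n)}$ and the round count constant.
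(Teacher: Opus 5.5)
Your proposal is correct and is essentially the paper's argument: brute force over all $2^{O(n)}$ monotone walks with polynomially many processors per walk, aggregated in $O(1)$ rounds via concurrent writes. The one real difference is that you minimize the bottleneck cost directly (a max per walk, then a min over walks with the quadratic-processor trick), whereas the paper, for a fixed $\delta$, checks whether some path in $M_\delta$ consists only of 1-entries, ORs the results, and leaves the passage from the decision version to the actual value implicit. Your version is therefore, if anything, slightly more complete.

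Your ``main obstacle'' is not a real one in the paper's DAG view of PRAM. There the cells $(i_k,j_k)$ of each enumerated walk are simply hard-wired into the nodes, as the paper tacitly does, so the DNF/lookup detour is unnecessary. In a uniform setting it would not help either, since the processor responsible for an assignment $u$ would already have to know the answer for $u$.
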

\begin{proof}
    The processors simultaneously check all possible directed paths between $M_\delta[1,1]$ and $M_\delta[n,n]$. There are $2^{O(n)}$ such paths. Each path can be locally verified by a set of at most $2n$ processors in one round, by checking if all cells of that path are 1-entries, and storing this information in a memory cell designated for that path. Next, we check whether at least one path exists and store this in the output cell.
\end{proof}

%% file: sec_computingDFD.tex
\section{Computing the discrete \frechet\ distance in parallel}\label{sec:discrete-frechet}
In this section we present parallel algorithms for computing the discrete \frechet\ distance between two polygonal curves $A$ and $B$, each consisting of $n$ vertices in $\reals^d$. 
We first consider the decision version of the problem, where given a threshold $\delta$ the algorithm should decide whether $\dfd(A,B)\le\delta$. 

Given threshold value $\delta$ and two polygonal curves $A$ and $B$, each with $n$ vertices in $\reals^d$, consider the corresponding binary matrix $M_\delta$. 
The main idea is to start by dividing the matrix $M_\delta$ into thin ``strips'' (sequences of consecutive columns), and then in each round merge pairs of consecutive strips. The size of a single strip can be $\Theta(n^2)$, however, we observe that for the merge procedure it is enough to maintain the reachability information only for the ``boundary'' columns of the strip (i.e. the leftmost and rightmost columns). We then show that the reachability information can be maintained by storing only a constant number of values per boundary cell.

\begin{figure}[h]
    \centering
    \includegraphics[page=4, scale=1]{figures/DFD.pdf}
    \caption{Left: the reachability information for the boundary cells in row $i$ of the strip $S[j_1,j_2]$. Right: merging two strips.}
    \label{fig:crossing_paths}
\end{figure}

A \dfn{strip} $S[j_1,j_2]$ of $M_\delta$ is the submatrix of $M_\delta$ consisting of the columns $j_1,j_1+1,\dots,j_2$ of $M_\delta$.
Given a strip $S[j_1,j_2]$, suppose that we have already computed for every $i\in [n]$ the following values (see \Cref{fig:crossing_paths}):
\begin{itemize}
    \item The smallest index $\overrightarrow{s}=s^i_{j_1\to j_2}$ (resp. largest index $\overrightarrow{l}=l^i_{j_1\to j_2}$) such that there is a monotone path from $M_\delta[j_1,i]$ to $M_\delta[j_2,\overrightarrow{s}]$ (resp. $M_\delta[j_2,\overrightarrow{l}]$).
    \item The smallest index $\overleftarrow{s}=s^i_{j_1\leftarrow j_2}$ (resp. largest index $\overleftarrow{l}=l^i_{j_1\leftarrow j_2}$) such that there is a monotone path from $M_\delta[j_1,\overleftarrow{s}]$ (resp. $M_\delta[j_1,\overleftarrow{l}]$) to $M_\delta[j_2,i]$.
\end{itemize}
We call this set of values the \dfn{reachability information} of $S[j_1,j_2]$. 

To merge the reachability information from two strips, we need a few facts and observations. The first is a known result for showing how to compute for each cell $i$ in a binary array $T$ the next index $j \geq i$ such that $T[j] = 1$. \orr{and the second?}

\begin{lemma}[Find-Smaller problem \cite{BV94}]\label{obs:computing_t}
    Let $T$ be an array of length $n$ such that $T[i]\in \{0,1\}$ for every $i\in [n]$. In the PRAM model, we can compute in $O(1)$ rounds and using $\Tilde{O}(n)$ total work, an array $T'$ such that $T'[i]$ is the smallest index $j\ge i$ for which $T[j]=1$.
\end{lemma}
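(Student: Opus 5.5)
The plan is to reduce the Find-Smaller problem to computing, for each position $i$, a minimum over a suffix, using a constant-depth tree of prefix-minima on blocks. Define $U[j]=j$ if $T[j]=1$ and $U[j]=\infty$ otherwise, so that $T'[i]=\min_{j\ge i}U[j]$. This is a suffix-minimum computation. The general suffix-minimum problem needs $\Omega(\log n/\log\log n)$ rounds on CRCW, so I would not reduce to it. Instead I would exploit the fact that the values are indices together with the common-CRCW write rule.

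\textbf{Step 1 (quadratic brute force on small blocks).} Consider a block of size $m$. Every candidate pair $(i,j)$ with $j\ge i$ and $T[j]=1$ gets a processor. For each $i$, we find the smallest such $j$ in $O(1)$ rounds with $O(m^2)$ work, using the standard constant-time CRCW minimum. For all pairs $j<j'$ that are both candidates for $i$, the processor for $(i,j)$ writes ``not minimal'' into a cell for $j'$; every processor writing there writes the same value. The unique surviving $j$ writes itself into $T'[i]$. The cost is $O(m^3)$ work per block, or $O(m^2)$ if we use the doubly-indexed variant.

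\textbf{Step 2 (recursion on blocks of size $n^{\sigma}$).} Split $T$ into blocks of size $b=n^{\epsilon}$. Inside each block, Step 1 computes the local answer, with $\infty$ if the block has no $1$ at or after $i$, at cost $O(b^2)$ per block and $O(n^{1+\epsilon})$ total. Then form the summary array $T_1$ of length $n/b$, where $T_1[k]=1$ iff block $k$ contains a $1$. Each block computes this flag as an OR in $O(1)$ rounds, and also records its first $1$-index.

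Repeat the blocking on $T_1$ for $1/\epsilon$ levels. At the top level there is a single block of size $n^{\epsilon}$, which Step 1 handles directly. Then propagate down. For position $i$ with local answer $\infty$, the answer is the first $1$-index of the next nonempty block after $i$'s block, which the level above supplies. Each level takes $O(1)$ rounds, for $O(1/\epsilon)$ rounds in total and $O(n^{1+\epsilon})$ work. Choosing $\epsilon$ as a small constant gives $O(1)$ rounds.

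The stated $\Tilde{O}(n)$ work, which is better than $n^{1+\epsilon}$, is the main obstacle. To get it, I would invoke the Berkman--Vishkin machinery directly rather than rederive it. Their result for the ``all nearest ones / find-smaller'' problem on binary (bounded-range) inputs uses a constant number of levels with doubly-exponentially shrinking block sizes, combined with precomputed lookup tables for tiny blocks of size $O(\log n)$. That scheme achieves $O(1)$ rounds for this special case because the key universe is the small alphabet $\{0,1\}$. Concretely, I would first handle blocks of size $\log n/2$ by table lookup on their bit patterns. Next, I would reduce to an instance of size $n/\log n$. Finally, I would apply the $n^{1+\epsilon}$ scheme above to that smaller instance. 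Since $(n/\log n)^{1+\epsilon}\le \Tilde{O}(n)$ only when $\epsilon$ is tied to $\log\log n/\log n$, some care is needed here, and this is exactly where the paper cites \cite{BV94} rather than proving the bound. I would state that the lemma follows from their theorem with the reduction to $U$ as above.
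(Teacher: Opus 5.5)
Relative to the paper there is no gap. The paper gives no argument for this lemma beyond the citation to \cite{BV94}, and your $\Tilde{O}(n)$ bound rests on exactly that citation. Everything you argue yourself, however, falls short of the statement. Steps~1--2 give only $O(1/\epsilon)$ rounds with $n^{1+O(\epsilon)}$ work. The patch you sketch (lookup tables on blocks of size $\tfrac12\log n$, then the $n^{1+\epsilon}$ scheme on the summary array) does not reach $\Tilde{O}(n)$ work in $O(1)$ rounds. As you note, it forces $\epsilon\approx\log\log n/\log n$, and then the $1/\epsilon$ sequential levels cost $\Theta(\log n/\log\log n)$ rounds. Your account of how \cite{BV94} works is speculative and not needed.

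The obstacle is not real, because $\Tilde{O}(n)$ allows polylogarithmic overhead. Instead of $1/\epsilon$ levels of polynomial block size handled one after another, handle all $\log n$ dyadic levels simultaneously.
\begin{enumerate}
\item For every aligned dyadic block, compute its OR. This takes $O(n)$ work per level and $O(1)$ rounds on common CRCW.
\item For every dyadic block $B$, compute its leftmost $1$. A position $p$ with $T[p]=1$ declares itself the leftmost $1$ of $B$ iff the at most $\log n$ dyadic blocks partitioning the part of $B$ before $p$ all have OR $0$. This costs $O(n\log n)$ work per level.
\item The suffix $[i,n]$ is the disjoint union, in left-to-right order, of $\{i\}$ and the right siblings of those ancestors of $i$ that are left children. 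That is at most $\log n+1$ dyadic blocks.
\item Find the first nonempty block among them by pairwise marking, using $O(\log^2 n)$ work per $i$, and output its precomputed leftmost $1$.
\end{enumerate}
This takes $O(1)$ rounds and $O(n\log^2 n)$ work, which is the lemma as stated, with no black box.

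Separately, your aside that general suffix-minimum needs $\Omega(\log n/\log\log n)$ rounds is incorrect. That bound is for parity and prefix sums. With polynomially many processors, suffix minima take $O(1)$ rounds via the same blocking as your Step~2. Nothing in your argument depends on this.
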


\begin{observation}[crossing paths]\label{obs:crossing_paths}
    For any four indices $i'< i\le k< k'$, if there is a monotone path from $M_\delta[j_1,i]$ to $M_\delta[j_2,k]$ and a monotone path from $M_\delta[j_1,i']$ to $M_\delta[j_2,k']$, then there are also monotone paths from $M_\delta[j_1,i]$ to $M_\delta[j_2,k']$ and from $M_\delta[j_1,i']$ to $M_\delta[j_2,k]$.
\end{observation}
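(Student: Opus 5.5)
The plan is a planar-topology argument: two monotone paths that cross in the strip must share a cell, and swapping tails at that cell gives the two new paths. Let $\pi_1$ be a monotone path from $M_\delta[j_1,i]$ to $M_\delta[j_2,k]$, and $\pi_2$ one from $M_\delta[j_1,i']$ to $M_\delta[j_2,k']$. Both use only 1-entries and stay inside the strip $S[j_1,j_2]$. Once we find a common cell $c$ of $\pi_1$ and $\pi_2$, we concatenate the prefix of $\pi_1$ up to $c$ with the suffix of $\pi_2$ after $c$. This is a monotone path of 1-entries from $M_\delta[j_1,i]$ to $M_\delta[j_2,k']$. Symmetrically, the prefix of $\pi_2$ joined with the suffix of $\pi_1$ goes from $M_\delta[j_1,i']$ to $M_\delta[j_2,k]$. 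Monotonicity of the concatenations is immediate, since each step comes from one of the original paths.

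The main step is showing that a common cell exists. A monotone path visits every column between $j_1$ and $j_2$. For each column $x\in[j_1,j_2]$, let $\pi(x)$ denote the set of row indices the path occupies in column $x$; this is a contiguous interval $[\mathrm{lo}_\pi(x),\mathrm{hi}_\pi(x)]$. Monotonicity gives $\mathrm{lo}_\pi(x+1)\in\{\mathrm{hi}_\pi(x),\mathrm{hi}_\pi(x)+1\}$, because the step into a new column is a right step or a diagonal step.

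At column $j_1$ we have $i'<i$, so $\pi_2$ starts strictly below $\pi_1$. At column $j_2$ we have $k<k'$, so $\pi_2$ ends strictly above $\pi_1$. Suppose the paths share no cell. Then in every column their intervals are disjoint, so one path is entirely below the other there. Let $x$ be the last column in which $\pi_2$ is below $\pi_1$, i.e. $\mathrm{hi}_{\pi_2}(x)<\mathrm{lo}_{\pi_1}(x)$. Then in column $x+1$ the path $\pi_2$ is above $\pi_1$.

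Now compare the two paths across this column boundary. For $\pi_2$,
\[
\mathrm{lo}_{\pi_2}(x+1)\le \mathrm{hi}_{\pi_2}(x)+1\le \mathrm{lo}_{\pi_1}(x).
\]
For $\pi_1$,
\[
\mathrm{hi}_{\pi_1}(x+1)\ge \mathrm{lo}_{\pi_1}(x+1)\ge \mathrm{hi}_{\pi_1}(x)\ge \mathrm{lo}_{\pi_1}(x).
\]
Being above in column $x+1$ means $\mathrm{lo}_{\pi_2}(x+1)>\mathrm{hi}_{\pi_1}(x+1)$. Chaining the inequalities gives $\mathrm{lo}_{\pi_1}(x)>\mathrm{lo}_{\pi_1}(x)$, a contradiction. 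Hence a common cell exists.

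The step I expect to need the most care is this discrete crossing step. In the continuous setting one would just invoke the intermediate value theorem, but here diagonal moves could a priori let two paths jump past each other without sharing a cell. The chain of inequalities above rules that out. If the bookkeeping becomes messy, I will fall back to an alternative formulation: the ``lower envelope'' taken columnwise, $\min$ of the two paths, is itself a monotone path of 1-entries, and it yields the same conclusion.
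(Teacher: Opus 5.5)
Your proof is correct and follows the same route as the paper: find a cell shared by the two monotone paths and exchange their tails there. The paper only asserts that this shared cell exists ``because they are monotone''; your column-interval inequality chain proves that step rigorously, and in particular shows that diagonal steps cannot let the paths cross without sharing a cell.
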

\begin{proof}
    Because they are monotone, the paths from $M_\delta[j_1,i]$ to $M_\delta[j_2,k]$ and from $M_\delta[j_1,i']$ to $M_\delta[j_2,k']$ must intersect at some entry $M_\delta[j,x]$ for $j_1\le j\le j_2$ and $i\le x\le k$.
    Then, there are monotone paths: from $M_\delta[j_1,i]$ to $M_\delta[j,x]$ and then to $M_\delta[j_2,k']$, and from $M_\delta[j_1,i']$ to $M_\delta[j,x]$ and then to $M_\delta[j_2,k]$.
\end{proof}

\begin{claim}
\label{clm:merge_strips}
    Given two strips $S[j_1,j_2]$ and $S[j_2,j_3]$ (with one column in common), and the reachability information of both $S[j_1,j_2]$ and $S[j_2,j_3]$, the reachability information of $S[j_1,j_3]$ can be computed in the PRAM model using $\Tilde{O}(n)$ total work in $O(1)$ rounds.
\end{claim}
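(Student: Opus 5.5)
We need $s^i_{j_1\to j_3}$, $l^i_{j_1\to j_3}$ and the two backward quantities for every row $i$. By symmetry it suffices to treat the forward pair; the backward values follow from the same argument with the roles of the two boundary columns swapped.

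Fix a source row $i$ in column $j_1$. Any monotone path from $M_\delta[j_1,i]$ to column $j_3$ crosses column $j_2$ at some cell. The set of cells $x$ in column $j_2$ reachable from $(j_1,i)$ need not be an interval, but the extremes are exactly $s=s^i_{j_1\to j_2}$ and $l=l^i_{j_1\to j_2}$. The plan is to show the following: the smallest row reachable in column $j_3$ is
\[
s^i_{j_1\to j_3}=\min\{\, s^x_{j_2\to j_3} : x\in[s,l],\ x \text{ reachable from } (j_1,i)\,\},
\]
and symmetrically for $l$ with a maximum. Since reachable $x$ are the only intermediate points, this identity is immediate. The difficulty is computing it without knowing which $x\in[s,l]$ are reachable.

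The key step is to use Observation~\ref{obs:crossing_paths} to replace reachability by a simpler condition. The idea is to show that for every $x\in[s,l]$ that is itself reachable from some cell of column $j_1$ (that is, $s^x_{j_1\leftarrow j_2}$ is defined and $M_\delta[j_2,x]=1$), a path from $(j_1,i)$ to $x$ does exist provided $s^x_{j_1\leftarrow j_2}\le i\le l^x_{j_1\leftarrow j_2}$. Crossing paths give one direction: if some source $i'$ below $i$ reaches $x$ and $i$ reaches $l>x$, then $i$ reaches $x$. This needs the ordered four-tuple argument, with a case split on the positions of $i$ versus the source row and $x$ versus $s,l$.

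Even with that simplification, the minimum over such $x$ is a 2D-range-type query, too costly for $O(1)$ rounds with $\tilde O(n)$ work. I would therefore aim for a stronger monotonicity statement, namely that $s^i_{j_1\to j_3}=s^{x^*}_{j_2\to j_3}$ where $x^*$ is simply $s$, the lowest reachable cell. To justify it, take any path from $i$ through a reachable $x\ge s$ ending at row $y$, and the path from $i$ to $s$ continued to $s^s_{j_2\to j_3}$. If $y<s^s_{j_2\to j_3}$, apply Observation~\ref{obs:crossing_paths} on strip $S[j_2,j_3]$ to the paths $s\to s^s$ and $x\to y$ (with $s\le x$, $y<s^s$). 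These must cross, so $s$ reaches $y$, contradicting the minimality of $s^s$. The degenerate case $x=s$ is trivial.

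Symmetrically, $l^i_{j_1\to j_3}=l^{l}_{j_2\to j_3}$. Computing the merged values is then a pointer composition: for every $i$, read $s^i_{j_1\to j_2}$ and look up index $s^{(\cdot)}_{j_2\to j_3}$. That takes $O(1)$ rounds with $O(n)$ processors, treating undefined/empty values with sentinels. The backward values are handled the same way: $s^i_{j_1\leftarrow j_3}=s^{\,s^i_{j_2\leftarrow j_3}}_{j_1\leftarrow j_2}$.

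The main obstacle is the crossing argument when some intermediate cells in column $j_2$ are 0 or unreachable, and the cells of the shared column $j_2$ are counted in both strips. I expect the paper uses Lemma~\ref{obs:computing_t} exactly to handle these holes, by finding the next valid 1-entry. If the pure composition fails in some edge case, I would fall back to computing, via Lemma~\ref{obs:computing_t}, the nearest row $x\ge s$ in column $j_2$ with a defined forward value, and composing through that row.
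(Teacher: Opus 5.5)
Your central identity $s^i_{j_1\to j_3}=s^{\,s}_{j_2\to j_3}$, with $s=s^i_{j_1\to j_2}$, is false in general. Your crossing argument in $S[j_2,j_3]$ is fine, but it assumes that $M_\delta[j_2,s]$ can reach column $j_3$. The lowest reachable cell of column $j_2$ may be a dead end while a higher reachable one continues.

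For instance, take $n=3$, the strips $S[1,2]$ and $S[2,3]$, and $i=1$. Let the $1$-entries of $M_\delta$ be exactly $M_\delta[1,1],M_\delta[1,2],M_\delta[2,1],M_\delta[2,3],M_\delta[3,3]$, with the first index the column as in the strip notation. Then:
\begin{itemize}
\item $s^1_{1\to 2}=1$, and $l^1_{1\to 2}=3$ via $M_\delta[1,2]$ and a diagonal step;
\item $s^1_{2\to 3}$ is undefined, because all three successors of $M_\delta[2,1]$ are $0$;
\item yet $s^1_{1\to 3}=3$.
\end{itemize}
Your backward formula fails in the same way when $M_\delta[j_2,s^i_{j_2\leftarrow j_3}]$ is not reachable from column $j_1$. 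So these ``holes'' are not an edge case to patch afterwards; handling them is the actual content of the claim.

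Your fallback, composing through the nearest row $\ge s$ with a defined forward value, does not fix this as stated. It does not check that the row is at most $l$. More importantly, it does not check that the row is reachable from $M_\delta[j_1,i]$ at all. If $j_2-j_1\ge 2$, column $j_2$ can contain a $1$-entry strictly between $s$ and $l$ whose three predecessors are $0$. That cell is reachable from no cell of column $j_1$, yet it may continue to column $j_3$ at a row below everything reachable from $M_\delta[j_1,i]$. You would then output a value that is too small.

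The missing step is the one you set aside as a ``2D-range query''. By \Cref{obs:crossing_paths} on $S[j_1,j_2]$, every $k\in[s,l]$ for which $i'=s^k_{j_1\leftarrow j_2}$ exists is reachable from $M_\delta[j_1,i]$, with no condition on $i$:
\begin{itemize}
\item if $i'<i$, cross the path $i'\to k$ with the path $i\to s$;
\item if $i'>i$, cross it with the path $i\to l$.
\end{itemize}
The query therefore becomes one-dimensional. Let $T[k]=1$ iff both $s^k_{j_2\to j_3}$ and $s^k_{j_1\leftarrow j_2}$ exist, and compute $t(k)$ by \Cref{obs:computing_t}. Declare $s^i_{j_1\to j_3}$ undefined if $t(s)$ does not exist or $t(s)>l$. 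Otherwise set $s^i_{j_1\to j_3}=s^{t(s)}_{j_2\to j_3}$; this is justified by exactly your second-strip crossing argument, applied to $t(s)$ in place of $s$. The $l$-values use the mirrored ``previous $1$'' array below $l$, and the backward values are symmetric. This is the paper's proof.
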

\begin{proof}
    First, for every index $i\in [n]$ we compute the smallest index $t(i)\ge i$ such that both $s^{t(i)}_{j_2\to j_3}$ and $s^{t(i)}_{j_1\leftarrow j_2}$ exist. By \Cref{obs:computing_t}, this can be done using $\Tilde{O}(n)$ processors in $O(1)$ rounds, on an array $T$ such that $T[i]=1$ if both $s^{i}_{j_2\to j_3}$ and $s^{i}_{j_1\leftarrow j_2}$ exist, and $T[i]=0$ otherwise.
    Next, we show that for a fixed index $i\in [n]$, computing $s^i_{j_1\to j_3}$ can be done using a constant number of operations. The other three values, $l^i_{j_1\to j_3}$, $s^i_{j_1\leftarrow j_3}$, and $l^i_{j_1\leftarrow j_3}$ can be computed similarly.
    
    Denote $x=s^i_{j_1\to j_2}$ and $y=l^i_{j_1\to j_2}$. If $x,y$ do not exist, then clearly $s^i_{j_1\to j_3}$ does not exist. We therefore assume that $x,y$ exist (it is possible that $x=y$). Consider some $x\le k\le y$ such that $i'=s^{k}_{j_1\leftarrow j_2}$ exists, i.e. there is a monotone path from $M_\delta[j_1,i']$ to $M_\delta[j_2,k]$. By \Cref{obs:crossing_paths}, there must be a monotone path from $M_\delta[j_1,i]$ to $M_\delta[j_2,k]$.
    
    Now consider the value $t(x)$ that was computed previously  (see \Cref{fig:crossing_paths}). If $t(x)>y$, or $t(x)$ does not exist, then clearly there is no monotone path from $M_\delta[j_1,i]$ to any entry in the column $j_3$, because such a monotone path would have to include an entry $M_\delta[j_2, k]$ for some $x\le k\le y$. Otherwise, if $t(x)\le y$, there must be a monotone path from $M_\delta[j_1,i]$ to $M_\delta[j_3,s^{t(x)}_{j_2\to j_3}]$, and by the minimality of $t(x)$ we have $s^i_{j_1\to j_3}=s^{t(x)}_{j_2\to j_3}$.
\end{proof}

As there are $O(\log{n})$ rounds of merging consecutive strips, we obtain the following theorem.
\begin{theorem}
\label{thm:PRAM_main}
    Given two curves of length $n$, deciding whether the DFD is larger than some value $\delta$ can be performed in the PRAM model in $O(\log{n})$ rounds, with $\Tilde{O}(n^{2})$ total work.
\end{theorem}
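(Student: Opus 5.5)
The plan is a divide-and-conquer over column strips, repeatedly applying \Cref{clm:merge_strips}.

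\emph{Step 1 (base strips).} Set $\delta$ and compute $M_\delta$ explicitly. Each entry is one distance comparison, so this takes $O(1)$ rounds and $O(n^2)$ work (times the constant $d$). Pad $n$ up to a power of two by repeating the last vertex of each curve; this changes neither the matrix structure nor the answer. At level $0$ the strips are the two-column strips $S[j,j+1]$ for $j\in[n-1]$. For every row $i$, compute the four values $s^i_{j\to j+1}$, $l^i_{j\to j+1}$, $s^i_{j\leftarrow j+1}$, $l^i_{j\leftarrow j+1}$.

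Inside a two-column strip a monotone path goes up column $j$, crosses once (horizontally or diagonally), and then goes up column $j+1$. So the cells of column $j+1$ reachable from row $i$ are a contiguous run. It starts at the first row $k\ge i$ with $M_\delta[j+1,k]=1$ such that $M_\delta[j,k]=1$ or $M_\delta[j,k-1]=1$, and the column-$j$ run from $i$ reaches $k$ or $k-1$. It then continues while column $j+1$ stays $1$. Each of these quantities (the end of the run of 1's starting at a row, and the next index where an indicator is $1$) is a Find-Smaller query, so \Cref{obs:computing_t} gives all base values in $O(1)$ rounds and $\Tilde{O}(n)$ work per strip. That is $\Tilde{O}(n^2)$ work in total.

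\emph{Step 2 (merging).} At level $t$ we have the reachability information of the $n/2^t$ strips $S[(m-1)2^t+1,\,m2^t+1]$. Pair consecutive strips, which share one column, and apply \Cref{clm:merge_strips} to all pairs in parallel. Each merge costs $O(1)$ rounds and $\Tilde{O}(n)$ work, so level $t$ costs $\Tilde{O}(n\cdot n/2^t)$ work. Summed over the $\log n$ levels this is a geometric series bounded by $\Tilde{O}(n^2)$, over $O(\log n)$ rounds.

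\emph{Step 3 (answer).} After the last level we hold the information for $S[1,n]$. We then output $\dfd\le\delta$ if and only if $M_\delta[1,1]=1$ and $s^1_{1\to n}\le n\le l^1_{1\to n}$, that is, if $l^1_{1\to n}=n$. Correctness of this test uses contiguity (\Cref{obs:crossing_paths}) together with the fact that the target is row $n$, the largest row. In fact reaching row $n$ is exactly the condition $l^1_{1\to n}=n$, so no contiguity is needed for this final test.

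\emph{Main obstacle.} The delicate step is the base case. We must argue that the base values are well defined and that \Cref{clm:merge_strips} really applies inductively. This requires the invariant that the information for a strip is exactly the $s$ and $l$ values. It also requires that the claim's proof, which uses only \Cref{obs:crossing_paths} and $t(x)$, stays valid across a shared column. I will verify this by noting that monotone paths through the shared column $j_2$ split into a path in the left strip and a path in the right strip.
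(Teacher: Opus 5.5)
Your overall plan is the paper's: its proof of this theorem is just the remark that \Cref{clm:merge_strips} is applied over $O(\log n)$ levels of pairwise merges. Your Steps 2 and 3 (the geometric work bound and the final test $l^1_{1\to n}=n$) are fine. The genuine problem is in Step 1, the base case, which the paper leaves implicit.

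\textbf{The contiguity claim is false.} You assert that the cells of column $j+1$ reachable from $M_\delta[j,i]$ form a contiguous run. Let column $j$ have $1$'s in rows $i,\dots,i+3$, and let column $j+1$ have $1$'s in rows $i$ and $i+3$ but $0$'s in rows $i+1,i+2$. Then $M_\delta[j+1,i]$ is reachable by crossing horizontally at row $i$. So is $M_\delta[j+1,i+3]$, by going up column $j$ first. Nothing in between is reachable. Your recipe starts at the first admissible entry row and continues while column $j+1$ stays $1$, so it returns $l^i_{j\to j+1}=i$ instead of at least $i+3$. The analogous computation of $s^i_{j\leftarrow j+1}$ fails symmetrically.

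\textbf{The error makes the algorithm answer wrongly.} The merge in \Cref{clm:merge_strips} only inspects rows $k\in[x,y]$, so an underestimated $y$ loses paths. Take $n=4$ with column $1$ all $1$'s, column $2$ having $1$'s exactly in rows $1$ and $4$, and columns $3,4$ having a single $1$ in row $4$; this pattern is easily realized by planar curves. The path through $M_\delta[1,1],M_\delta[1,2],M_\delta[1,3],M_\delta[2,4],M_\delta[3,4],M_\delta[4,4]$ exists. Your base values, however, give $x=y=1$ for row $1$ of $S[1,2]$. Since $s^1_{2\to 3}$ does not exist, $t(1)=4>y$, and the algorithm reports that no path exists, under either merge order.

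\textbf{The repair.} It stays within $O(1)$ rounds and $\Tilde{O}(n)$ work per strip. Let $e_j(i)$ be the top of the run of $1$'s in column $j$ starting at row $i$. The admissible entry rows are the $q\in[i,e_j(i)+1]$ with $M_\delta[j+1,q]=1$.
\begin{itemize}
\item $s^i_{j\to j+1}$ is the smallest admissible $q$.
\item $l^i_{j\to j+1}=e_{j+1}(q^\ast)$, where $q^\ast$ is the largest admissible row. Finding $q^\ast$ is a Find-Smaller query on the reversed array.
\item The backward values are symmetric. Let $b_{j+1}(i)$ be the bottom of the column-$(j+1)$ run through $i$, and call a row $q\in[b_{j+1}(i)-1,i]$ with $M_\delta[j,q]=1$ an admissible exit row. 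Then $l^i_{j\leftarrow j+1}$ is the largest admissible exit row, and $s^i_{j\leftarrow j+1}$ is the bottom of the column-$j$ run through the smallest one.
\end{itemize}

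\textbf{What \Cref{obs:crossing_paths} gives.} It is an exchange property, not a contiguity property. Reachable sets in the far column are generally not intervals. That is exactly why the paper's merge tests, for each $k\in[x,y]$, whether $s^k_{j_1\leftarrow j_2}$ exists, rather than declaring all of $[x,y]$ reachable.

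\textbf{A minor indexing point.} $n$ columns yield $n-1$ base strips. Pad to $2^k+1$ columns, or pass an unpaired strip up a level, so that your indices $S[(m-1)2^t+1,m2^t+1]$ stay inside $[1,n]$.
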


In \Cref{apx:discrete-frechet}, we show that the algorithm from \Cref{thm:PRAM_main} immediately implies an algorithm for the MPC model with $O(n^\rho)$ space per machine, round complexity $O(\log{n}/\rho)$ and total space $\Tilde{O}(n^2)$ (see \Cref{thm:MPC_main}).
Then we show how to solve the optimization problem, proving \Cref{thm:exact-dfd}. Finally, we discuss the adaptation of this algorithm (and the algorithm of \cite{AG95}) to the continuous \frechet\ distance.

%% file: sec_computingDFD-apx.tex
\section{Computing the DFD in parallel - missing proofs}\label{apx:discrete-frechet}

\subsection{Decision in the MPC model}
In this section, we show that the algorithm from \Cref{thm:PRAM_main} immediately implies an algorithm for the MPC model with $O(n^\rho)$ space per machine, round complexity $O(\log{n}/\rho)$ and total space $\Tilde{O}(n^2)$.

\begin{theorem}
\label{thm:MPC_main}
    Given two curves of length $n$, the decision version of their DFD can be computed in the MPC model in $O(\log{n}/\rho)$ rounds, with local space $\Theta(n^{\rho})$ per machine and with total space $\widetilde{\Theta}(n^2)$, where $\rho > 0$ is any constant.
\end{theorem}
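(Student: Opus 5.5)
We simulate the strip-merging algorithm behind \Cref{thm:PRAM_main} directly in MPC, keeping its $O(\log n)$ merge levels and paying $O(1/\rho)$ MPC rounds for each level.

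\emph{Setup.} Split $M_\delta$ into $n$ strips $S[j,j+1]$, each made of two consecutive columns. The reachability information of one such strip can be computed from the two columns of $M_\delta$ alone. An entry $(j,i)$ reaches $(j+1,k)$ with $k\ge i$ iff three things hold: every cell of column $j$ between $i$ and some row $m$ is a $1$, $M_\delta[j+1,m]=1$, and every cell of column $j+1$ between $m$ and $k$ is a $1$ (the diagonal step is included as the case where the path moves up only in column $j+1$). So $s^i_{j\to j+1}$ is the first $m\ge i$ such that all cells $i,\dots,m$ of column $j$ are ones and $M_\delta[j+1,m]=1$. This is exactly $i$ if $M_\delta[j+1,i]=1$, and undefined otherwise, since any path from $(j,i)$ must step right at some row $m \ge i$ and then can only climb. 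The value $l^i_{j\to j+1}$ is the top of the maximal run of ones in column $j+1$ that contains the highest reachable entry. Both are prefix-type computations: ``next $0$'' and ``end of run of ones'' along a column. Using \Cref{lem:aggregate}, these can be obtained for all $n^2$ cells in $O(1/\rho)$ rounds with $\tilde O(n^2)$ total space, by sorting cells by (column, row) and computing segmented prefix min/max. The backward values are handled symmetrically. Storing $M_\delta$ itself needs only $O(n^2)$ words, computed from the input points in $O(1)$ rounds.

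\emph{Merge level.} At each of the $\lceil\log n\rceil$ levels, merge disjoint consecutive pairs of strips $S[j_1,j_2],S[j_2,j_3]$ as in the proof of \Cref{clm:merge_strips}. For every $i$, build the indicator $T[i]$ (both $s^i_{j_2\to j_3}$ and $s^i_{j_1\leftarrow j_2}$ exist) and compute $t(i)=$ the smallest $i'\ge i$ with $T[i']=1$. In MPC, \Cref{obs:computing_t} is replaced by a segmented suffix-minimum over each merged pair, again via \Cref{lem:aggregate} in $O(1/\rho)$ rounds. Then every row $i$ needs a constant number of look-ups. For example, it reads $x=s^i_{j_1\to j_2}$ and $y=l^i_{j_1\to j_2}$, then $t(x)$, then $s^{t(x)}_{j_2\to j_3}$. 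Each look-up is an indexed join, implemented by sorting requests together with the stored values by key (strip pair, row index) and propagating within groups. This costs $O(1)$ sorts, i.e.\ $O(1/\rho)$ rounds. At every level there are $O(n)$ values per strip boundary and at most $n$ strips, so the total space stays $\tilde O(n^2)$ and no machine ever exceeds $n^\rho$ words.

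\emph{Conclusion.} After $O(\log n)$ levels we hold the reachability information of $S[1,n]$. We answer by checking $M_\delta[1,1]=1$ and testing whether $n\in[s^1_{1\to n},l^1_{1\to n}]$; by \Cref{obs:crossing_paths} and monotonicity, every row between these two is reachable. The total round count is $O(\log n)\cdot O(1/\rho)$. The main obstacle is to make sure the pointer-chasing look-ups ($t(x)$ and then $s^{t(x)}$) do not create hot spots: many rows may query the same index, and one machine cannot answer them all. I plan to handle this with the standard sort-and-broadcast technique. Sort the queries together with the data records by key, so each data record lands immediately before its queries. Then use segmented copy-forward (an aggregation under \Cref{lem:aggregate}) to spread the value, which keeps the per-machine load at $n^\rho$ words.
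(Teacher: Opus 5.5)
Your overall route is the paper's. You keep the $O(\log n)$ merge levels of \Cref{thm:PRAM_main} and implement each merge of \Cref{clm:merge_strips} with $O(1)$ sorting/aggregation steps from \Cref{lem:aggregate}, so each level costs $O(1/\rho)$ rounds and total space stays $\tilde O(n^2)$. Your sort-and-broadcast treatment of the look-ups only fills in details the paper leaves implicit.

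The genuine error is your base case for the two-column strips $S[j,j+1]$. Your characterization of when $(j,i)$ reaches $(j+1,k)$ only admits paths that enter column $j+1$ by a horizontal step, so it ignores diagonal steps. Your conclusion that $s^i_{j\to j+1}$ is $i$ if $M_\delta[j+1,i]=1$ and undefined otherwise is also false; it even contradicts the formula you wrote one sentence earlier. As written, the decider gives wrong answers. Take $A=(0,10)$, $B=(0,10)$ on the line and $\delta=1$. The only $1$-entries are $M_\delta[1,1]$ and $M_\delta[2,2]$, and the walk $(1,1),(2,2)$ shows $\dfd(A,B)=0$. Yet your rule makes $s^1_{1\to 2}$ undefined, so the algorithm answers NO.

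The repair is routine and keeps your bounds. Suppose $M_\delta[j,i]=1$ and $r$ is the top row of the run of $1$'s in column $j$ containing row $i$. A path from $(j,i)$ can enter column $j+1$ exactly at the rows $e\in[i,\min(r+1,n)]$ with $M_\delta[j+1,e]=1$, either horizontally from $(j,e)$ or diagonally from $(j,e-1)$. Hence:
\begin{itemize}
\item $s^i_{j\to j+1}$ is the smallest $e\ge i$ with $M_\delta[j+1,e]=1$, provided $e\le r+1$;
\item $l^i_{j\to j+1}$ is the top of the run of $1$'s in column $j+1$ containing the largest $e\le r+1$ with $M_\delta[j+1,e]=1$, provided $e\ge i$.
\end{itemize}
Symmetrically, let $b$ be the bottom of the run of $1$'s in column $j+1$ containing $i$. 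The rows from which a path can leave column $j$ and reach $(j+1,i)$ are the $1$-entries of column $j$ in $[\max(b-1,1),i]$. This yields $l^i_{j\leftarrow j+1}$ and $s^i_{j\leftarrow j+1}$. All of these are still next-$1$ and run-endpoint computations, hence take $O(1/\rho)$ rounds via \Cref{lem:aggregate}.

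Finally, your remark that every row between $s^1_{1\to n}$ and $l^1_{1\to n}$ is reachable is false in general, since column $n$ may contain $0$'s in that range. You do not need it: $M_\delta[n,n]$ is reachable from $M_\delta[1,1]$ if and only if $l^1_{1\to n}=n$.
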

\begin{proof}
    Following a very similar argument of the algorithm of \Cref{thm:PRAM_main}, it suffices to show that given the reachability information of two strips $S[j_1,j_2]$ and $S[j_2,j_3]$, the reachability information of $S[j_1,j_3]$ can be computed in $O(1/\rho)$ rounds and $\Tilde{O}(n)$ total space. Given this sub-procedure, we perform $O(\log{n})$ iterations as in \Cref{thm:PRAM_main}, where in each such iteration we ``merge'' the reachability information of pairs of strips, and the claim follows.
    
    As in \Cref{clm:merge_strips}, computing $S[j_1,j_3]$ from $S[j_1,j_2]$ and $S[j_2,j_3]$ can be reduced to solving the problem of \Cref{obs:computing_t}: given an array $T$ of length $|T| = n$ such that $T[i]\in \{0,1\}$ for every $i\in [n]$, we need to output an array $T'$ such that $T'[i]$ is the smallest index $j\ge i$ for which $T[j]=1$. 
    This problem can be easily solved using sorting in $O(1/\rho)$ rounds: for each cell in $T$, we additionally store its index in $T$. Then, we use the sorting procedure of \Cref{lem:aggregate} to sort by cell value, and then by index. We thus obtain a consecutive sub-array of all the indices for which $T[i] = 1$, sorted by their original index in $T$. Using this sub-array, we can trivially construct the array $T'$ in $O(1/\rho)$ rounds.   
\end{proof}

\subsection{Optimization}

\thmExactDFD*
\begin{proof}
     In both models, we reduce to the decision version of the problem. In MPC, we compute and sort the $n^2$ distances between any two points of the two curves. Denote this set $D = \{r_1,\dots,r_{n^2}\}$, and note that the DFD value is contained in $D$. We then apply a multi-pivot search\footnote{A multi-pivot search is similar to a binary search, but instead of a single pivot we have many pivots, and recurse on the relevant segment.} to find the value of the DFD in the following manner: The decision algorithm of \Cref{thm:MPC_main} for MPC allows us to determine whether for a given value $\delta$ bounds from above the DFD of the two curves. We let $r'_1,\dots,r'_{n^\rho} \in D$ be the $n^\rho$-quantiles of $D$.\footnote{The $k$-quantiles of a sorted list of integers $A$ is a set of $k-1$ elements that partition the list into $k$ blocks of almost equal size.} We run the decision algorithm  $n^\rho$ times in parallel, where for each we set $\delta = r'_1,\dots,r'_{n^\rho}$ respectively. We then identify the relevant segment in $D$ which contains the exact value of the DFD, and we recursively run on that segment. We note that the depth of this recursion is $O(1/\rho)$, and at the end of the search we are guaranteed to find the value of the DFD in $D$. Similarly in PRAM, we compute $D$ (without sorting), and perform a multi-pivot search, where in each step we choose a set of random elements $r'_1,\dots,r'_{n^\sigma}$ from $D$, and run the decision algorithm $n^\sigma$ times in parallel, where for each we set $\delta = r'_1,\dots,r'_{n^\sigma}$ respectively. Then, using a standard procedure\footnote{For an $O(k)$-sized array it is easy to find in $O(1)$ rounds and $k^2$ processors the minimum and maximum elements. We can assume that $\sigma < 1/2$ w.l.o.g., with a cost of a constant in round complexity. Therefore, we can mark all pivots larger than the DFD, and find the minimum pivot larger than the DFD. Similarly, we can find the largest pivot smaller than the DFD. The DFD is contained in the segment between these two pivots.} we find in $O(1)$ rounds the two pivots between which the DFD is contained, and recurse, where the depth is $O(1/\sigma)$.
\end{proof}

\subsection{The continuous \frechet\ distance}\label{sec:continuous}
As mentioned in the introduction, Alt and Godau~\cite{AG95} presented an algorithm for computing the \frechet\ distance between closed curves, which uses a similar divide-and-conquer approach for the decision procedure: they split the free space diagram into smaller diagrams, on which they recursively compute the reachability information for the entire boundary. They do this by partitioning the boundary into intervals, and computing the lowest and highest point on the opposite side of the boundary which is reachable from each interval.

In the case of computing the classic \frechet\ distance, a slightly simpler approach can be taken by splitting the diagram into strips, as we do for the discrete case. Then, we only need to recursively compute the reachability information for the right and left boundary of each strip, by dividing the boundary into intervals as in~\cite{AG95}. The parallel algorithms in the PRAM and MPC model are then similar to the algorithms that we describe in this section for the discrete version.